\def\qmax{q_{\rm max}}
\def\2toro{\mathbb{T}^2}
\def\R{\mathbb{R}}
\def\N{\mathbb{N}}
\def\dnod{d_{\rm nod}}
\def\dmin{d_{\rm min}}
\def\deltanod{\delta_{\rm nod}}
\def\deltaint{\delta_{\rm int}}
\def\deltaext{\delta_{\rm ext}}
\def\deltalink{\delta_{\rm link}}
\def\lowintom{\ell^\omega_{\rm int}}
\def\lowextom{\ell^\omega_{\rm ext}}
\def\uppintom{u^\omega_{\rm int}}
\def\uppextom{u^\omega_{\rm ext}}
\def\upplinkom{u^\omega_{\rm link}}
\def\lowinte{\ell^{\:e}_{\rm int}}
\def\lowexte{\ell^{\:e}_{\rm ext}}
\def\uppinte{u^{\:e}_{\rm int}}
\def\uppexte{u^{\:e}_{\rm ext}}
\def\upplinke{u^{\:e}_{\rm link}}
\def\lowextomp{\ell^{\omega'}_{\rm ext}}
\def\uppintomp{u^{\omega'}_{\rm int}}
\def\uppextomp{u^{\omega'}_{\rm ext}}
\def\upplinkomp{u^{\omega'}_{\rm link}}
\def\rpiu{r_{\scaleto{+}{5pt}}}
\def\rmeno{r_{\scaleto{-}{5pt}}}
\def\rpiup{r_{\scaleto{+}{5pt}}'}
\def\rmenop{r_{\scaleto{-}{5pt}}'}
\newtheorem{proposition}{\bf Proposition}
\newtheorem{lemma}{\bf Lemma}
\newtheorem{remark}{\bf Remark}
\newtheorem{definition}{\bf Definition}
\title{On the nodal distance between two Keplerian trajectories with
  a common focus}
\begin{document}
\author{Giovanni F. Gronchi\footnote{Dipartimento di Matematica,
    Universit\`a di Pisa} \ and Laurent
  Niederman\footnote{D\'epartment de Math\'ematiques d'Orsay,
    Universit\'e Paris-Sud}}

\maketitle

\begin{abstract}
 We study the possible values of the nodal distance $\deltanod$
 between two non-coplanar Keplerian trajectories ${\cal A}, {\cal A}'$
 with a common focus.  In particular, given ${\cal A}'$ and assuming
 it is bounded, we compute optimal lower and upper bounds for
 $\deltanod$ as functions of a selected pair of orbital elements of
 ${\cal A}$, when the other elements vary. This work arises in the
 attempt to extend to the elliptic case the optimal estimates for the
 orbit distance given in \cite{GV2013} in case of a circular
 trajectory ${\cal A}'$.  These estimates are relevant to understand
 the observability of celestial bodies moving (approximately) along
 ${\cal A}$ when the observer trajectory is (close to) ${\cal A}'$.
\end{abstract}

\section{Introduction}

The computation of the distance $\dmin$ between two Keplerian
trajectories ${\cal A}$, ${\cal A}'$ with a common focus, also called
orbit distance, is relevant for different purposes in Celestial
Mechanics.  Several authors introduced efficient methods to compute
$\dmin$, e.g. \cite{sitarski1968}, \cite{KV1999b}, \cite{G2002},
\cite{G2005}.  Small values of $\dmin$ are relevant for the assessment
of the hazard of near-Earth asteroids with the Earth
\cite{milani2006}, \cite{FCMGC}, or for the detection of conjunctions
between satellites of the Earth \cite{hoots1984}, \cite{CTL2014}.  On
the other hand, we may wish to check whether $\dmin$ can assume large
values, because in this case it is more difficult to observe a small
celestial body moving along ${\cal A}$ from a point following ${\cal
  A}'$.

In \cite{GV2013} the authors studied the range of the values of the
orbit distance $\dmin$ between the trajectory ${\cal A}'$ of the
Earth, assumed to be circular, and the possible trajectory ${\cal A}$
of a near-Earth asteroid, as a function of selected pairs of orbital
elements.  The results have been used to detect some observational
biases in the known population of near-Earth asteroids (NEAs).  We
would like to extend these results to the case of an elliptic
trajectory ${\cal A}'$.  This generalization seems to be difficult
because $\dmin$ is implicitely defined, and because two local minima
of the distance between a point of ${\cal A}$ and a point of ${\cal
  A}'$ may exchange their role as the orbit distance, see
\cite{GV2013}.  Therefore, as a first step in this direction, we
investigate the range of the values of the nodal distance $\deltanod$,
which is defined explicitely by equation \eqref{deltanod}. The
distance $\deltanod$ is defined only when the two trajectories are not
coplanar, and is similar to $\dmin$ for some aspects: $\deltanod = 0$
if and only if $\dmin=0$, moreover the absolute values of the
ascending and descending nodal distances may exchange their role as
the nodal distance.  We also have $\dmin\leq \deltanod$, thus the
nodal distance gives us an upper bound to the orbit distance.

The ascending and descending nodal distances have also been used in
\cite{KV1999a} to define linking coefficients as functions of the
orbital elements and to estimate the orbit distance. A lower bound for 
the orbit distance is also given in \cite{MB2019}.

This paper is organized as follows. In Section~\ref{s:prelim} we
introduce the nodal distance $\deltanod$ and show some basic
properties.  In Section~\ref{s:bounds} we present the main results,
that is optimal bounds for $\deltanod$: first we deal with the case of
an eccentric trajectory ${\cal A}'$, with $e'\in(0,1)$, then we
consider the particular case $e'=0$ and compare the results with the
ones in \cite{GV2013}.  In Section~\ref{s:appl} we show an application
of the results to the known population of NEAs. Finally, in
Section~\ref{s:dmin}, we discuss the analogies and the differences
between the optimal upper bounds of $\deltanod$ and $\dmin$ on the
basis of numerical computations.

\section{Preliminary definitions and basic properties}
\label{s:prelim}

\subsection{Mutual orbital elements}

Given two non-coplanar Keplerian trajectories ${\cal A}, {\cal A}'$
with a common focus, we define the {\em cometary mutual elements}
\[
{\cal E}_M = (q,e,q',e',I_M,\omega_M,\omega_M')
\]
as follows: $q,e$ and $q',e'$ are the pericenter distance and the
eccentricity of the two trajectories, $I_M$ is the mutual inclination
between the two orbital planes and $\omega_M,\omega_M'$ are the angles
between the ascending mutual node\footnote{defined by assigning an
  orientation to both trajectories.} and the pericenters of ${\cal A}$
and ${\cal A}'$, see Figure~\ref{fig:mutelems}.

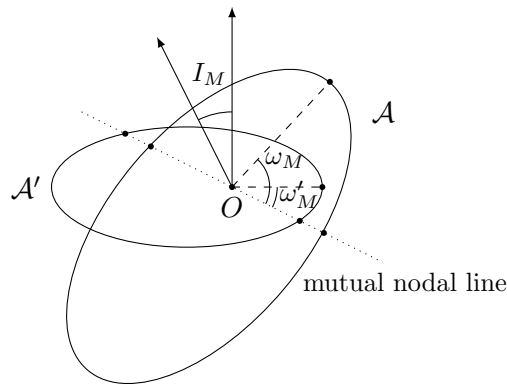
\begin{figure}[h!]
\centering
\begin{tikzpicture}
  \coordinate (F) at (0,0);
  \node[below] at (F) {$O$};
  \coordinate (Lm) at (-2,1);
  \coordinate (Lp) at (2,-1);
  \coordinate (Nm1) at (-2*0.71,1*0.71);
  \coordinate (Np1) at (2*0.45,-1*0.45);
  \coordinate (Nm2) at (-2*0.54,1*0.54);
  \coordinate (Np2) at (2*0.61,-1*0.61); 
  \coordinate (q) at (1.3,1.4);
  \coordinate (qp) at (1.2,0);
  \coordinate (v) at (-1,2);
  \coordinate (vp) at (0,2.4);
  \draw (-0.6,0) ellipse (1.8cm and 0.8cm);
  \draw[rotate=50] (-0.6,-0.1) ellipse (2.5cm and 1.3cm);
  \node at (2.3,1) [left] {${\cal A}$};
  \node at (-2.4,0) [left] {${\cal A}'$};
  \fill (F) circle (0.4mm);
  \fill (q) circle (0.4mm);
  \fill (qp) circle (0.4mm);
  \fill (Nm1) circle (0.4mm);
  \fill (Np1) circle (0.4mm);
  \fill (Nm2) circle (0.4mm);
  \fill (Np2) circle (0.4mm);
  \draw[dotted] (Lm)--(Lp);
  \node at (2.3,-1) [below] {\small mutual nodal line};
\draw[-latex] (F)--(v);\draw[-latex] (F)--(vp);
  \draw[dashed] (F)--(q);
  \draw[dashed] (F)--(qp);
  \pic [draw, angle eccentricity=1.5, angle radius = 0.5cm] {angle=Np1--F--q};
  \node at (0.71,0.36) {\small $\omega_M$};
  \pic[draw, angle eccentricity=0.9, angle radius=0.6cm]{angle=Np1--F--qp};
  \node at (0.9,-0.1) {\small $\omega_M'$};
  \pic[draw, angle eccentricity=0.6, angle radius=1.cm]{angle=vp--F--v};
\node at (-0.3,1.5) {\small $I_M$};
  
\end{tikzpicture}
\caption{The mutual orbital elements $I_M, \omega_M, \omega_M'$.}
  \label{fig:mutelems}
\end{figure}

The map
\[
\Phi: (E,E') \rightarrow {\cal E}_M,
\]
from the usual cometary elements
\[
E = (q,e,I,\Omega,\omega),\qquad E' = (q',e',I',\Omega',\omega')
\]
of ${\cal A}$, ${\cal A}'$, to the mutual elements, is not injective:
there are infinitely many configurations leading to the same mutual
position of the two orbits.  We can select a unique set of orbital
elements $(E,E')$ in each counter-image $\Phi^{-1}\left({\cal
  E}_M\right)$ as follows:
\[
  E = (q,e,I_M,0,\omega_M), \qquad E' = (q',e',0,0,\omega_M') .
\]
This corresponds to computing the usual cometary elements with respect
to the mutual reference frame $Oxyz$, with the $x$-axis along the
mutual nodal line, oriented towards the ascending mutual node,
assuming that ${\cal A}'$ lies on the $xy$ plane.
Another possible choice is
\[
E = (q,e,I_M,-\omega_M',\omega_M), \qquad E' = (q',e',0,0,0),
\]
where we choose the reference $Oxyz$ with the $x$-axis along the
apsidal line of ${\cal A}'$, oriented towards its pericenter. In this
way,
for a given choice of the pericenter distance $q'$ and the
eccentricity $e'$ of ${\cal A}'$, we can vary all the other mutual
elements by changing only the elements of ${\cal A}$.

For simplicity, from now on we shall drop the subscript in $I_M,
\omega_M, \omega_M'$ and the adjective `mutual' referred to the
nodes and to the nodal distances.
We assume that $q'>0$ and $e'\in[0,1)$ are given, and let the
other mutual elements vary in the following ranges:
\[
0 < q\leq \qmax, \qquad 0\leq
e\leq 1, \qquad 0<I<\pi, \qquad 0\leq \omega,\omega'< 2\pi,
\]
for a given $\qmax>0$.

Moreover, we admit that the considered functions of the mutual orbital
elements attain the values $+\infty$ and $-\infty$, when
there exists an infinite limit for the value of such functions.

\subsection{The nodal distance}
\label{s:deltanod}

Let us set
\[
\begin{split}
&\rpiu = \frac{q(1+e)}{1+e\cos\omega},\qquad
  \rmeno = \frac{q(1+e)}{1-e\cos\omega},\\
  &\rpiup = \frac{q'(1+e')}{1+e'\cos\omega'},\quad \ \ 
  \rmenop = \frac{q'(1+e')}{1-e'\cos\omega'}\\
\end{split}
\]
and introduce the ascending and descending nodal distances:
\[
\dnod^+ = \rpiup - \rpiu, \qquad \dnod^- = \rmenop - \rmeno.
\]
\begin{definition}
  We define the (minimal) {\bf nodal distance} $\deltanod$ as the minimum
between the absolute values of the ascending and descending nodal
distances:
\begin{equation}
  \deltanod = \min\bigl\{|\dnod^+|, |\dnod^-|\bigr\}.
\label{deltanod}
\end{equation}
\end{definition}
Note that $\deltanod$ does not depend on the mutual inclination $I$.

\begin{remark}
 The transformations
 \[
 \begin{split}
&(\omega,\omega')\mapsto (\pi-\omega,\pi-\omega'),
\qquad
(\omega,\omega')\mapsto (\pi+\omega,\pi-\omega'),\\
&(\omega,\omega')\mapsto (2\pi-\omega,\omega'),
\qquad\quad\ 
(\omega,\omega')\mapsto (\omega,2\pi-\omega')\\
\end{split}
\]
leave the values of $\deltanod$ unchanged. 
\label{rem:sym}
\end{remark}
By the previous remark we get all the possible values of $\deltanod$
even if we restrict $\omega,\omega'$ to the following ranges:
\begin{equation}
0\leq \omega \leq \pi/2, \qquad 0\leq \omega'\leq\pi.
\label{deltanod_domain}
\end{equation}


\noindent We prove the following elementary facts:
%
\begin{lemma} 
Assuming $(\omega,\omega')\in[0,\pi]\times[0,\pi]$, the ascending
nodal distance $\dnod^+$ is a non-increasing function of $\omega$ and
a non-decreasing function of $\omega'$.  In the same domain the
descending nodal distance $\dnod^-$ is a non-decreasing function of
$\omega$ and a non-increasing function of $\omega'$.  Moreover, both
$\dnod^+$ and $\dnod^-$ are non-increasing functions of $e$.
\label{der_om_omp_e}
\end{lemma}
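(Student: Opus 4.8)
The plan is to exploit the fact that each signed nodal distance is a \emph{difference of two independent single-variable radial functions}: the quantities $\rpiup,\rmenop$ depend only on $(\omega',e')$, while $\rpiu,\rmeno$ depend only on $(\omega,e)$. Hence the separations $\dnod^+=\rpiup-\rpiu$ and $\dnod^-=\rmenop-\rmeno$ are complete, and every monotonicity claim reduces to the monotonicity of a single radial function in one of its arguments, with the leading minus sign in front of the unprimed term reversing the direction of monotonicity there. Since $e'$ is fixed and $q>0$ enters only as a positive multiplicative constant, the signs I need are governed entirely by $\cos\omega,\sin\omega$ and $\cos\omega',\sin\omega'$.

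First I would note that the primed functions have exactly the same algebraic form as their unprimed counterparts, so it suffices to study
\[
\rpiu=\frac{q(1+e)}{1+e\cos\omega},\qquad \rmeno=\frac{q(1+e)}{1-e\cos\omega}
\]
on $[0,\pi]\times[0,1]$ and then read off the behaviour of $\rpiup,\rmenop$ by priming $\omega\mapsto\omega'$, $e\mapsto e'$. A direct differentiation gives
\[
\partial_\omega \rpiu=\frac{q(1+e)\,e\sin\omega}{(1+e\cos\omega)^2},\qquad
\partial_\omega \rmeno=-\frac{q(1+e)\,e\sin\omega}{(1-e\cos\omega)^2},
\]
while cancelling the $e\cos\omega$ terms in the numerators yields the clean expressions
\[
\partial_e \rpiu=\frac{q\,(1-\cos\omega)}{(1+e\cos\omega)^2},\qquad
\partial_e \rmeno=\frac{q\,(1+\cos\omega)}{(1-e\cos\omega)^2}.
\]

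The sign determinations are then immediate from two elementary facts valid on $[0,\pi]$: namely $\sin\omega\ge 0$ and $1\pm\cos\omega\ge 0$. Thus $\rpiu$ is non-decreasing in both $\omega$ and $e$, whereas $\rmeno$ is non-increasing in $\omega$ but non-decreasing in $e$. Transferring these through the separations $\dnod^+=\rpiup-\rpiu$ and $\dnod^-=\rmenop-\rmeno$ yields all six assertions at once: for instance $\partial_\omega\dnod^+=-\partial_\omega \rpiu\le 0$, $\partial_{\omega'}\dnod^+=\partial_{\omega'}\rpiup\ge 0$, and $\partial_e\dnod^+=-\partial_e \rpiu\le 0$, with the three signs for $\dnod^-$ following in exactly the same way.

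The only delicate step, which I would treat separately, is the parabolic boundary $e=1$: the denominator $1+e\cos\omega$ vanishes at $(\omega,e)=(\pi,1)$ and $1-e\cos\omega$ vanishes at $(\omega,e)=(0,1)$, where $\rpiu$ (respectively $\rmeno$) diverges. At these isolated points the derivative formulas do not apply, so I would instead invoke the convention of the paper admitting the value $+\infty$ and argue monotonicity by continuity from the interior, where all denominators are strictly positive and the derivative signs above hold verbatim. This is the only place where the argument is not a one-line computation, and it is the step I expect to require the most care in a fully rigorous write-up.
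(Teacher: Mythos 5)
Your proposal is correct and takes essentially the same approach as the paper: a direct computation of partial derivatives followed by reading off signs from $\sin\omega \geq 0$ and $1\pm\cos\omega \geq 0$ on $[0,\pi]$, your derivative formulas agreeing exactly with the paper's after expanding its factored expressions. The only differences are organizational—you differentiate $\rpiu$, $\rmeno$ separately and transfer the signs through the differences $\dnod^{\pm}$, and you make explicit the treatment of the degenerate points $(\omega,e)=(\pi,1)$ and $(0,1)$, which the paper handles implicitly via its convention of admitting infinite values.
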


\begin{proof}
    We only need to compute the following derivatives:
  \[
  \begin{split}
  &\frac{\partial \dnod^+}{\partial\omega} = -\frac{e\sin\omega}{(1+e\cos\omega)}\rpiu, \qquad
  \frac{\partial \dnod^+}{\partial\omega'} = \frac{e'\sin\omega'}{(1+e'\cos\omega')}\rpiup,\\
   &\frac{\partial \dnod^-}{\partial\omega} = \frac{e\sin\omega}{(1-e\cos\omega)}\rmeno, \qquad
  \frac{\partial \dnod^-}{\partial\omega'} = -\frac{e'\sin\omega'}{(1-e'\cos\omega')}\rmenop,\\
  &\frac{\partial \dnod^+}{\partial e} = -\frac{q(1-\cos\omega)}{(1+e\cos\omega)^2}, \qquad
  \frac{\partial \dnod^-}{\partial e} = -\frac{q(1+\cos\omega)}{(1-e\cos\omega)^2}.\\
  \end{split}
  \]
  \end{proof}

We shall use this notation for the semi-latus rectum and for
the apocenter distance:
\[
p=q(1+e), \qquad p'=q'(1+e'), \qquad Q=\frac{q(1+e)}{1-e}, \qquad
Q'=\frac{q'(1+e')}{1-e'}.
\]
Moreover, we shall employ the variables
\[
\xi = e\cos\omega, \qquad \xi' = e'\cos\omega'.
\]

\begin{definition}
  We consider the following {\bf linking configurations} between the
  trajectories ${\cal A}, {\cal A}'$:
\begin{itemize}
  \item[-]{\bf internal nodes:} the nodes of ${\cal A}$ are internal
    to those of ${\cal A}'$, that is $\dnod^+, \dnod^- > 0$. A
    sufficient condition for this case is $Q<q'$;

  \item[-]{\bf external nodes:} the nodes of ${\cal A}$ are external
    to those of ${\cal A}'$ (possibly located at infinity), that is
    $\dnod^+, \dnod^- < 0$. A sufficient condition for this case is
    $q>Q'$;
    
  \item[-]{\bf linked orbits:} ${\cal A}$ and ${\cal A}'$ are
    topologically linked, that is
    $\dnod^+<0<\dnod^-$, or $\dnod^-<0<\dnod^+$;
  \item[-]{\bf crossing orbits:} ${\cal A}$ and ${\cal A}'$ have at
    least one point in common, that is $\dnod^+\dnod^- = 0$.
\end{itemize}
\label{linkcond}
\end{definition}
Assume $q'>0$ and $e'\in[0,1)$ are given. We introduce the functions
\[
\begin{split}
  &\deltaint(q,e,\omega,\omega') = \min\{\dnod^+, \dnod^-\},\\
  &\deltaext(q,e,\omega,\omega') = \min\{-\dnod^+, -\dnod^-\},\\
  &\deltalink^{(i)}(q,e,\omega,\omega') = \min\{-\dnod^+, \dnod^-\},\\
  &\deltalink^{(ii)}(q,e,\omega,\omega') = \min\{\dnod^+, -\dnod^-\},\\
  &\deltalink(q,e,\omega,\omega') = \max\{\deltalink^{(i)}, \deltalink^{(ii)}\}.\\
\end{split}
\]
The linking configurations
depend on the sign of these functions as described below.
\begin{lemma}
  Given the vector $(q,e,\omega,\omega')$, we have
\begin{itemize}
\item[a)] internal nodes if and only if $\deltaint(q,e,\omega,\omega')>0$,
\item[b)] external nodes if and only if $\deltaext(q,e,\omega,\omega')>0$,
\item[c)] linked orbits if and only if $\deltalink(q,e,\omega,\omega')>0$,
\item[d)] crossing orbits if and only if $\deltaint=\deltaext=\deltalink=0$ at $(q,e,\omega,\omega')$.
\end{itemize}
Moreover,
\begin{equation}
\deltanod = \max\{\deltaint,\deltaext,\deltalink\}.
\label{deltanod_max}
\end{equation}
\label{lem:delta}
\end{lemma}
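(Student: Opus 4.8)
The plan is to reduce everything to the signs and absolute values of the two numbers $\dnod^+$ and $\dnod^-$, since all five auxiliary functions are built from them by taking minima and maxima. Write $a=\dnod^+$ and $b=\dnod^-$ for brevity. Parts a), b), c) are then immediate from the elementary rules $\min\{x,y\}>0\iff x>0$ and $y>0$, and $\max\{x,y\}>0\iff x>0$ or $y>0$. Indeed $\deltaint=\min\{a,b\}>0$ is exactly $a,b>0$ (internal nodes); $\deltaext=\min\{-a,-b\}>0$ is exactly $a,b<0$ (external nodes); and $\deltalink=\max\{\min\{-a,b\},\min\{a,-b\}\}>0$ holds iff one of the two inner minima is positive, i.e. iff $a<0<b$ or $b<0<a$ (linked orbits), matching Definition~\ref{linkcond}.

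For the main formula \eqref{deltanod_max} I would exploit the observation that the four functions $\deltaint,\deltaext,\deltalink^{(i)},\deltalink^{(ii)}$ are precisely the four quantities $\min\{sa,tb\}$ obtained by letting the signs $s,t$ range over $\{+1,-1\}$:
\[
\deltaint=\min\{a,b\},\quad \deltaext=\min\{-a,-b\},\quad \deltalink^{(i)}=\min\{-a,b\},\quad \deltalink^{(ii)}=\min\{a,-b\}.
\]
Hence $\max\{\deltaint,\deltaext,\deltalink\}=\max_{s,t\in\{\pm1\}}\min\{sa,tb\}$. For every choice of $s,t$ one has $\min\{sa,tb\}\le sa\le|a|$ and $\min\{sa,tb\}\le tb\le|b|$, so the maximum is at most $\min\{|a|,|b|\}=\deltanod$; conversely, taking $s=\operatorname{sign}(a)$ and $t=\operatorname{sign}(b)$ gives $\min\{sa,tb\}=\min\{|a|,|b|\}$, so the bound is attained. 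This proves $\deltanod=\max\{\deltaint,\deltaext,\deltalink\}$.

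For part d) I would combine this with the definition of $\deltanod$: since $\deltanod=\min\{|a|,|b|\}$, we have $\deltanod=0$ exactly when $a=0$ or $b=0$, i.e. when $\dnod^+\dnod^-=0$, which is the crossing case. By the formula just proved this is equivalent to $\max\{\deltaint,\deltaext,\deltalink\}=0$, equivalently to none of the three functions being positive, so by a)--c) the configuration is then neither internal, external, nor linked. Here one point deserves care: on the crossing locus the three functions need not all vanish simultaneously (e.g. when $a=0<b$ one has $\deltaext=-b<0$), so the content of d) is best read as the assertion that their maximum, namely $\deltanod$, vanishes.

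The individual steps are all short. The conceptual core is the identity $\max_{s,t}\min\{sa,tb\}=\min\{|a|,|b|\}$, which simultaneously yields the formula and organises the sign bookkeeping; the only genuine care point I foresee is the boundary (crossing) analysis — in particular checking that the four configurations of Definition~\ref{linkcond} are mutually exclusive and exhaustive, since they correspond to the open quadrants and the axes of the $(a,b)$-plane, so that a)--d) together cover every vector $(q,e,\omega,\omega')$. Everything else is a direct verification.
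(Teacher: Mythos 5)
Your proof is correct, and for the identity \eqref{deltanod_max} it takes a genuinely different and cleaner route than the paper. The paper's own proof is a two-line appeal to mutual exclusivity: at least one of $\deltaint,\deltaext,\deltalink$ is non-negative, and if one is strictly positive the other two are strictly negative; the equality $\deltanod=\max\{\deltaint,\deltaext,\deltalink\}$ is then left as an implicit case-by-case verification (in each open configuration one must still check that the positive function equals $\min\{|\dnod^+|,|\dnod^-|\}$). Your observation that $\deltaint,\deltaext,\deltalink^{(i)},\deltalink^{(ii)}$ are exactly the four quantities $\min\{s\dnod^+,t\dnod^-\}$ with $s,t\in\{\pm 1\}$, combined with the elementary identity $\max_{s,t\in\{\pm1\}}\min\{sa,tb\}=\min\{|a|,|b|\}$, proves the formula in one stroke, with no case analysis and uniformly on the boundary (crossing) locus where one of $\dnod^\pm$ vanishes. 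A further merit of your write-up is that you correctly flag that part d), read literally, is too strong: when exactly one nodal distance vanishes, say $\dnod^+=0<\dnod^-$, one has $\deltaint=\deltalink=0$ but $\deltaext=-\dnod^-<0$, so the three functions do \emph{not} all vanish; the correct statement is that crossing is equivalent to $\max\{\deltaint,\deltaext,\deltalink\}=0$, i.e.\ to none of the three being positive, which is precisely what your argument establishes. The paper, which declares a)--d) ``immediate'' from Definition~\ref{linkcond}, glosses over this boundary case, so your version is not only valid but slightly more careful than the original.
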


\begin{proof}
 Properties $a)$ - $d)$ follow immediately from
 Definition~\ref{linkcond}.  Relation~\eqref{deltanod_max} follows
 from the fact that the linking configurations are mutually exclusive,
 therefore at least one of the expressions $\deltaint, \deltaext,
 \deltalink$ must be non-negative, and if one of these is strictly
 positive, then the other two are strictly negative.

\end{proof}

\section{Optimal bounds for the nodal distance}
\label{s:bounds}

In this section we state and prove optimal bounds for $\deltanod$ as
functions of selected pairs of orbital elements. For the case $e'=0$
we also compare the results with the ones obtained in \cite{GV2013}
for the orbit distance $\dmin$.

\subsection{Bounds for $\deltanod$ when $e'\in(0,1)$}

Assume $q'>0$ and $e'\in(0,1)$ are given. First we present the optimal lower and upper
bounds for $\deltanod$ as functions of $(q,\omega)$.

\begin{proposition}
Let ${\cal D}_1 =
\{(e,\omega'): 0\leq e\leq 1, 0\leq \omega'\leq \pi\}$, ${\cal D}_2 =
\{(q,\omega): 0< q\leq \qmax, 0\leq \omega\leq {\pi}/{2}\}$.  For
each choice of $(q,\omega)\in{\cal D}_2$ we have
  \begin{align}
  &\displaystyle \min_{(e,\omega')\in{\cal D}_1} \deltanod =
  \max\bigl\{0, \lowintom, \lowextom \bigr\}, \label{lbom}\\
%
  &\displaystyle \max_{(e,\omega')\in{\cal D}_1} \deltanod =
  \max\bigl\{
  \uppintom, \uppextom, \upplinkom\bigr\},
  \label{ubom}
\end{align}
where\footnote{we admit infinite values for the considered functions,
  e.g. $\lowintom(q,0) = -\infty$.}
\[
\lowintom(q,\omega) = q'-\frac{2q}{1-\cos\omega},\qquad
\lowextom(q,\omega) = q-Q',
\]
\[
\uppintom(q,\omega) = p' - q,
\]
\[
\uppextom(q,\omega) = \min\Bigl\{ \frac{2q}{1-\cos\omega} -
\frac{p'}{1-\hat{\xi}'_*},\ \frac{2q}{1+\cos\omega} - q'\Bigr\},
\]
with
\[
\hat{\xi}'_* = \min\{ \xi_*',e'\},
\]
where
\[
\xi'_*(q,\omega) = \frac{4q\cos\omega}
  {p'\sin^2\omega + \sqrt{p'^2\sin^4\omega+16q^2\cos^2\omega}},
\]
and
\begin{equation}
\upplinkom(q,\omega) = \min\left\{ Q' -
\frac{q(1+\hat{e}_*)}{1+\hat{e}_*\cos\omega},
\frac{2q}{1-\cos\omega}-q' \right\},
\label{def:upplinkom}
\end{equation}
with
\[
\hat{e}_* = \max\bigl\{0,\min\{e_*,1\}\bigr\},
\]
where
\[
e_*(q,\omega) =
\frac{2(p'-q(1-e'^2))}{q(1-e'^2) + \sqrt{q^2(1-e'^2)^2 +
    4p'\cos^2\omega(p'-q(1-e'^2))}}.
\]
\label{prop:deltanod_qom_gen}
\end{proposition}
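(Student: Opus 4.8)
The plan is to exploit that, with $q$ and $\omega$ fixed, the four node distances separate into functions of a single variable: $\rpiu,\rmeno$ depend only on $e$ and $\rpiup,\rmenop$ only on $\omega'$, so $\dnod^+=\rpiup-\rpiu$ and $\dnod^-=\rmenop-\rmeno$ are differences of independent one-variable functions. By Lemma~\ref{der_om_omp_e}, on $\mathcal{D}_1$ the distance $\rpiu$ increases in $e$ from $q$ to $2q/(1+\cos\omega)$, $\rmeno$ increases in $e$ from $q$ to $2q/(1-\cos\omega)$, $\rpiup$ increases in $\omega'$ from $q'$ to $Q'$, and $\rmenop$ decreases in $\omega'$ from $Q'$ to $q'$; moreover $\rmeno\ge\rpiu$ since $\cos\omega\ge0$ on $\mathcal{D}_2$, and $\rpiup=\rmenop=p'$ at $\omega'=\pi/2$. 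This separation makes the range of each $\dnod^\pm$ over $\mathcal{D}_1$ an interval with explicit endpoints, which is the mechanism behind both bounds.

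For the lower bound \eqref{lbom} I would combine the identity $\deltanod=\max\{\deltaint,\deltaext,\deltalink\}$ from Lemma~\ref{lem:delta} with the mutual exclusivity of the linking configurations. The endpoints above give $\min_{\mathcal{D}_1}\dnod^-=\lowintom$ and $\max_{\mathcal{D}_1}\dnod^+=\max_{\mathcal{D}_1}\dnod^-=Q'-q$, hence $\min_{\mathcal{D}_1}\deltaint=\min\{\min\dnod^+,\min\dnod^-\}=\lowintom$ (using $2q/(1+\cos\omega)\le 2q/(1-\cos\omega)$) and $\min_{\mathcal{D}_1}\deltaext=-\max\{\max\dnod^+,\max\dnod^-\}=\lowextom$. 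A short sign check shows $\lowintom$ and $\lowextom$ cannot both be positive. If $\lowintom>0$, then both $\dnod^+$ and $\dnod^-$ stay positive on all of $\mathcal{D}_1$, the configuration is internal everywhere, $\deltanod\equiv\deltaint$, and the minimum equals $\lowintom$; the case $\lowextom>0$ is symmetric with $\deltaext$. If both are $\le 0$, then $0$ lies in $[\lowintom,-\lowextom]$, which is exactly the range of $\dnod^-$, so by continuity some point of $\mathcal{D}_1$ is a crossing configuration and $\min\deltanod=0$. Together the three cases yield $\max\{0,\lowintom,\lowextom\}$.

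For the upper bound \eqref{ubom} I would instead use that the maximum over $\mathcal{D}_1$ commutes with the pointwise maximum, $\max_{\mathcal{D}_1}\deltanod=\max\{\max\deltaint,\max\deltaext,\max\deltalink\}$, and compute the three maxima separately. Since $\deltaint$ is non-increasing in $e$, its maximum is at $e=0$, where it equals $\min\{\rpiup,\rmenop\}-q$, maximal at $\omega'=\pi/2$; this gives $\uppintom=p'-q$. Since $\deltaext$ is non-decreasing in $e$, its maximum is at $e=1$; writing $-\dnod^+$ and $-\dnod^-$ as functions of $\xi'=e'\cos\omega'$ they are respectively increasing and decreasing, so their minimum increases then decreases and peaks where $-\dnod^+=-\dnod^-$. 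Solving this quadratic in $\xi'$ and rationalizing gives the root $\xi'_*$; the peak is attained in the admissible range iff $\xi'_*\le e'$, otherwise the maximum sits at the endpoint $\omega'=0$, and both regimes are combined in the single expression $\uppextom$ with $\hat{\xi}'_*=\min\{\xi'_*,e'\}$.

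For the linked term I would first reduce $\max\deltalink$ to $\max\deltalink^{(ii)}$: by Lemma~\ref{der_om_omp_e} the optimal $\omega'$ is $0$ for $\deltalink^{(i)}$ and $\pi$ for $\deltalink^{(ii)}$, and comparing the resulting functions of $e$ through $\rmeno\ge\rpiu$ shows $\deltalink^{(ii)}$ dominates. At $\omega'=\pi$ one has $\deltalink^{(ii)}=\min\{Q'-\rpiu,\rmeno-q'\}$, a function of $e$ that increases then decreases; its interior peak solves a quadratic in $e$ with relevant root $e_*$, and clamping to $\hat{e}_*$ together with the $\min$ in \eqref{def:upplinkom} again merges the interior and boundary regimes to give $\upplinkom$. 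I expect the main work to be precisely this merging step in the external and linked cases: one must verify that the compact $\min$ of two terms in $\uppextom$ and $\upplinkom$ automatically selects the peak value when the critical point is admissible and the correct endpoint value otherwise, which reduces to checking the ordering of the two entries in each regime; the algebra producing the closed forms $\xi'_*$ and $e_*$ is secondary, and the continuity argument for the lower bound together with the domination $\deltalink^{(i)}\le\deltalink^{(ii)}$ are comparatively routine.
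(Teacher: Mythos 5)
Your proposal is correct and follows essentially the same route as the paper: the same decomposition $\deltanod=\max\{\deltaint,\deltaext,\deltalink\}$ combined with the same monotonicity reductions (maximum of $\deltaint$ at $e=0$, of $\deltaext$ at $e=1$, of $\deltalink^{(i)}$, $\deltalink^{(ii)}$ at $\omega'=0$, $\omega'=\pi$), the same critical-point equations producing $\xi_*'$ and $e_*$ with their cut-offs, and the same trichotomy plus continuity argument for the lower bound. Your two local shortcuts --- applying the intermediate value theorem directly to $\dnod^-$, whose range over ${\cal D}_1$ is $[\lowintom,-\lowextom]$, for the crossing case, and the pointwise domination $\deltalink^{(i)}|_{\omega'=0}\le\deltalink^{(ii)}|_{\omega'=\pi}$ (via $\rmeno\ge\rpiu$), which avoids computing $\max\deltalink^{(i)}$ explicitly as the paper does --- are sound but do not alter the structure of the argument.
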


\begin{proof}

We prove some preliminary facts.
\begin{lemma}
The following properties hold:
\begin{itemize}
\item[i)] for each $(q,\omega)\in{\cal D}_2$ and $(e,\omega')\in{\cal D}_1$
we have
  \begin{eqnarray}
  &&\deltaint(q,e,\omega,\omega') \geq \deltaint(q,1,\omega,\pi) = \lowintom(q,\omega),\label{b1om}\\ 
  &&\deltaext(q,e,\omega,\omega') \geq  \deltaext(q,0,\omega,\pi) = \lowextom(q,\omega),\label{b2om} 
  \end{eqnarray}
therefore, given $(q,\omega)\in{\cal D}_2$, we have internal
(resp. external) nodes for each $(e,\omega')\in{\cal D}_1$ if and only if
$\lowintom(q,\omega)>0$ (resp. $\lowextom(q,\omega)>0$);

  
\item[ii)] if $(q,\omega)$ is such that $\lowintom(q,\omega)\leq 0$
  and $\lowextom(q,\omega)\leq 0$, then there exists
  $(e,\omega')\in{\cal D}_1$ such that $\dnod^+\dnod^- = 0$,
  i.e. there exists $(e,\omega')$ corresponding to a crossing
  configuration.
\end{itemize}
\label{lem:omegabounds}
\end{lemma}
\begin{proof}
  We prove the bounds \eqref{b1om}, \eqref{b2om} by observing that 
  for each $(q,\omega)\in{\cal D}_2$ and $(e,\omega')\in{\cal D}_1$ we have
  \[\begin{split}
  \deltaint&\geq \min\biggl\{\min_{\omega'\in[0,\pi]}\rpiup - \max_{e\in[0,1]}\rpiu,
  \min_{\omega'\in[0,\pi]}\rmenop - \max_{e\in[0,1]}\rmeno\biggr\}\cr
  &=\min\{\rpiup|_{\omega'=0} - \rpiu|_{e=1}, \rmenop|_{\omega'=\pi} - \rmeno|_{e=1}\} =
  q' - \frac{2q}{1-\cos\omega}\cr
  \end{split}\]
  and
  \[\begin{split}
  \deltaext&\geq \min\biggl\{\min_{e\in[0,1]}\rpiu - \max_{\omega'\in[0,\pi]}\rpiup,
  \min_{e\in[0,1]}\rmeno - \max_{\omega'\in[0,\pi]}\rmenop\biggr\}\cr
  &=
  \min\{\rpiu|_{e=0} - \rpiup|_{\omega'=\pi},
  \rmeno|_{e=0} - \rmenop|_{\omega'=0}\} = q - Q'.\cr
  \end{split}\]
  We conclude the proof of i) using
  properties {\em a)}, {\em b)} in Lemma~\ref{lem:delta}.
To prove {\em ii)} we note that
\[
\deltaint(q,0,\omega,\pi/2) = p'-q, \qquad
\deltaext(q,0,\omega,\pi/2) = q-p'
\]
for each $(q,\omega)\in{\cal D}_2$.  Therefore, either they are both
zero and there is a crossing for $(e,\omega')=(0,\pi/2)$, or they are
different from zero and opposite and, since we are assuming that
$\lowintom,\lowextom\leq 0$ at $(q,\omega)$, by continuity there
exists $(e,\omega')\in{\cal D}_1$ corresponding to a crossing configuration.

\end{proof}

\noindent We continue the proof of Proposition~\ref{prop:deltanod_qom_gen}.

{\em Lower bound:} we prove relation \eqref{lbom} by observing that,
by {\em i)} of Lemma~\ref{lem:omegabounds}, if $\lowintom(q,\omega)>0$
we can have only internal nodes for each $(e,\omega')\in{\cal D}_1$.
Therefore $\min_{(e,\omega')\in{\cal D}_1}\deltanod(q,\omega) =
\min_{(e,\omega')\in{\cal D}_1}\deltaint(q,\omega) =
\lowintom(q,\omega)$ and $\deltaext(q,e,\omega,\omega')$,
$\deltalink(q,e,\omega,\omega') <0$ for each $(e,\omega')\in{\cal
  D}_1$. In particular we have $\lowextom(q,\omega)<0$.
In a similar way, if $\lowextom(q,\omega)>0$ we can have only external
nodes for each $(e,\omega')\in{\cal D}_1$. Therefore $\min_{(e,\omega')\in{\cal
    D}_1}\deltanod(q,\omega) = \min_{(e,\omega')\in{\cal D}_1}\deltaext(q,\omega) =
\lowextom(q,\omega)$ and $\deltaint(q,e,\omega,\omega')$,
$\deltalink(q,e,\omega,\omega') <0$ for each $(e,\omega')\in{\cal
  D}_1$. In particular we have $\lowintom(q,\omega)<0$.
Finally, if $\lowintom(q,\omega)\leq 0$ and $\lowextom(q,\omega)\leq
0$, by {\em ii)} of Lemma~\ref{lem:omegabounds} there exists
$(e,\omega')\in{\cal D}_1$ corresponding to a crossing configuration,
therefore $\min_{(e,\omega')\in{\cal D}_1}\deltanod(q,\omega)=0$.
%
The previous discussion yields relation \eqref{lbom}.

      
{\em Upper bound:}
by Lemma~\ref{der_om_omp_e} both $\dnod^+$ and $\dnod^-$ are
non-increasing functions of $e$, therefore also $\deltaint$ is, and we
have
\[
\deltaint(q,e,\omega,\omega')\leq \deltaint(q,0,\omega,\omega')
\]
for each $(q,\omega)\in{\cal D}_2$ and $(e,\omega')\in{\cal D}_1$.
By the same lemma, $\dnod^+$ is non-decreasing
with $\omega'$, while $\dnod^-$ is non-increasing, whatever the value
of $e$.
Moreover, for $e=0$ we have $\dnod^+=\dnod^-$ if and only if
\[
\frac{p'\xi'}{1-\xi'^2} = 0
\]
with $\xi'=e'\cos\omega'$, that is for $\omega'={\pi}/{2}$.  We
conclude that for each $(q,\omega)\in{\cal D}_2$ the maximal value of
$\deltaint$ over ${\cal D}_1$ is
\[
\uppintom(q,\omega) = \deltaint(q,0,\omega,\pi/2) = p' - q.
\]
We also observe that by
Lemma~\ref{der_om_omp_e} we have
\[
\deltaext(q,e,\omega,\omega')\leq \deltaext(q,1,\omega,\omega').
\]
Moreover, by the same lemma,
$-\dnod^+$ is non-increasing with $\omega'$ while $-\dnod^-$ is
non-decreasing, whatever the value of $e$.
Let us set
\[
\begin{split}
  &D_+(0) = \left.\dnod^+\right|_{e=1,\omega'=0} = q' - \frac{2q}{1+\cos\omega},
  \\
  &D_-(0) = \left.\dnod^-\right|_{e=1,\omega'=0} = Q' - \frac{2q}{1-\cos\omega},
  \\
  &D_+(\pi) = \left.\dnod^+\right|_{e=1,\omega'=\pi} = Q' - \frac{2q}{1+\cos\omega},
  \\
  &D_-(\pi) = \left.\dnod^-\right|_{e=1,\omega'=\pi} = q' - \frac{2q}{1-\cos\omega}.
  \\
\end{split}
\]
\begin{figure}[t!]
\begin{center}
  \begin{tikzpicture}
  \coordinate (O) at (0,0);
  \coordinate (O1) at (0+5,0);
  \coordinate (O2) at (0+5*2,0);
  \coordinate (x) at (3.5,0);
  \coordinate (x1) at (3.5+5,0);
  \coordinate (x2) at (3.5+5*2,0);
  \coordinate (y) at (0,2.5);
  \coordinate (y1) at (0+5,2.5);
  \coordinate (y2) at (0+5*2,2.5);
  \draw [-latex] (O) -- (x) node [below] {$\omega'$};
  \draw [-latex] (O) -- (y);
  \draw [-latex] (O1) -- (x1) node [below] {$\omega'$};
  \draw [-latex] (O1) -- (y1);
  \draw [-latex] (O2) -- (x2) node [below] {$\omega'$};
  \draw [-latex] (O2) -- (y2);
  \coordinate (Ap) at (0,0.5); \coordinate (Bp) at (3,1);
  \coordinate (Am) at (0,2); \coordinate (Bm) at (3,1.5);
  \draw[line width=0.2mm, color=black!70!white] (Am)--(Bm);
  \draw[line width=0.2mm, color=black!30!white] (Ap)--(Bp);
  \draw [dashed] (3,0)--(3,2.3);
  \node at (0,0) [below] {$0$};
  \node at (3,0) [below] {$\pi$};
  \draw (1.3,0.5) node {$-\dnod^-$};
  \draw (0.8,1.6) node {$-\dnod^+$};
  \fill (3,1) circle (0.4mm);
  \fill (3,1.5) circle (0.4mm);
  \draw (3.,0.7) node {\small $-D_-(\pi)$};
  \draw (3.,1.8) node {\small $-D_+(\pi)$};

  \coordinate (A1p) at (5,0.5+0.5); \coordinate (B1p) at (3+5,1+0.5);
  \coordinate (A1m) at (5,2-0.5); \coordinate (B1m) at (3+5,1.5-0.5);
  \draw[line width=0.2mm, color=black!70!white] (A1m)--(B1m);
  \draw[line width=0.2mm, color=black!30!white] (A1p)--(B1p);
  \draw [dashed] (3+5,0)--(3+5,2.3);
  \node at (0+5,0) [below] {$0$};
  \node at (3+5,0) [below] {$\pi$};
  \draw [dashed] (1.5+5,0)--(1.5+5,1.3);
  \node at (1.5+5,0) [below] {$\omega_*'$};
  \draw (1.8+5,0.8) node {$-\dnod^+$};
  \draw (1.8+5,1.7) node {$-\dnod^-$};
  \fill (0+5,1) circle (0.4mm);
  \fill (0+5,1.5) circle (0.4mm);
  \draw (-0.3+5,0.7) node {\small $-D_-(0)$};
  \draw (-0.3+5,1.8) node {\small $-D_+(0)$};
  \fill (3+5,1) circle (0.4mm);
  \fill (3+5,1.5) circle (0.4mm);
  \draw (3.2+5,0.7) node {\small $-D_+(\pi)$};
  \draw (3.2+5,1.8) node {\small $-D_-(\pi)$};

  \coordinate (A2p) at (5*2,0.5+1); \coordinate (B2p) at (3+5*2,1+1);
  \coordinate (A2m) at (5*2,2-1); \coordinate (B2m) at (3+5*2,1.5-1);
  \draw[line width=0.2mm, color=black!70!white] (A2m)--(B2m);
  \draw[line width=0.2mm, color=black!30!white] (A2p)--(B2p);
  \draw [dashed] (3+5*2,0)--(3+5*2,2.3);
  \node at (0+5*2,0) [below] {$0$};
  \node at (3+5*2,0) [below] {$\pi$};
  \draw (2+5*2,1.6) node {$-\dnod^-$};
  \draw (1.5+5*2,0.4) node {$-\dnod^+$};
  \fill (0+5*2,1) circle (0.4mm);
  \fill (0+5*2,1.5) circle (0.4mm);
  \draw (-0.1+5*2,0.7) node {\small $-D_+(0)$};
  \draw (-0.1+5*2,1.8) node {\small $-D_-(0)$};
  
  \node at (1.5,2.5) {a)};
  \node at (1.5+5,2.5) {b)};
  \node at (1.5+5*2,2.5) {c)};
  
  \end{tikzpicture}
  \caption{Possible behavior of $-\dnod^+$ and $-\dnod^-$ as functions
    of $\omega'$.}
  \label{fig:upperextqom}
\end{center}
\end{figure}
We consider the three cases depicted in Figure~\ref{fig:upperextqom}:
\begin{itemize}
\item[a)] $-D_+(\pi) > -D_-(\pi)$, which corresponds to
  \[
  Q'-q' < -4q\frac{\cos\omega}{\sin^2\omega};
  \]
\item[b)] $-D_+(0) \geq -D_-(0)$ and
  $-D_+(\pi) \leq -D_-(\pi)$, which correspond to
  \[
  Q'-q' \geq 4q\frac{\cos\omega}{\sin^2\omega};
  \]
\item[c)] $-D_+(0) < -D_-(0)$, which corresponds to
  \[
  Q'-q' < 4q\frac{\cos\omega}{\sin^2\omega}.
  \]
\end{itemize}
Indeed case a) is impossible, because $Q'\geq q'$ and $\omega\in[0,\pi/2]$.
In case b) the maximal value of $\deltaext$ is attained for $e=1$ and
$\omega'$ such that $\dnod^+=\dnod^-$, that is when
\begin{equation}
\frac{p'\xi'}{1-\xi'^2} =
\frac{2q\cos\omega}{\sin^2\omega},
  \label{dnodeq_e1}
\end{equation}
with $\xi'=e'\cos\omega'$.
The solution of \eqref{dnodeq_e1} is\footnote{we discard the solution
  giving a value of $\xi'$ which is $<-1$.}
\begin{equation}
\xi_*'(q,\omega) = 
\frac{4q\cos\omega}{p'\sin^2\omega + \sqrt{p'^2\sin^4\omega+16q^2\cos^2\omega}},
\label{xipstar}
\end{equation}
for which we find
\[
\omega_*' = \arccos\frac{\xi_*'}{e'},
\]
if $\xi_*'\leq e'$.
In case c) equation $\dnod^+=\dnod^-$ has no real solution for
$\omega'$, that is $\xi_*'>e'$, and the maximal value of $\deltaext$ is
given by $-\dnod^+$ with $e=1$ and $\omega'=0$.

We introduce the cut-off
\begin{equation}
\hat{\xi}_*' = \min\{\xi_*',e'\}
\label{hatxistarp}
\end{equation}
and define
\[
  \hat{\omega}_*'= \arccos(\hat{\xi}_*'/e'). 
\]
From the previous discussion we obtain that the maximal value of
$\deltaext$ over ${\cal D}_1$ is given by
\[
\begin{split}
\uppextom(q,\omega) &= \min\left\{
\left.-\dnod^-\right|_{e=1,\omega'=\hat{\omega}'_*},
\left.-\dnod^+\right|_{e=1,\omega'=0}
 \right\}\cr
&= \min\Bigl\{\frac{2q}{1-\cos\omega} -
\frac{p'}{1-\hat{\xi}_*'}, \frac{2q}{1+\cos\omega}-q' \Bigr\}.\cr
\end{split}
\]

Finally, we consider the function $\deltalink$ and examine
$\deltalink^{(i)}$ and $\deltalink^{(ii)}$ separately.  We can not
select {\em a priori} a value of the eccentricity $e$ that maximizes
$\deltalink$, as we did before. However, we can do this for $\omega'$,
in fact by Lemma~\ref{der_om_omp_e} both $-\dnod^+$ and $\dnod^-$ are
non-increasing functions of $\omega'$, therefore also
$\deltalink^{(i)}$ is. Therefore, for each fixed value of $q,e,\omega$
the maximal value of $\deltalink^{(i)}$ is attained for $\omega'=0$.
By a similar argument we obtain that for each fixed value
of $q,e,\omega$ the maximal value of $\deltalink^{(ii)}$ is attained for
$\omega'=\pi$.

We observe that $\dnod^+ = -\dnod^-$ if and only if
\begin{equation}
q(1+e)(1-\xi'^2) = p'(1-e^2\cos^2\omega)
\label{dnp_eq_dnm_linked}
\end{equation}
where $\xi' = e'\cos\omega'$.
Both for $\omega'=0$ and for $\omega'=\pi$ equation
\eqref{dnp_eq_dnm_linked} becomes
\[
q(1+e)(1-e'^2) = p'(1-e^2\cos^2\omega),
\]
that gives the eccentricity\footnote{we discard the solution giving
  a negative value of $e$.}
\begin{equation}
e_*(q,\omega) = \frac{2(p'-q(1-e'^2))}
{q(1-e'^2) + \sqrt{q^2(1-e'^2)^2 + 4p'\cos^2\omega(p'-q(1-e'^2))}}
\label{estar}
\end{equation}
We observe that $e_*$ can attain negative values, or values larger than 1.
For this reason we introduce the cut-off
\begin{equation}
  \hat{e}_*(q,\omega)=\max\{0,\min\{e_*(q,\omega),1\}\}.
  \label{hatestar}
\end{equation}

\noindent By Lemma~\ref{der_om_omp_e}, $-\dnod^+$ is non-decreasing
with $e$, while $\dnod^-$ is non-increasing, whatever the value of
$\omega'$.
Let us set
\[
\begin{split}
  &D^{(i)}_+(0) = \left.\dnod^+\right|_{\omega'=0,e=0} = q' - q,
  \\
  &D^{(i)}_-(0) = \left.\dnod^-\right|_{\omega'=0,e=0} = Q' - q,
  \\
  &D^{(i)}_+(1) = \left.\dnod^+\right|_{\omega'=0,e=1} = q' - \frac{2q}{1+\cos\omega},
  \\
  &D^{(i)}_-(1) = \left.\dnod^-\right|_{\omega'=0,e=1} = Q' - \frac{2q}{1-\cos\omega}.
  \\
\end{split}
\]

\begin{figure}[h!]
\begin{center}
  \begin{tikzpicture}
  \coordinate (O) at (0,0);
  \coordinate (O1) at (0+5,0);
  \coordinate (O2) at (0+5*2,0);
  \coordinate (x) at (3.5,0);
  \coordinate (x1) at (3.5+5,0);
  \coordinate (x2) at (3.5+5*2,0);
  \coordinate (y) at (0,2.5);
  \coordinate (y1) at (0+5,2.5);
  \coordinate (y2) at (0+5*2,2.5);
  \draw [-latex] (O) -- (x) node [below] {$e$};
  \draw [-latex] (O) -- (y);
  \node [below] at (O) {$0$};
  \draw [-latex] (O1) -- (x1) node [below] {$e$};
  \draw [-latex] (O1) -- (y1);
  \node [below] at (O1) {$0$};
  \draw [-latex] (O2) -- (x2) node [below] {$e$};
  \draw [-latex] (O2) -- (y2);
  \node [below] at (O2) {$0$};

  \coordinate (Ap) at (5*2,0.5); \coordinate (Bp) at (3+5*2,1);
  \coordinate (Am) at (5*2,2); \coordinate (Bm) at (3+5*2,1.5);
  \draw[line width=0.2mm, color=black!30!white] (Am)--(Bm);
  \draw[line width=0.2mm, color=black!70!white] (Ap)--(Bp);
  \draw [dashed] (3+5*2,0)--(3+5*2,2.3);
  \node at (3+5*2,0) [below] {$1$};
  \draw (1.8+5*2,0.5) node {$-\dnod^+$};
  \draw (1.5+5*2,1.5) node {$\dnod^-$};
  \fill (3+5*2,1) circle (0.4mm);
  \fill (3+5*2,1.5) circle (0.4mm);
  \draw (3.1+5*2,0.7) node {\small $-D^{(i)}_+(1)$};
  \draw (3.2+5*2,1.8) node {\small $D^{(i)}_-(1)$};

  \coordinate (A1p) at (5,0.5+0.5); \coordinate (B1p) at (3+5,1+0.5);
  \coordinate (A1m) at (5,2-0.5); \coordinate (B1m) at (3+5,1.5-0.5);
  \draw[line width=0.2mm, color=black!30!white] (A1m)--(B1m);
  \draw[line width=0.2mm, color=black!70!white] (A1p)--(B1p);
  \draw [dashed] (3+5,0)--(3+5,2.3);
  \node at (3+5,0) [below] {$1$};
  \draw [dashed] (1.5+5,0)--(1.5+5,1.3);
  \node at (1.5+5,0) [below] {$e_*$};
  \draw (2.+5,0.8+0.9) node {$-\dnod^+$};
  \draw (2.1+5,1.6-0.8) node {$\dnod^-$};
  \fill (0+5,1) circle (0.4mm);
  \fill (0+5,1.5) circle (0.4mm);
  \draw (-0.1+5,0.7) node {\small $-D^{(i)}_+(0)$};
  \draw (-0.+5,1.8) node {\small $D^{(i)}_-(0)$};
  \fill (3+5,1) circle (0.4mm);
  \fill (3+5,1.5) circle (0.4mm);
  \draw (3.3+5,0.7) node {\small $D^{(i)}_-(1)$};
  \draw (3.2+5,1.8) node {\small $-D^{(i)}_+(1)$};

  \coordinate (A2p) at (0,0.5+1); \coordinate (B2p) at (3,1+1);
  \coordinate (A2m) at (0,2-1); \coordinate (B2m) at (3,1.5-1);
  \draw[line width=0.2mm, color=black!30!white] (A2m)--(B2m);
  \draw[line width=0.2mm, color=black!70!white] (A2p)--(B2p);
  \draw [dashed] (3,0)--(3,2.3);
  \node at (3,0) [below] {$1$};
  \draw (2,1.6) node {$-\dnod^+$};
  \draw (1,0.6) node {$\dnod^-$};
  \fill (0,1) circle (0.4mm);
  \fill (0,1.5) circle (0.4mm);
  \draw (0,0.7) node {\small $D^{(i)}_-(0)$};
  \draw (-0.1,1.8) node {\small $-D^{(i)}_+(0)$};

  \node at (1.5,2.5) {a)};
  \node at (1.5+5,2.5) {b)};
  \node at (1.5+5*2,2.5) {c)};
  
  \end{tikzpicture}
  \caption{Possible behavior of $-\dnod^+$ and $\dnod^-$ as functions
    of $e$.}
  \label{fig:upperlinkiqom}
\end{center}
\end{figure}

We consider the three cases
\begin{itemize}
\item[a)] $-D^{(i)}_+(0)>D^{(i)}_-(0)$, which corresponds to
  \[
  Q'+q' < 2q;
  \]
\item[b)] $-D^{(i)}_+(0)\leq D^{(i)}_-(0)$ and
  $-D^{(i)}_+(1)\geq D^{(i)}_-(1)$, which correspond to
  \[
  2q\leq Q'+q' \leq \frac{4q}{\sin^2\omega};
  \]
\item[c)] $-D^{(i)}_+(1)<D^{(i)}_-(1)$, which corresponds to
  \[
  Q'+q' > \frac{4q}{\sin^2\omega}.
  \]
\end{itemize}
Therefore, the maximal value of $\deltalink^{(i)}$ over ${\cal D}_1$ is given by
\[
\min\left\{
\left.\dnod^-\right|_{\omega'=0,e=\hat{e}_*},
\left.-\dnod^+\right|_{\omega'=0,e=1}
\right\}
=\min\left\{
Q' - \frac{q(1+\hat{e}_*)}{1-\hat{e}_*\cos\omega},
\frac{2q}{1+\cos\omega}-q'  \right\}.
\]

To compute a bound for $\deltalink^{(ii)}$ we observe that $\dnod^+$
is non-increasing with $e$, while $-\dnod^-$ is non-decreasing.  Let
us set
\[
\begin{split}
  &D^{(ii)}_+(0) = \left.\dnod^+\right|_{\omega'=\pi,e=0} = Q' - q,
  \\
  &D^{(ii)}_-(0) = \left.\dnod^-\right|_{\omega'=\pi,e=0} = q' - q,
  \\
  &D^{(ii)}_+(1) = \left.\dnod^+\right|_{\omega'=\pi,e=1} = Q' - \frac{2q}{1+\cos\omega},
  \\
  &D^{(ii)}_-(1) = \left.\dnod^-\right|_{\omega'=\pi,e=1} = q' - \frac{2q}{1-\cos\omega}.
  \\
\end{split}
\]

\begin{figure}[h!]
\begin{center}
  \begin{tikzpicture}
  \coordinate (O) at (0,0);
  \coordinate (O1) at (0+5,0);
  \coordinate (O2) at (0+5*2,0);
  \coordinate (x) at (3.5,0);
  \coordinate (x1) at (3.5+5,0);
  \coordinate (x2) at (3.5+5*2,0);
  \coordinate (y) at (0,2.5);
  \coordinate (y1) at (0+5,2.5);
  \coordinate (y2) at (0+5*2,2.5);
  \draw [-latex] (O) -- (x) node [below] {$e$};
  \draw [-latex] (O) -- (y);
  \node [below] at (O) {$0$};
  \draw [-latex] (O1) -- (x1) node [below] {$e$};
  \draw [-latex] (O1) -- (y1);
  \node [below] at (O1) {$0$};
  \draw [-latex] (O2) -- (x2) node [below] {$e$};
  \draw [-latex] (O2) -- (y2);
  \node [below] at (O2) {$0$};
  \coordinate (Ap) at (0,0.5+1); \coordinate (Bp) at (3,1+1);
  \coordinate (Am) at (0,2-1); \coordinate (Bm) at (3,1.5-1);
  \draw[line width=0.2mm, color=black!70!white] (Am)--(Bm);
  \draw[line width=0.2mm, color=black!30!white] (Ap)--(Bp);
  \draw [dashed] (3,0)--(3,2.3);
  \node at (3,0) [below] {$1$};
  \draw (2.,1.6) node {$-\dnod^-$};
  \draw (1.6,0.5) node {$\dnod^+$};
  \fill (0,1) circle (0.4mm);
  \fill (0,1.5) circle (0.4mm);
  \draw (0,0.7) node {\small $D^{(ii)}_+(0)$};
  \draw (-0.1,1.8) node {\small $-D^{(ii)}_-(0)$};

  \coordinate (A1p) at (5,0.5+0.5); \coordinate (B1p) at (3+5,1+0.5);
  \coordinate (A1m) at (5,2-0.5); \coordinate (B1m) at (3+5,1.5-0.5);
  \draw[line width=0.2mm, color=black!70!white] (A1m)--(B1m);
  \draw[line width=0.2mm, color=black!30!white] (A1p)--(B1p);
  \draw [dashed] (3+5,0)--(3+5,2.3);
  \node at (3+5,0) [below] {$1$};
  \draw [dashed] (1.5+5,0)--(1.5+5,1.3);
  \node at (1.6+5,0) [below] {$e_*$};
  \draw (1.+5,0.9+0.8) node {$\dnod^+$};
  \draw (0.9+5,1.6-0.7) node {$-\dnod^-$};
  \fill (0+5,1) circle (0.4mm);
  \fill (0+5,1.5) circle (0.4mm);
  \draw (-0.3+5,0.7) node {\small $-D^{(ii)}_-(0)$};
  \draw (-0.2+5,1.8) node {\small $D^{(ii)}_+(0)$};
  \fill (3+5,1) circle (0.4mm);
  \fill (3+5,1.5) circle (0.4mm);
  \draw (3.3+5,0.7) node {\small $D^{(ii)}_+(1)$};
  \draw (3.2+5,1.8) node {\small $-D^{(ii)}_-(1)$};

  \coordinate (A2p) at (5*2,0.5); \coordinate (B2p) at (3+5*2,1);
  \coordinate (A2m) at (5*2,2); \coordinate (B2m) at (3+5*2,1.5);
  \draw[line width=0.2mm, color=black!70!white] (A2m)--(B2m);
  \draw[line width=0.2mm, color=black!30!white] (A2p)--(B2p);
  \draw [dashed] (3+5*2,0)--(3+5*2,2.3);
  \node at (3+5*2,0) [below] {$1$};
  \draw (1.5+5*2,0.5) node {$-\dnod^-$};
  \draw (1+5*2,1.6) node {$\dnod^+$};
  \fill (3+5*2,1) circle (0.4mm);
  \fill (3+5*2,1.5) circle (0.4mm);
  \draw (3.1+5*2,0.7) node {\small $-D^{(ii)}_-(1)$};
  \draw (3.2+5*2,1.8) node {\small $D^{(ii)}_+(1)$};

  \node at (1.5,2.5) {a)};
  \node at (1.5+5,2.5) {b)};
  \node at (1.5+5*2,2.5) {c)};
  
  \end{tikzpicture}
  \caption{Possible behavior of $\dnod^+$ and $-\dnod^-$ as functions
    of $e$.}
  \label{fig:upperlinkiiqom}
\end{center}
\end{figure}

We consider the three cases
\begin{itemize}
\item[a)] $D^{(ii)}_+(0)<-D^{(ii)}_-(0)$, which corresponds to
  \[
  Q'+q'<2q;
  \]
\item[b)] $D^{(ii)}_+(0)\geq -D^{(ii)}_-(0)$ and
  $D^{(ii)}_+(1)\leq -D^{(ii)}_-(1)$, which correspond to
  \[
  2q\leq Q'+q' \leq \frac{4q}{\sin^2\omega};
  \]
\item[c)] $D^{(ii)}_+(1)>-D^{(ii)}_-(1)$, which corresponds to
  \begin{equation}
  Q'+q' > \frac{4q}{\sin^2\omega}.
  \label{estargt1}
  \end{equation}
\end{itemize}
Therefore, the maximal value of $\deltalink^{(ii)}$ over ${\cal D}_1$ is given by
\[
\min\left\{ \left.\dnod^+\right|_{\omega'=\pi,e=\hat{e}_*},
\left.-\dnod^-\right|_{\omega'=\pi,e=1}
\right\}
=\min\left\{ Q' - \frac{q(1+\hat{e}_*)}{1+\hat{e}_*\cos\omega},
\frac{2q}{1-\cos\omega}-q'
\right\},
\]
where $\hat{e}_*$ is defined as in \eqref{hatestar}.
We conclude that
the maximal value of $\deltalink$ over ${\cal D}_1$ is given by
\small
\[
\begin{split}
  \upplinkom(q,\omega) &= \max\left\{ \min\Bigl\{
  Q' - \frac{q(1+\hat{e}_*)}{1-\hat{e}_*\cos\omega},
  \frac{2q}{1+\cos\omega}-q' 
  \Bigr\}, \ \min\Bigl\{
  Q' - \frac{q(1+\hat{e}_*)}{1+\hat{e}_*\cos\omega},
  \frac{2q}{1-\cos\omega}-q' \Bigr\} \right\}\\
  &=\min\left\{  Q' - \frac{q(1+\hat{e}_*)}{1+\hat{e}_*\cos\omega},
  \frac{2q}{1-\cos\omega}-q' \right\},
  \end{split}
\]
\normalsize 
where the last equality holds because
$\omega\in[0,{\pi}/{2}]$.
In particular, the maximal value is attained by $\deltalink^{(ii)}$.

We conclude the proof of relation \eqref{ubom} using \eqref{deltanod_max}
and the optimal bounds
\[
  \deltaint(q,e,\omega,\omega')\leq \uppintom(q,\omega),\quad
  \deltaext(q,e,\omega,\omega')\leq \uppextom(q,\omega),\quad
  \deltalink(q,e,\omega,\omega')\leq \upplinkom(q,\omega).
\]

\end{proof}

In Figure~\ref{maxdeltanodqom} we show the graphic of
$\max_{(e,\omega')\in{\cal D}_1}\deltanod(q,\omega)$ for different
values of $e'$, with $q'=1$.  Using Remark~\ref{rem:sym} we can extend
by symmetry the graphic of $\max_{(e,\omega')\in{\cal
    D}_1}\deltanod(q,\omega)$ to the set $(0,\qmax]\times[0,2\pi)$.

\begin{figure}[t!]
  \centerline{\includegraphics[width=7.5cm]{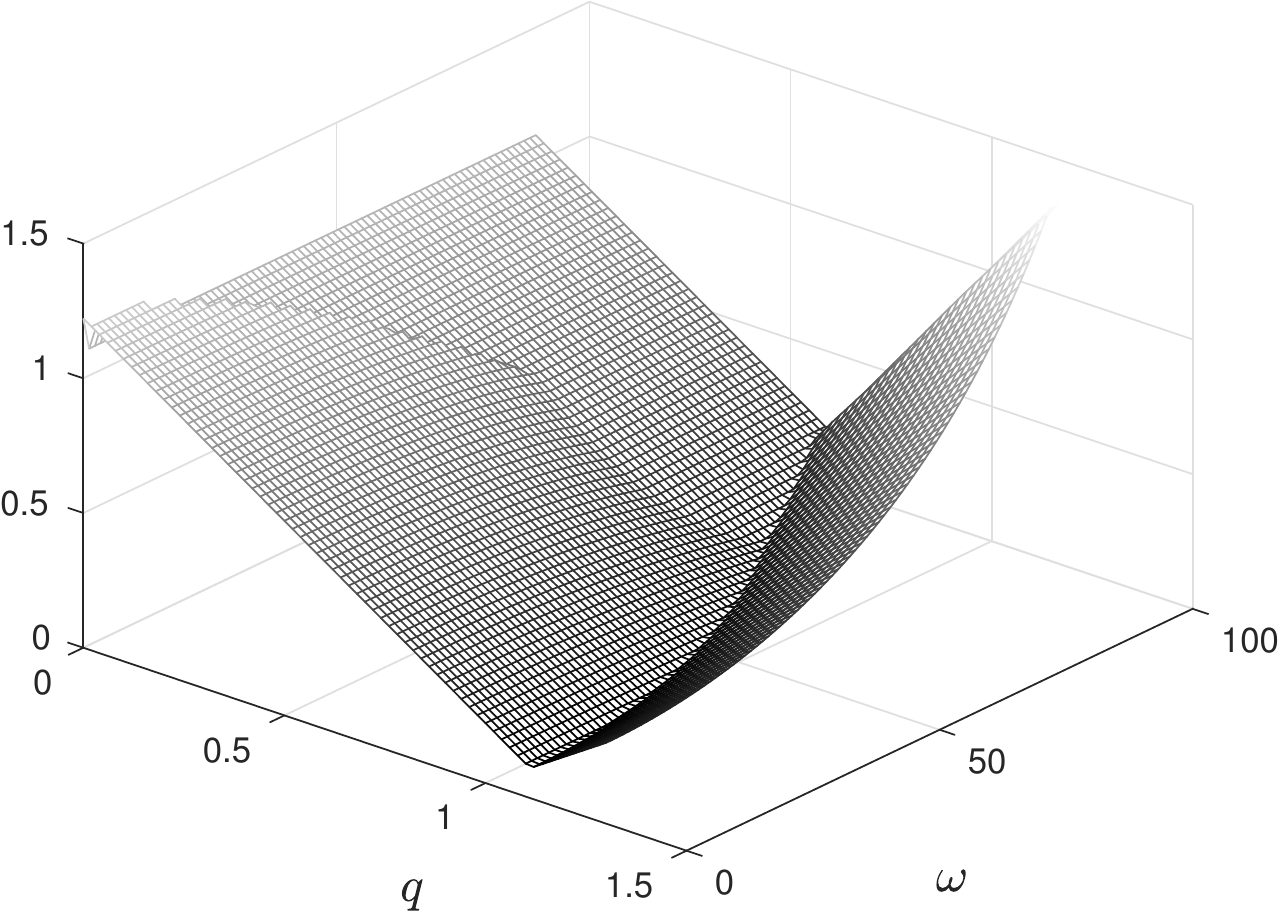}
    \hskip 0.2cm
    \includegraphics[width=7.5cm]{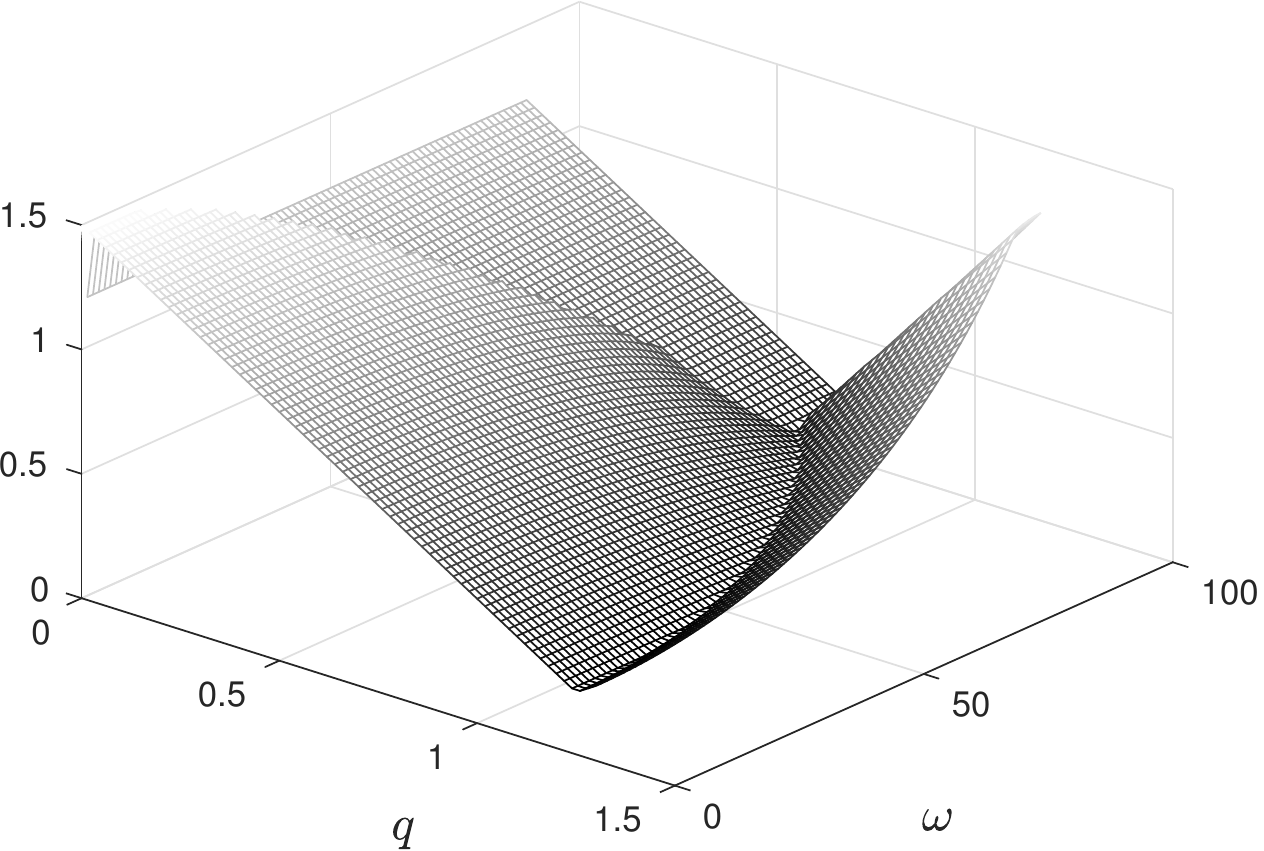}}
    \centerline{\includegraphics[width=7.5cm]{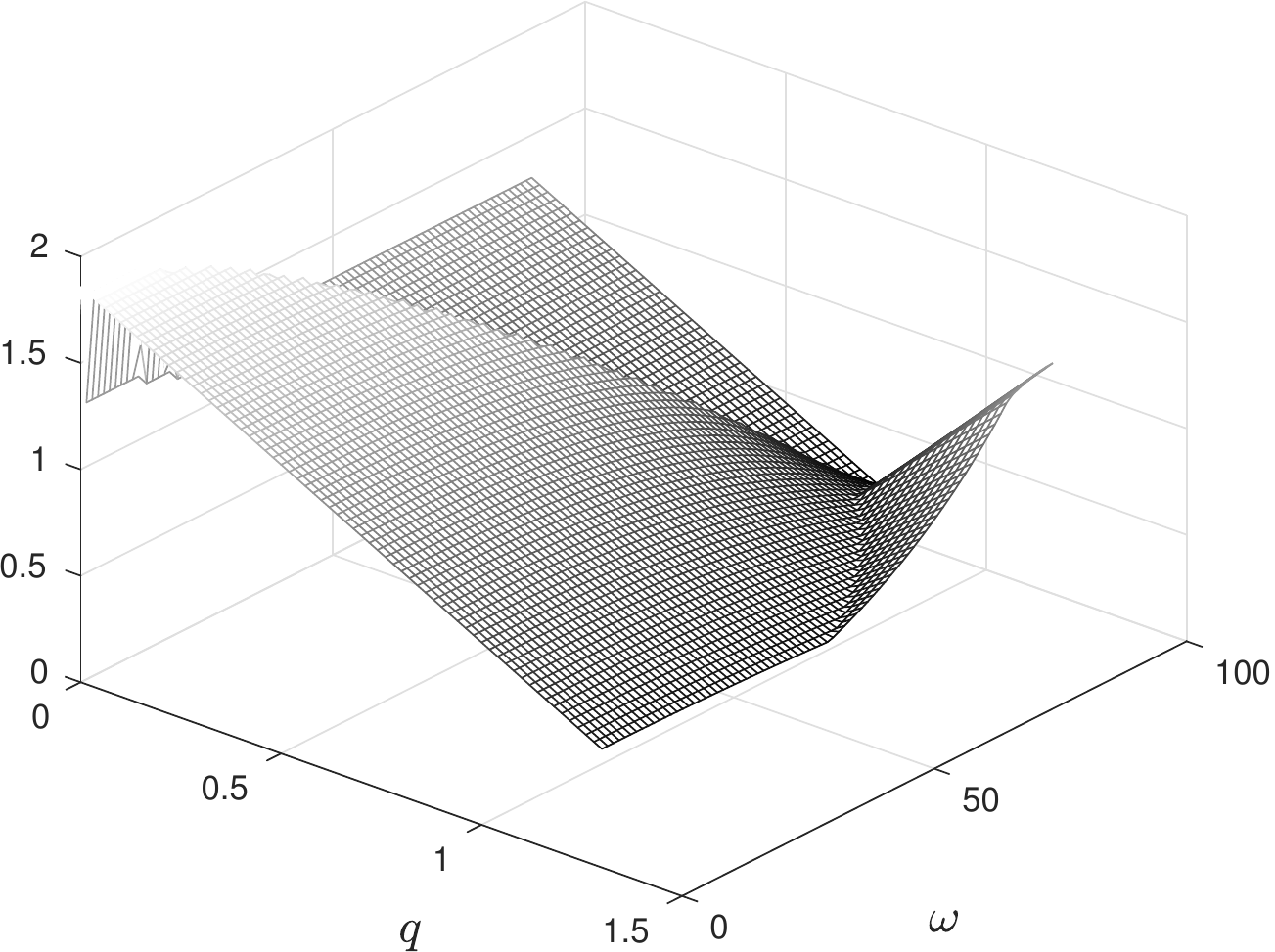}
    \hskip 0.2cm
    \includegraphics[width=7.5cm]{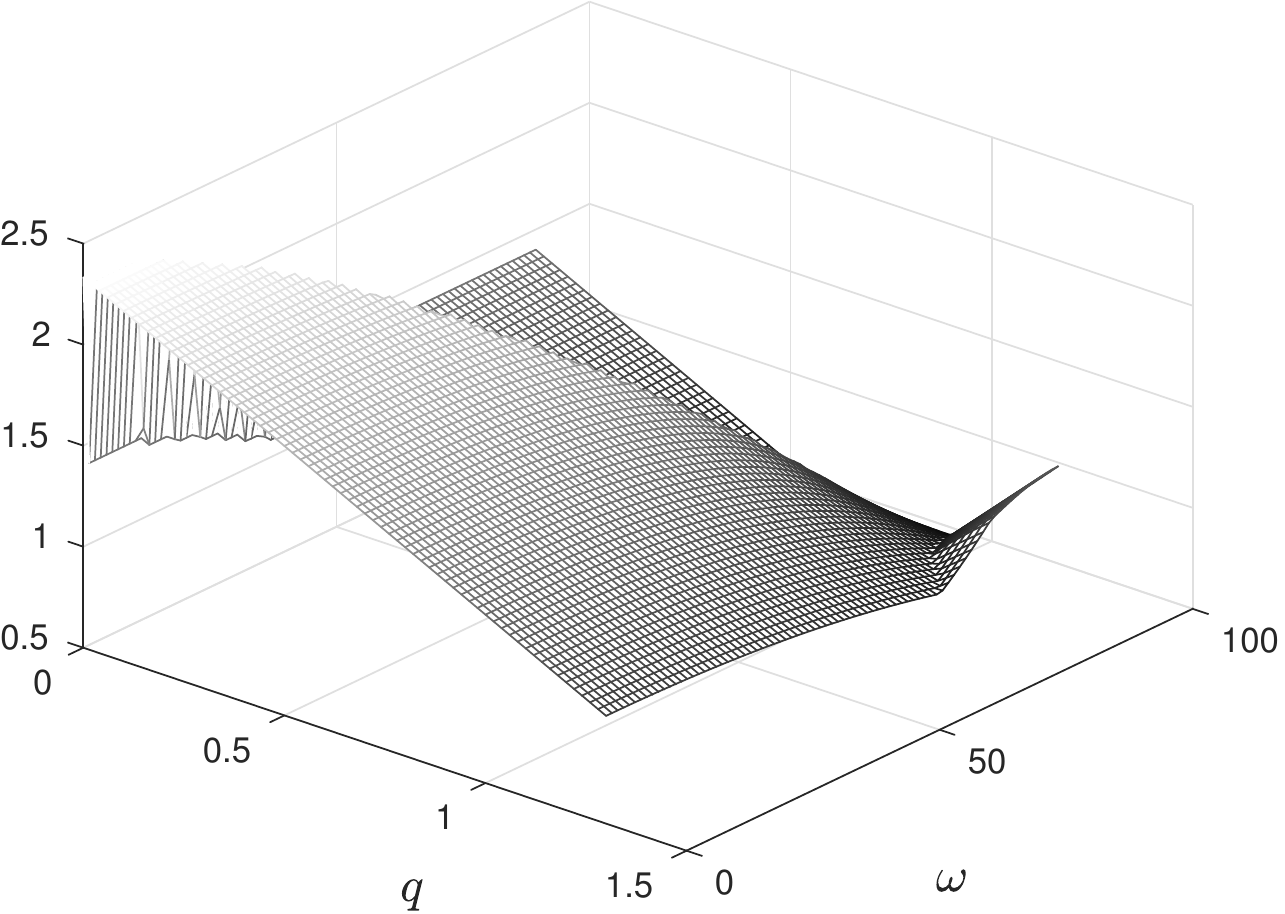}}
\caption{Graphics of $(q,\omega)\mapsto \max_{(e,\omega')\in{\cal D}_1}
  \deltanod(q,\omega)$ for $e'=0.1$ (top left), $e'= 0.2$ (top right),
  $e'=0.3$ (bottom left), $e'=0.4$ (bottom right). Here we set $q'=1$.}
\label{maxdeltanodqom}
\end{figure}

\begin{proposition}
The zero level curves of $\lowintom, \lowextom, \uppintom, \uppextom$
divide the plane $(q,\omega)$ into regions where different linking
configurations are allowed.  Moreover, $\uppextom(q,\omega)=0$ is a
piecewise smooth curve with only one component, a portion of which is
a vertical segment with $q={p'}/{2}$.
\label{prop:ILEC_qom}
\end{proposition}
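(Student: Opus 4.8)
The plan is to prove the two assertions separately: the first is a bookkeeping of the bounds already established, while the second requires making the zero set of $\uppextom$ explicit. For the first assertion I would read off, for fixed $(q,\omega)\in{\cal D}_2$, which linking configurations are realised as $(e,\omega')$ ranges over ${\cal D}_1$. By part i) of Lemma~\ref{lem:omegabounds} the configuration is forced to be internal for every $(e,\omega')$ iff $\lowintom>0$, and forced to be external iff $\lowextom>0$; the two regions are disjoint since $\lowintom>0$ forces $q<q'$ while $\lowextom>0$ forces $q>Q'\ge q'$. Since, by Proposition~\ref{prop:deltanod_qom_gen}, $\uppintom$ and $\uppextom$ are the maxima over ${\cal D}_1$ of $\deltaint$ and $\deltaext$, an internal (resp. external) configuration is attained by some $(e,\omega')$ iff $\uppintom>0$ (resp. $\uppextom>0$); together with the inclusions $\lowintom\le\uppintom$ and $\lowextom\le\uppextom$, which I would verify directly, these two further curves subdivide the non-forced region $\{\lowintom\le0,\ \lowextom\le0\}$. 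On that region part ii) of Lemma~\ref{lem:omegabounds} produces a crossing for some $(e,\omega')$, and since the configurations are mutually exclusive (Lemma~\ref{lem:delta}), using the monotonicities of Lemma~\ref{der_om_omp_e} to flip the sign of the vanishing nodal distance while keeping the other fixed yields a linked configuration; hence crossing and linked are available throughout $\{\lowintom\le0,\ \lowextom\le0\}$. Consequently the four zero curves are precisely the boundaries across which the set of admissible configurations changes.

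For the second assertion, write $\uppextom=\min\{A,B\}$ with
\[
A(q,\omega)=\frac{2q}{1-\cos\omega}-\frac{p'}{1-\hat{\xi}'_*},\qquad
B(q,\omega)=\frac{2q}{1+\cos\omega}-q',
\]
so that $\{B=0\}$ is the smooth graph $q=\tfrac{q'}{2}(1+\cos\omega)$. The decisive computation is the behaviour of $A$ on the line $q=p'/2$: inserting $q=p'/2$ into the defining relation \eqref{dnodeq_e1} for $\xi_*'$ gives $\tfrac{\xi_*'}{1+\xi_*'}=\tfrac{\cos\omega}{1+\cos\omega}$, hence $\xi_*'=\cos\omega$ and $\hat{\xi}'_*=\min\{\cos\omega,e'\}$. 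Therefore on $q=p'/2$ one has $A=0$ exactly when $\cos\omega\le e'$, i.e. for $\omega\in[\arccos e',\pi/2]$, and there $B\ge0$ because $\tfrac{q'}{2}(1+\cos\omega)\le p'/2$; this produces the claimed vertical segment. In the complementary regime $\cos\omega>e'$ the cut-off gives $\hat{\xi}'_*=e'$, so $A=\tfrac{2q}{1-\cos\omega}-Q'$ vanishes on the smooth arc $q=\tfrac{Q'}{2}(1-\cos\omega)$, which joins the vertical segment continuously at $(p'/2,\arccos e')$ since $\tfrac{Q'}{2}(1-e')=p'/2$.

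To conclude I would show that for each fixed $\omega\in[0,\pi/2]$ the function $q\mapsto\uppextom(q,\omega)$ is strictly increasing: $B$ is manifestly increasing, and differentiating $A$ with $\partial_q\xi_*'>0$ (read off from \eqref{dnodeq_e1}) reduces positivity of $\partial_q A$ to the elementary inequality $\tfrac{\cos\omega}{1+\cos\omega}\cdot\tfrac{(1+\xi_*')^2}{1+\xi_*'^2}<1$ on the relevant range. Monotonicity makes $\{\uppextom=0\}$ the graph of a single function $\phi(\omega)=\max\{q_A(\omega),q_B(\omega)\}$, where $q_A,q_B$ are the zeros of $A,B$, over the connected interval $[0,\pi/2]$; hence it is a single connected, piecewise smooth curve. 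Since $q_B-q_A=\tfrac{q'+Q'}{2}(\cos\omega-e')$ has the sign of $\cos\omega-e'$, comparing the branches gives $\phi(\omega)=\tfrac{q'}{2}(1+\cos\omega)$ for $\omega\le\arccos e'$ and $\phi(\omega)\equiv p'/2$ for $\omega\ge\arccos e'$, with a single corner at $\omega=\arccos e'$. The main obstacle is the global bookkeeping of the cut-off $\hat{\xi}'_*$: one must verify that the regime boundary $\xi_*'=e'$ is crossed in the correct order, so that the pieces of $\{A=0\}$ glue up and no spurious component survives — in particular the arc $q=\tfrac{Q'}{2}(1-\cos\omega)$ does \emph{not} belong to $\{\uppextom=0\}$ because $B<0$ along it — and it is the monotonicity in $q$ that ultimately rules this out.
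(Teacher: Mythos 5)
Your treatment of the second assertion is correct and largely parallels the paper's: both rest on the decomposition of $\uppextom$ as a minimum, on the key computation that $q=p'/2$ forces $\xi_*'=\cos\omega$ (you derive it from \eqref{dnodeq_e1}, the paper by substitution into \eqref{xipstar}), and on the junction point $(p'/2,\arccos e')$. Where you genuinely diverge is in how the global structure is obtained: the paper proves $F_1\gtrless 0$ according to $q\gtrless p'/2$ and shows directly that $\uppextom<0$ for $q<p'/2$, so that the zero set reduces to $\{2q/(1+\cos\omega)-q'=0,\ q>p'/2\}$ plus the vertical segment; you instead prove strict monotonicity of $q\mapsto\uppextom(q,\omega)$ (your inequality $\frac{\cos\omega}{1+\cos\omega}\cdot\frac{(1+\xi_*')^2}{1+\xi_*'^2}<1$ is exactly what implicit differentiation of \eqref{dnodeq_e1} produces, and it does hold for $\omega\in(0,\pi/2]$), which makes the zero set the graph of $\phi=\max\{q_A,q_B\}$ and yields connectedness and piecewise smoothness at one stroke. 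That is a clean alternative, and it also automatically discards the spurious arc $q=\frac{Q'}{2}(1-\cos\omega)$. For the first assertion, however, the paper does something you do not: it proves Lemma~\ref{lem:ulinkom}, identifying $\{\upplinkom=0\}$ with $\{\lowintom=0\}\cup\{\lowextom=0\}$ by explicit computation with the cut-off $\hat e_*$, which is what guarantees that no fifth curve is needed.

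Two points in your outline need repair. First, your replacement for Lemma~\ref{lem:ulinkom} — ``flip the sign of the vanishing nodal distance while keeping the other fixed'' — is a step that can fail as stated: the crossing supplied by part ii) of Lemma~\ref{lem:omegabounds} may sit at a corner of ${\cal D}_1$ where the relevant monotonicity is not available. Concretely, take $\omega\in(\arccos e',\pi/2)$ and $q=\frac{q'}{2}(1+\cos\omega)$; this lies in the interior of $\{\lowintom\le 0,\lowextom\le 0\}$, and at $(e,\omega')=(1,0)$ one has $\dnod^+=0<\dnod^-$. But by Lemma~\ref{der_om_omp_e} the point $(1,0)$ is the global minimum of $\dnod^+$ over ${\cal D}_1$, so no admissible perturbation makes $\dnod^+$ negative; a linked configuration does exist (at $(e,\omega')=(1,\pi)$ one gets $\dnod^-=\lowintom<0$ and $\dnod^+=Q'-q'>0$), but finding it requires a separate argument, which is precisely what the paper's Lemma~\ref{lem:ulinkom} supplies. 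Second, your blanket claim that $\cos\omega>e'$ gives $\hat{\xi}_*'=e'$ is false off the arc: $\xi_*'$ is increasing in $q$ and tends to $0$ as $q\to 0^+$ for fixed $\omega\in(0,\pi/2]$, so for small $q$ one has $\hat{\xi}_*'=\xi_*'<e'$. What you actually need is only that $\xi_*'\ge e'$ \emph{along} the arc $q=\frac{Q'}{2}(1-\cos\omega)$ when $\cos\omega\ge e'$; this is true and follows in one line from \eqref{dnodeq_e1} (on the arc, $\xi_*'\ge e'$ is equivalent to $\frac{\cos\omega}{1+\cos\omega}\ge\frac{e'}{1+e'}$), but you flag this bookkeeping as ``the main obstacle'' without carrying it out, and the proof is incomplete until it is done.
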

\begin{figure}[h!]
  \begin{center}
    \includegraphics[width=10cm]{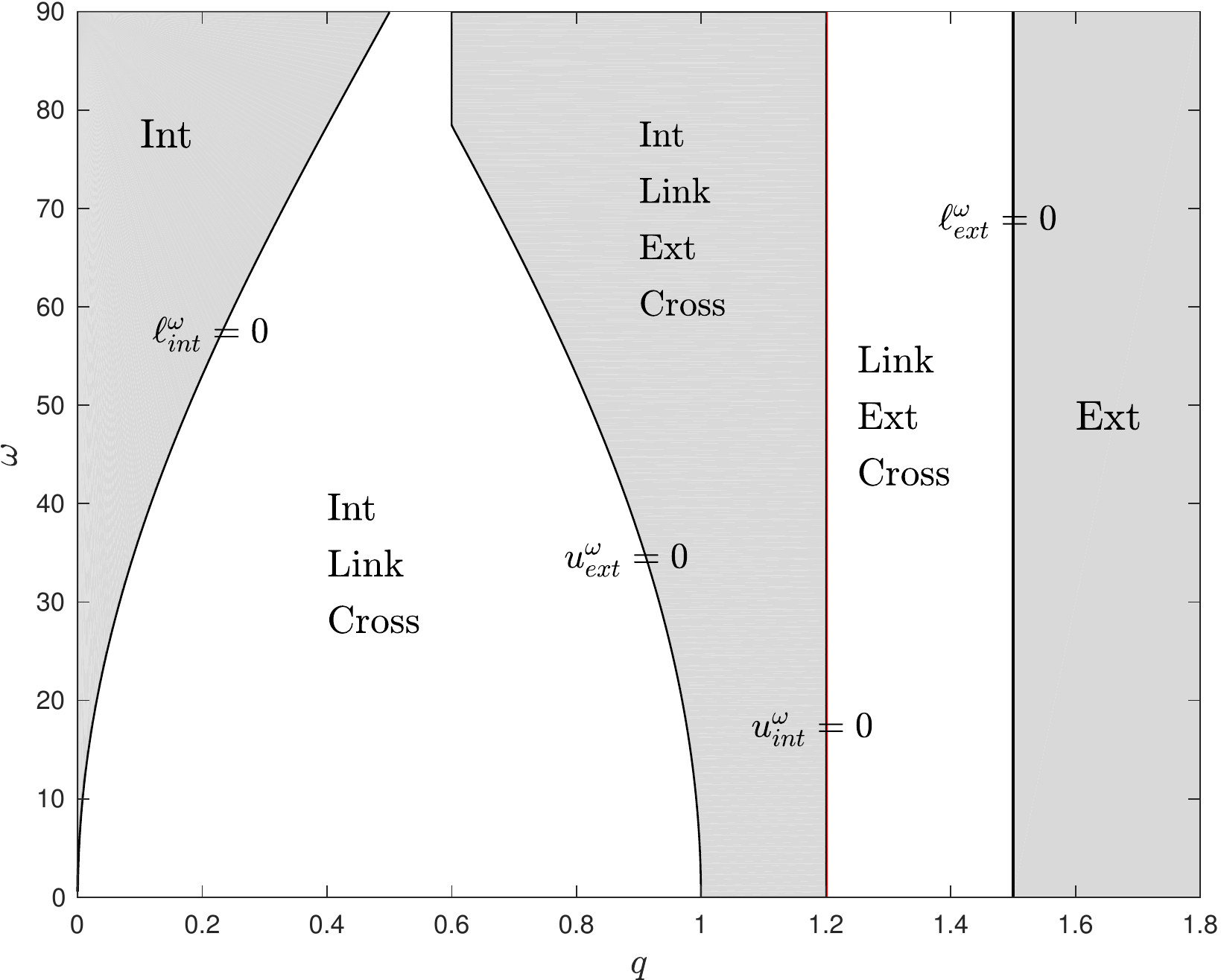}
  \end{center}
  \caption{Regions with different linking configurations in the plane
    $(q,\omega)$ for $q'=1$ and $e'=0.2$.}
  \label{fig:qom}
\end{figure}

\begin{proof}
  By Lemma~\ref{lem:omegabounds}, given $(q,\omega)\in{\cal D}_2$, we
  have internal nodes for each $(e,\omega')\in{\cal D}_1$ if and only
  if $\lowintom(q,\omega)>0$, therefore the region where only internal
  nodes are possible is delimited on the right by the curve
  $\lowintom(q,\omega)=0$.  In a similar way, the region with only
  external nodes is delimited on the left by $\lowextom(q,\omega)=0$.
  
  Moreover, we have internal nodes for some choice of $(e,\omega')$ if
  and only if $\uppintom(q,\omega) > 0$. In a similar way, we have
  external nodes (resp. linked orbits) for some choice of
  $(e,\omega')$ if and only if $\uppextom(q,\omega) > 0$
  (resp. $\upplinkom(q,\omega)> 0$).

  We prove the following result.
  \begin{lemma}
    The curve
    \[
    \upplinkom(q,\omega)= 0,
    \]
    delimiting the region where linked orbits are possible,
    has two connected components, and coincides with the curve
    \[
    \{\lowintom(q,\omega)= 0\}\cup \{\lowextom(q,\omega)= 0\}.
    \]
    \label{lem:ulinkom}
  \end{lemma}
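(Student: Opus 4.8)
The starting point is to read \eqref{def:upplinkom} as $\upplinkom=\min\{A,B\}$, where $A(q,\omega)=Q'-\frac{q(1+\hat e_*)}{1+\hat e_*\cos\omega}$ and $B(q,\omega)=\frac{2q}{1-\cos\omega}-q'$ denote the two entries of the minimum. The key elementary observation is the algebraic identity $B=-\lowintom$, so that $B$ vanishes precisely on $\{\lowintom=0\}$. Since on $\{\upplinkom=0\}$ one has $A,B\ge 0$ with $\min\{A,B\}=0$, the whole statement reduces to two tasks: identifying the locus $\{A=0\}$ with $\{\lowextom=0\}$, and checking the sign conditions that keep $\min\{A,B\}$ equal to $0$ rather than negative along each of the two curves.

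For the forward inclusion I would argue separately on the two candidate curves. On $\{\lowextom=0\}$, that is $q=Q'$, a direct substitution in \eqref{estar} gives $p'-q(1-{e'}^2)=-p'e'<0$, hence $e_*<0$, $\hat e_*=0$ and $A=Q'-q=0$; moreover $B=\frac{2Q'}{1-\cos\omega}-q'>0$ because $Q'>q'$ and $\omega\in[0,\pi/2]$, so $\upplinkom=0$. On $\{\lowintom=0\}$, that is $q=\frac{q'(1-\cos\omega)}{2}$, we have $B=0$ and it only remains to check $A\ge 0$.

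The inequality $A\ge0$ along $\{\lowintom=0\}$ is the main obstacle, and I would obtain it by first proving that $e_*\ge 1$ there, so that the cut-off forces $\hat e_*=1$ and $A=Q'-\frac{2q}{1+\cos\omega}=Q'-\frac{q'(1-\cos\omega)}{1+\cos\omega}\ge Q'-q'\ge 0$. To show $e_*\ge1$ I substitute $q=\frac{q'(1-\cos\omega)}{2}$ into \eqref{estar}; writing $k=1-{e'}^2$ and $c=\cos\omega$, and using that $2p'-3qk>0$ on the range, the claim $e_*\ge1$ becomes, after squaring, a quadratic inequality in the variable $a=1-e'$. A factorization shows this quadratic has roots $a=1+c$ and $a=\frac{2}{1-c}$, both at least $1$, with nonnegative leading coefficient $(1-c)$; since $a=1-e'<1$ lies below both roots, the quadratic is positive, which gives $e_*\ge1$. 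The degenerate value $c=1$ (i.e. $\omega=0$) is absorbed by the infinite-value convention.

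For the reverse inclusion I assume $\upplinkom=0$ with $B>0$, which forces $A=0$, and dispose of the three cut-off regimes. If $\hat e_*=0$ then $A=Q'-q=0$ gives $\lowextom=0$. If $\hat e_*=e_*\in(0,1)$, then at $e=e_*$ and $\omega'=\pi$ equation \eqref{dnp_eq_dnm_linked} yields $\dnod^+=-\dnod^-$, i.e. $Q'-\rpiu=\rmeno-q'$; combined with $A=0$ (so $\rpiu=Q'$) this forces $\rmeno=q'$, hence $\rpiu/\rmeno=Q'/q'$, which is impossible since the left side equals $\frac{1-e_*\cos\omega}{1+e_*\cos\omega}\le1$ while $Q'/q'>1$ for $e'\in(0,1)$. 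If $\hat e_*=1$ we are in case c) of the $\deltalink^{(ii)}$ analysis, where $A\ge B$, so $A=0$ would give $B\le0$, contradicting $B>0$. Hence $\{\upplinkom=0\}=\{\lowintom=0\}\cup\{\lowextom=0\}$. The two-component claim is then immediate: $\{\lowintom=0\}$ is the arc $q=\frac{q'(1-\cos\omega)}{2}$, contained in $q\le q'/2$, while $\{\lowextom=0\}$ is the vertical segment $q=Q'>q'$, so the two curves are disjoint.
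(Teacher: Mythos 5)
Your proof is correct, and its skeleton is the same as the paper's: read \eqref{def:upplinkom} as $\min\{A,B\}$, note $B=-\lowintom$, identify $\{A=0\}$ with $\{q=Q'\}$ by a case analysis on the cut-off $\hat e_*$, and check the sign of the other entry along each curve. Two of your sub-arguments, however, genuinely diverge from the paper's, and it is worth recording what each buys. (i) For the sign of $A$ along $\{\lowintom=0\}$, the paper never needs to know $\hat e_*$: it uses the monotonicity of $e\mapsto\frac{q(1+e)}{1+e\cos\omega}$ to write
\[
A \;>\; q'-\frac{q(1+\hat e_*)}{1+\hat e_*\cos\omega}
\;=\;\frac{2q}{1-\cos\omega}-\frac{q(1+\hat e_*)}{1+\hat e_*\cos\omega}\;\geq\; 0,
\]
valid for every value of the cut-off, whereas you pin down $\hat e_*=1$ on that curve by the quadratic factorization in $a=1-e'$. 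Your computation is correct (and yields the extra information that the maximizing eccentricity there is $e=1$), but it costs a squaring step whose sign hypotheses ($2p'-3qk>0$, reality of the radicand) must be tracked, all of which the paper's two-line bound avoids. (ii) For ruling out $\hat e_*=1$ when $A=0$ and $B>0$, the paper derives $\tfrac12 Q'\leq q\leq Q'$ from $A=0$ and contradicts it numerically against $4q\leq(Q'+q')\sin^2\omega<2Q'$; you instead observe that $e_*\geq 1$ means the monotone curves $\dnod^+|_{\omega'=\pi}$ and $-\dnod^-|_{\omega'=\pi}$ cross only at $e\geq1$, hence $A\geq B$ at $e=1$ (case c) of the $\deltalink^{(ii)}$ analysis), so $A=0$ forces $B\leq0$. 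Your argument is structurally cleaner and reuses the case analysis already established in the proof of Proposition~\ref{prop:deltanod_qom_gen}, at the price of being less self-contained. The remaining pieces (the ratio argument $\rpiu/\rmeno\leq 1<Q'/q'$ for $e_*\in(0,1)$ versus the paper's explicit negative solution for $e_*$, the substitution $p'-Q'(1-e'^2)=-p'e'<0$ versus the criterion $q\geq q'/(1-e')$, and the explicit disjointness argument for the two components) are equivalent up to cosmetics.
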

  
  \begin{proof}
    If $(q,\omega)$ is such that
    \begin{equation}
      Q'-\frac{q(1+\hat{e}_*)}{1+\hat{e}_*\cos\omega}= 0,
      \label{firstcaseom}
    \end{equation}
    then
    \[
    \frac{2q}{1-\cos\omega} - q' > \frac{2q}{1+\cos\omega} - Q' =
    \frac{2q}{1+\cos\omega} -
    \frac{q(1+\hat{e}_*)}{1+\hat{e}_*\cos\omega} \geq 0,
    \]
    because $\frac{q(1+e)}{1+e\cos\omega}$ is increasing with $e$.
    We prove that \eqref{firstcaseom} is equivalent to
    \begin{equation}
      Q'-q=0
      \label{straightline}
    \end{equation}
    From relations \eqref{estar}, \eqref{hatestar} we deduce
    that $\hat{e}_*(q,\omega)=0$ if and only if $p'\leq q(1-e'^2)$, that is if
    \begin{equation}
      q \geq \frac{q'}{1-e'}.
      \label{qminstraight}
    \end{equation}
    Since $q=Q'$ fulfills \eqref{qminstraight}, then
    \eqref{straightline} implies \eqref{firstcaseom}.  To prove the
    converse first we observe that
    relation \eqref{firstcaseom} implies
    \begin{equation}
      \frac{1}{2}Q'\leq q\leq Q'.
      \label{qconfin}
    \end{equation}
    If $\hat{e}_*=0$, i.e. if $e_*\leq 0$, then relations
    \eqref{firstcaseom} and \eqref{straightline} are the
    same. Otherwise $e_*>0$. If $e_*\in(0,1]$ then $e_*$ is defined so
that it satisfies $\dnod^+=-\dnod^-$ with $\omega'=\pi$, therefore using
\eqref{firstcaseom} we have
    \begin{equation}
      Q'-\frac{q(1+e_*)}{1+e_*\cos\omega} =
      \frac{q(1+e_*)}{1-e_*\cos\omega} - q' = 0,
      \label{doppiaeq}
    \end{equation}
    from which we obtain
    \[
    e_* = \frac{q'-Q'}{(q'+Q')\cos\omega}<0,
    \]
    giving a contradiction.  If $e_*>1$ then \eqref{estargt1}
    holds. Therefore, either $\cos\omega=1$ and relations
    \eqref{firstcaseom} and \eqref{straightline} are the same, or we
    have $4q\leq 2Q'\sin^2\omega< 2Q'$, that contradicts
    \eqref{qconfin}.
    
\noindent We conclude that, in this case, $\upplinkom(q,\omega) = Q' -
q$, and the curve $\upplinkom(q,\omega)=0$ has a connected component
corresponding to $\lowextom(q,\omega)=0$.

\smallbreak
    On the other hand, if $(q,\omega)$ is such that
    \[
    \frac{2q}{1-\cos\omega} - q' =0,
    \]
    then
    \[
    Q'-\frac{q(1+\hat{e}_*)}{1+\hat{e}_*\cos\omega} >
    q'-\frac{q(1+\hat{e}_*)}{1+\hat{e}_*\cos\omega} =
    \frac{2q}{1-\cos\omega} -
    \frac{q(1+\hat{e}_*)}{1+\hat{e}_*\cos\omega} \geq 0.
    \]
    Therefore, in this case, $\upplinkom(q,\omega) =
    \frac{2q}{1-\cos\omega} - q'$, and the curve
    $\upplinkom(q,\omega)=0$ has another connected component
    coinciding with $\lowintom(q,\omega)=0$.
    
\end{proof}

Now we describe the shape of the curve $\uppextom(q,\omega)=0$.
First we observe that
\begin{equation}
  \frac{2q}{1-\cos\omega} - \frac{p'}{1-\hat{\xi}_*'} = \max\{ F_1, F_2\},
\label{F1F2}
\end{equation}
where
\[
  F_1(q,\omega) = \frac{2q}{1-\cos\omega} -\frac{p'}{1-\xi_*'},
\qquad  F_2(q,\omega) = \frac{2q}{1-\cos\omega} - \frac{p'}{1-e'}.
\]
The relation
\begin{equation}
F_1(q,\omega) = 0
\label{F1eq0}
\end{equation}
corresponds to
\begin{equation}
  q=p'/2.
\label{qpp}
\end{equation}
In fact $\xi_*'$ is defined so that it satisfies $\dnod^+=\dnod^-$ with $e=1$,
therefore, if \eqref{F1eq0} holds, we have
\[
\frac{2q}{1-\cos\omega} -\frac{p'}{1-\xi_*'} = \frac{2q}{1+\cos\omega}
-\frac{p'}{1+\xi_*'} = 0,
\]
from which we obtain \eqref{qpp}.
On the other hand, substituting $q=p'/2$ into \eqref{xipstar} we obtain
\begin{equation}
\xi_*' = \frac{2\cos\omega}{\sin^2\omega + \sqrt{\sin^4\omega +
    4\cos^2\omega}} = \frac{2\cos\omega}{1-\cos^2\omega +
  \sqrt{(1-\cos^2\omega)^2 + 4\cos^2\omega}} = \cos\omega,
\label{xistarcasopart}
\end{equation}
that yields \eqref{F1eq0}.

\noindent Since $F_1(q,\pi/2) = 2q-p'$, by continuity we obtain
\begin{eqnarray}
&&F_1(q,\omega)>0 \qquad\mbox{if}\ \ q>p'/2,\label{F1gt0}\\
&&F_1(q,\omega)<0 \qquad\mbox{if}\ \ q<p'/2,\label{F1lt0}
\end{eqnarray}
for each $\omega\in[0,\pi/2]$.
We also note that
\begin{equation}
F_2(q,\omega) < 0 \qquad\mbox{if}\ \ q<p'/2, \ \omega\in[\arccos e',\pi/2].
\label{F2lt0}
\end{equation}

Using \eqref{F1F2}, \eqref{F1gt0} we obtain that
\[
 \frac{2q}{1-\cos\omega} - \frac{p'}{1-\hat{\xi}_*'}
 >0\qquad\mbox{if}\ \ q>p'/2,
\]
so that, for such values of $q$, $\uppextom=0$ corresponds to
$\frac{2q}{1+\cos\omega}-q'=0$.

On the other hand, we can prove that
\[
\uppextom(q,\omega) < 0 \qquad\mbox{if}\ \ q<p'/2,
\]
therefore the curve $\uppextom=0$ does not intersect the region with
$q<p'/2$.  In fact, by \eqref{F1lt0},
\eqref{F2lt0} we obtain that
\[
\max\{F_1,F_2\}<0 \qquad\mbox{if}\ \ \omega\in [\arccos e',\pi/2],
\]
and we can easily check that, for such values of $q$,
\[
 \frac{2q}{1+\cos\omega}-q'<0 \qquad\mbox{if}\ \ \omega \in [0,\arccos
 e').
 \]

Finally, we prove that, if $q=p'/2$, we have
\[
\uppextom(q,\omega) = 0\qquad \mbox{if and only if}\qquad
\omega\in[\arccos e',\pi/2].
\]
Assume that $q=p'/2$.  If $\omega\in(\arccos e',\pi/2]$ then
 $\uppextom(q,\omega) = 0$.  In fact, in this case, from
 \eqref{xistarcasopart} we obtain $\hat{\xi}_*' = \cos\omega$, so that
\[
\qquad \frac{2q}{1-\cos\omega}-\frac{p'}{1-\hat{\xi}_*'} =
0
\]
and
\[
\frac{2q}{1+\cos\omega} - q'= \frac{p'}{1+\cos\omega} - \frac{p'}{1+e'} >  0.
\]
On the other hand, if $\omega\in[0\arccos e')$ then
\[
\uppextom(q,\omega)<0,
\]
because in this case
\[
\frac{2q}{1+\cos\omega} - q' <  0.
\]
Finally, if $\omega=\arccos e'$, we have $\hat{\xi}_*'=\cos\omega =e'$, so that
\begin{equation}
\frac{2q}{1-\cos\omega}-\frac{p'}{1-\hat{\xi}_*'} =
\frac{2q}{1+\cos\omega} - q' = 0.
\label{AeqB}
\end{equation}  

We conclude that the curve $\uppextom(q,\omega)=0$ is composed by the
vertical segment $\{(q,\omega): q=p'/2, \ \omega\in[\arccos
  e',\pi/2]\}$ and by the curve $\{(q,\omega)\in{\cal
  D}_2:\frac{2q}{1+\cos\omega}-q'=0, \ q>p'/2\}$. These two portions
of the curve $\uppextom=0$ meet in the point $(q,\omega)=(p'/2,\arccos
e')$ and therefore they form a unique connected component.
The proof of Proposition~\ref{prop:ILEC_qom} is concluded.
       
\end{proof}

\begin{remark}
  There can not exist $(q,\omega)\in{\cal D}_2$ such that we have
  linked orbits for each $(e,\omega')\in{\cal D}_1$, unlike the case
  of internal and external nodes.
\end{remark}
\begin{proof}
  If $\deltalink(q,e,\omega,\omega')>0$ for each $(e,\omega')\in{\cal
    D}_1$ then in particular $\upplinkom(q,\omega)>0$, and this
  corresponds to $\lowintom,\lowextom<0$ at $(q,\omega)$, so that, by
  {\em ii)} of Lemma~\ref{lem:omegabounds}, there exists $(e,\omega')$
  corresponding to a crossing configuration, that yields a
  contradiction.

\end{proof}  

\noindent In Figure~\ref{fig:qom} we show the possible linking configurations
for $q'=1$ and $e'=0.2$.

\smallbreak
In the next statement we present the optimal lower and upper bounds for
$\deltanod$ as functions of $(q,e)$.
\begin{proposition}
Let ${\cal D}_3 =
\{(\omega,\omega'): 0\leq \omega\leq {\pi}/{2}, 0\leq \omega' <
\pi\}$, ${\cal D}_4 = \{(q,e): 0< q\leq \qmax, 0\leq e\leq 1\}$.  For
each choice of $(q,e)\in{\cal D}_4$ we have
  \begin{align}
    &\displaystyle \min_{(\omega,\omega')\in{\cal D}_3} \deltanod =
  \max\Bigl\{0, \lowinte, \lowexte\Bigr\}, \label{lbe}\\
  &\displaystyle \max_{(\omega,\omega')\in{\cal D}_3} \deltanod =
  \max\{\upplinke,|p' - q(1+e)|\},\label{ube}
  \end{align}
where\footnote{here $\lowinte(q,1)=-\infty$, and $\upplinke(q,1)=Q'-q$.}
\begin{align*}
  &\lowinte(q,e) =q'-\frac{q(1+e)}{1-e},\qquad \lowexte(q,e) = q-Q',\\
&\upplinke(q,e) = \min\left\{\frac{q(1+e)}{1-e}-q', Q'-q\right\}.
\end{align*}
\label{prop:deltanod_qe_gen}
\end{proposition}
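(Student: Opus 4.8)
The plan is to keep the architecture of the proof of Proposition~\ref{prop:deltanod_qom_gen}, now fixing $(q,e)$ and letting $(\omega,\omega')$ vary over $\mathcal{D}_3$ (extrema at $\omega'=\pi$ being understood as limits within $\mathcal{D}_3$, as allowed by our convention); the decisive simplification is that, $e$ being frozen, the cut-offs $\hat{e}_*,\hat{\xi}'_*$ never appear and every extremum sits either at a corner of $\mathcal{D}_3$ or at an interior crossing $\dnod^+=\dnod^-$. For the lower bound \eqref{lbe}, since $\rpiup,\rmenop$ depend only on $\omega'$ and $\rpiu,\rmeno$ only on $\omega$, I would first bound $\dnod^+\ge\min_{\omega'}\rpiup-\max_\omega\rpiu$ and likewise for $\dnod^-$, obtaining $\deltaint\ge\min\{q'-p,q'-Q\}=q'-Q=\lowinte$ (using $Q\ge p$) and $\deltaext\ge\min\{q-Q',p-Q'\}=q-Q'=\lowexte$ (using $p\ge q$), both values being attained at $(\omega,\omega')=(0,\pi)$. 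By Lemma~\ref{lem:delta} this shows that, for fixed $(q,e)$, one has internal (resp.\ external) nodes throughout $\mathcal{D}_3$ iff $\lowinte>0$ (resp.\ $\lowexte>0$). I would then prove the crossing statement: if $\lowinte\le0$ and $\lowexte\le0$, i.e.\ $q'\le Q$ and $q\le Q'$, then $q'>p$ and $p>Q'$ cannot both hold (they force $q'>p>Q'\ge q'$), so at least one of $q'\le p$, $p\le Q'$ is true; in the first case the value ranges $\rpiup\in[q',Q']$ and $\rpiu\in[q,p]$ overlap (as $q\le Q'$), so $\dnod^+$ vanishes somewhere by continuity, while in the second $\rmenop\in[q',Q']$ and $\rmeno\in[p,Q]$ overlap (as $q'\le Q$), so $\dnod^-$ vanishes. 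Hence $\deltanod=0$ is attained, and the three cases ($\lowinte>0$; $\lowexte>0$; both $\le0$, with at most one positive by the same chain of inequalities) yield \eqref{lbe}.

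For the upper bound \eqref{ube} I would use \eqref{deltanod_max} and maximise $\deltaint,\deltaext,\deltalink$ separately. The link term is routine: by Lemma~\ref{der_om_omp_e} both entries of $\deltalink^{(ii)}=\min\{\dnod^+,-\dnod^-\}$ are non-increasing in $\omega$ and non-decreasing in $\omega'$, so the maximum is at the corner $(0,\pi)$ and equals $\min\{Q'-q,Q-q'\}=\upplinke$; symmetrically $\deltalink^{(i)}$ is maximised at $(\pi/2,0)$ with value $\min\{p-q',Q'-p\}\le\upplinke$ (since $p\le Q$ and $q\le p$). Thus $\max_{\mathcal{D}_3}\deltalink=\upplinke$.

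The genuine obstacle is $\deltaint$ and $\deltaext$, whose two entries have opposite monotonicities in each variable, so their maxima are interior crossings rather than corners. I would first record a sign constraint: adding the defining inequalities $p'(1+\xi)>p(1+\xi')$ and $p'(1-\xi)>p(1-\xi')$ of an internal configuration gives $p'>p$, and symmetrically an external one forces $p'<p$; hence at most one of $\deltaint,\deltaext$ can be positive. When $p'>p$, for fixed $\omega$ the map $\min\{\dnod^+,\dnod^-\}$ (with $\dnod^+$ non-decreasing and $\dnod^-$ non-increasing in $\omega'$) peaks where $\dnod^+=\dnod^-$; writing $a=p'/(1-\xi'^2)$ and $b=p/(1-\xi^2)$, this crossing satisfies $(a-b)(a+b)=p'a-pb$, so its common value is $a-b=\dfrac{(p'-p)\,a}{a+b-p}\le p'-p$, because $b\ge p$, with equality at $\xi=\xi'=0$ (i.e.\ $\omega=\omega'=\pi/2$). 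Thus $\max\deltaint=p'-p$, and by the symmetric computation $\max\deltaext=p-p'$ when $p>p'$; in either regime the dominated term is $\le0$. Therefore $\max_{\mathcal{D}_3}\{\deltaint,\deltaext\}=|p'-p|=|p'-q(1+e)|$, and taking the maximum with the link term gives \eqref{ube}.
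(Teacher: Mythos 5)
Your proof is correct, and while it keeps the paper's overall architecture (the decomposition \eqref{deltanod_max}, corner maximization of $\deltalink^{(i)},\deltalink^{(ii)}$ via Lemma~\ref{der_om_omp_e}, and the three-case analysis for the lower bound), it handles the heart of the upper bound --- the maxima of $\deltaint$ and $\deltaext$ --- by a genuinely different and more elementary route. The paper proves Lemma~\ref{lem:symfunc}: it solves explicitly for the crossing $\xi_*'(\xi)$, substitutes to obtain $D_*(\xi)$, shows $D_*$ is even, locates its critical points by Lagrange multipliers, and computes the limits $\xi\to\pm 1$ to get the value $(p'-p)/2$ in the unfavourable regime; that negative value is what the paper uses to discard the dominated term. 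You replace all of this with two observations: first, adding the two inequalities defining an internal (resp.\ external) configuration forces $p'>p$ (resp.\ $p'<p$), so at most one of $\deltaint,\deltaext$ can ever be positive --- this kills the dominated term with no computation at all; second, at a crossing $\dnod^+=\dnod^-$ the identity $(a-b)(a+b-p)=(p'-p)a$, with $a=p'/(1-\xi'^2)$ and $b=p/(1-\xi^2)\geq p$, bounds the common value $a-b$ by $p'-p$, with equality exactly at $\xi=\xi'=0$, i.e.\ $\omega=\omega'=\pi/2\in{\cal D}_3$. This avoids the radical formula for $\xi_*'$, the parity argument and the multiplier computation; what the paper's longer route buys is the precise supremum of $D$ over the whole square $(-1,1)^2$ in both regimes (a self-contained, reusable statement), which your argument neither gives nor needs. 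Two points you should state explicitly to make the write-up airtight: the crossing in $\xi'$ exists in $(-1,1)$ for each fixed $\xi$ (since $\dnod^\pm\to+\infty$ as $\xi'\to\mp 1$), and the min is bounded by the crossing value even when that crossing falls outside the admissible range $[-e',e']$ --- this is harmless because you use the crossing value only as an upper bound and the equality point $\xi=\xi'=0$ is always admissible; also the degenerate case $p'=p$ should be noted (both terms are then $\leq 0$ with value $0$ attained, consistent with $|p'-q(1+e)|=0$). Finally, your lower-bound crossing argument via overlapping value ranges of $\rpiup,\rpiu$ (or $\rmenop,\rmeno$) differs from the paper's, which evaluates $\deltaint,\deltaext$ at $(\omega,\omega')=(\pi/2,\pi/2)$ and invokes continuity; both are valid, and yours pins down which nodal distance vanishes.
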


\begin{proof}

We prove some preliminary facts.
\begin{lemma}
The following properties hold:
\begin{itemize}
\item[i)] for each $(q,e)\in {\cal D}_4$ and $(\omega,\omega')\in{\cal
  D}_3$ we have
  \begin{eqnarray}
  &&\deltaint(q,e,\omega,\omega') \geq \deltaint(q,e,0,\pi) = \lowinte(q,e),\label{b1e}\\ 
  &&\deltaext(q,e,\omega,\omega') \geq \deltaext(q,e,0,\pi) = \lowexte(q,e),\label{b2e} 
  \end{eqnarray}
  therefore, given $(q,e)\in{\cal D}_4$, we have internal
(resp. external) nodes for each $(\omega,\omega')\in{\cal D}_3$ if and only if
$\lowinte(q,e)>0$ (resp. $\lowexte(q,e)>0$);

\item[ii)] if $(q,e)$ is such that $\lowinte(q,e)\leq 0$ and
  $\lowexte(q,e)\leq 0$, then there exists $(\omega, \omega')\in{\cal
  D}_3$ such that $\dnod^+\dnod^- = 0$.
\end{itemize}
\label{lem:ebounds}
\end{lemma}
\begin{proof}
  We prove the bounds \eqref{b1e}, \eqref{b2e} by observing that for
  each $(q,e)\in{\cal D}_4$ and $(\omega,\omega')\in{\cal D}_3$ we
  have
  \[\begin{split}
  \deltaint&\geq \min\Bigl\{\min_{\omega'\in[0,\pi]}\rpiup -
  \max_{\omega\in[0,\pi/2]}\rpiu, \min_{\omega'\in[0,\pi]}\rmenop -
  \max_{\omega\in[0,\pi/2]}\rmeno\Bigr\}\cr &=\min\{\rpiup|_{\omega'=0} -
  \rpiu|_{\omega=\pi/2}, \rmenop|_{\omega'=\pi} - \rmeno|_{\omega=0}\} =
  q' - \frac{q(1+e)}{1-e}\cr
  \end{split}\]
  and
  \[\begin{split}
  \deltaext&\geq \min\Bigl\{\min_{\omega\in[0,\pi/2]}\rpiu - \max_{\omega'\in[0,\pi]}\rpiup,
  \min_{\omega\in[0,\pi/2]}\rmeno - \max_{\omega'\in[0,\pi]}\rmenop\Bigr\}\cr
  &=
  \min\{\rpiu|_{\omega=0} - \rpiup|_{\omega'=\pi},
  \rmeno|_{\omega=\pi/2} - \rmenop|_{\omega'=0}\} = q - Q'.\cr
  \end{split}\]
  We conclude the proof of {\em i)} using properties {\em a)}, {\em b)} in
  Lemma~\ref{lem:delta}.
To prove {\em ii)} we note that
\[
\deltaint(q,e,\pi/2,\pi/2) = q'-q, \qquad
\deltaext(q,e,\pi/2,\pi/2) = q-q'.
\]
Therefore, either they are both zero and there is a crossing for
$(\omega,\omega')=(\pi/2,\pi/2)$, or they are different from zero and
opposite and, since we are assuming that $\lowinte,\lowexte\leq 0$ at
$(q,e)$, by continuity there exists $(\omega,\omega')\in{\cal D}_3$
corresponding to a crossing configuration.

\end{proof}
We also prove the following result.
\begin{lemma}
Let us consider the function
\[
D(\xi,\xi';p,p') = \min\Bigl\{\frac{p'}{1+\xi'} -
  \frac{p}{1+\xi}, \frac{p'}{1-\xi'} - \frac{p}{1-\xi} \Bigr\}
\]
defined for $(\xi,\xi')\in{\cal D} := (-1,1)\times(-1,1)$, depending on the
parameters $p,p'>0$.  Then we have
  \[
  \sup_{(\xi,\xi')\in{\cal D}}D(\xi,\xi';p,p') = \left\{
  \begin{array}{ll}
     p'-p = D(0,0;p,p') &\mbox{ if }\  p'\geq p,\cr
    \displaystyle\frac{p'-p}{2} = \limsup_{(\xi,\xi')\to\pm(1,1)}D(\xi,\xi';p,p')  &\mbox{ if }\  p'< p.\cr
    \end{array}\right. 
  \]
\label{lem:symfunc}
\end{lemma}
\begin{proof}
Let us set
  \[
  D^+(\xi,\xi';p,p')=\frac{p'}{1+\xi'} - \frac{p}{1+\xi} \ \quad\mbox{and}\ \quad
  D^-(\xi,\xi';p,p')=\frac{p'}{1-\xi'} - \frac{p}{1-\xi} .
  \]
  For each $\xi\in(-1,1)$, $D^+$ is a non-increasing
  function of $\xi'$, while $D^-$ is non-decreasing.
  Moreover,
\[
\lim_{\xi'\to -1^+}D^+(\xi,\xi';p,p') = +\infty, \qquad \lim_{\xi'\to
  1^-}D^+(\xi,\xi';p,p') = \frac{p'}{2} - \frac{p}{1+\xi},
\]
and
\[
\lim_{\xi'\to -1^+}D^-(\xi,\xi';p,p') = \frac{p'}{2} - \frac{p}{1-\xi}, \qquad
\lim_{\xi'\to 1^-}D^-(\xi,\xi';p,p') = +\infty.
\]
Therefore, for each $\xi\in(-1,1)$, there exists a unique value of $\xi' = \xi_*'(\xi)\in(-1,1)$
such that
\begin{equation}
D^+(\xi,\xi_*'(\xi);p,p') = D^-(\xi,\xi_*'(\xi);p,p').
\label{xipdef}
\end{equation}
Its expression is given by
\[
  \xi_*'(\xi) =
\frac{2p\xi}{\sqrt{p'^2(1-\xi^2)^2 + 4p^2\xi^2} + p'(1-\xi^2)}.
\]
Moreover, for each $\xi\in(-1,1)$, the maximum value of the function
\[
(-1,1)\ni\xi'\mapsto D(\xi,\xi';p,p') =
\min\{D^+(\xi,\xi';p,p'),D^-(\xi,\xi';p,p')\}
\]
is attained at
$\xi_*'(\xi)$, see Figure~\ref{fig:maxintext}.
Substituting into $D(\xi,\xi';p,p')$ we obtain
\[
\begin{split}
  D_*(\xi;p,p') &:= D^+(\xi,\xi_*'(\xi);p,p')\cr
  & = \frac{p'\bigl(\sqrt{p'^2(1-\xi^2)^2 + 4p^2\xi^2} + p'(1-\xi^2) \bigr)}
  {\sqrt{p'^2(1-\xi^2)^2 + 4p^2\xi^2} + p'(1-\xi^2) + 2p\xi} - \frac{p}{1+\xi},\cr
  \end{split}
\]
where we have also used \eqref{xipdef}.  The function
$\xi\mapsto\xi_*'(\xi)$ is odd, so that $\xi\mapsto D_*(\xi;p,p')$ is
even, in fact
\[
D_*(-\xi;p,p') = D^+(-\xi,-\xi_*'(\xi);p,p') =
D^-(\xi,\xi_*'(\xi);p,p') = D_*(\xi;p,p').
\]

We compute the stationary points of $D_*$ in $(\xi,\xi')\in{\cal D}$
fulfilling the condition $D^+=D^-$ by Lagrange's multiplier method.
These points satisfy the relations
\[
\left\{
\begin{split}
  &(1-\lambda)\nabla_{(\xi,\xi')}D^+ = -\lambda \nabla_{(\xi,\xi')}D^-,\\
  &D^+ = D^-,\\
  \end{split}
\right.
\]
for some $\lambda\in\R$, so that the determinant
\[
\det\left[
  \begin{array}{cc}
    \frac{p}{(1+\xi)^2} &-\frac{p}{(1-\xi)^2}\cr
    -\frac{p'}{(1+\xi')^2} &\frac{p'}{(1-\xi')^2}\cr
  \end{array}\right] =
    4pp'\frac{(\xi'-\xi)(1-\xi\xi')}{(1-\xi^2)^2(1-\xi'^2)^2}
\]
must vanish when we set $\xi'=\xi_*'(\xi)$.
This happens for $\xi=0$ or for $p'=p$.


To conclude the proof of this lemma we evaluate $D_*$ at
$\xi=0$ and compute the limit of $D_*$ for $\xi\to 1^-$:
\[
D_*(0;p,p') = p'-p, \qquad \lim_{\xi\to 1^-}D_*(\xi;p,p') = \frac{p'-p}{2}.
\]
Using the fact that $D_*$ is even we see that
\begin{itemize}
\item[i)] if $p'\geq p$, then $p'-p$ is the maximal value of $D_*$
  over $(-1,1)$, attained at $\xi=0$;
\item[ii)] if $p'< p$, then $(p'-p)/2$ is the supremum of $D_*$
  over $(-1,1)$, attained in the limit for $\xi\to 1^-$ and for
  $\xi\to -1^+$.
\end{itemize}

\end{proof}

We continue the proof of Proposition~\ref{prop:deltanod_qe_gen}.

{\em Lower bound:} we prove relation \eqref{lbe} observing that, by {\em i)} of
Lemma~\ref{lem:ebounds}, if $\lowinte(q,e)>0$ we can have only
internal nodes. Therefore $\min_{(\omega,\omega')\in{\cal D}_3}\deltanod(q,e) =
\min_{(\omega,\omega')\in{\cal D}_3}\deltaint(q,e) = \lowinte(q,e)$ and
$\deltaext(q,e,\omega,\omega')$, $\deltalink(q,e,\omega,\omega') <0$
for each $(\omega,\omega')\in{\cal D}_3$. In particular we have
$\lowexte(q,e)<0$.
In a similar way, if $\lowexte(q,e)>0$ we can have only external
nodes, therefore $\min_{(\omega,\omega')\in{\cal D}_3}\deltanod(q,e) = \min_{(\omega,\omega')\in{\cal
    D}_3}\deltaext(q,e) = \lowexte(q,e)$ and
$\deltaint(q,e,\omega,\omega')$, $\deltalink(q,e,\omega,\omega') <0$
for each $(\omega,\omega')\in{\cal D}_3$. In particular we have
$\lowinte(q,e)<0$.

Finally, if $\lowinte(q,e)\leq 0$ and $\lowexte(q,e)\leq 0$, by
{\em ii)} of Lemma~\ref{lem:ebounds} there exists $(\omega,\omega')\in{\cal
  D}_3$ corresponding to a crossing configuration, therefore
$\min_{(\omega,\omega')\in{\cal D}_3}\deltanod(q,e)=0$.
The previous discussion yields relation \eqref{lbe}.

  
{\em Upper bound:} given $(q,e)\in{\cal D}_4$ we can consider
$\dnod^+, \dnod^-$ as functions of $\xi=e\cos\omega,
\xi'=e'\cos\omega'$, with $\xi\in[0,e]$, $\xi'\in[-e',e']$.  From
Lemma~\ref{lem:symfunc} we obtain that the maximal value of
$\deltaint$ over ${\cal D}_3$ is
\begin{equation}
  \uppinte(q,e) =
  \left\{
  \begin{array}{ll}
    p'-p &\mbox{ if } p'\geq p\cr
    m_{\rm int}(q,e) &\mbox{ if } p'< p\cr
  \end{array}\right.
  \label{uppinte}
\end{equation}
for some $m_{\rm int}<(p'-p)/2<0$.
On the other hand, for each $\xi\in(-1,1)$ we have
\begin{equation}
\sup_{\xi'\in(-1,1)}\min\{-\dnod^+,-\dnod^-\} =
-\sup_{\xi'\in(-1,1)}\min\{\dnod^+,\dnod^-\},
\label{maxintext}
\end{equation}
see Figure~\ref{fig:maxintext}.
\begin{figure}[h!]
\centerline{\includegraphics[width=8cm]{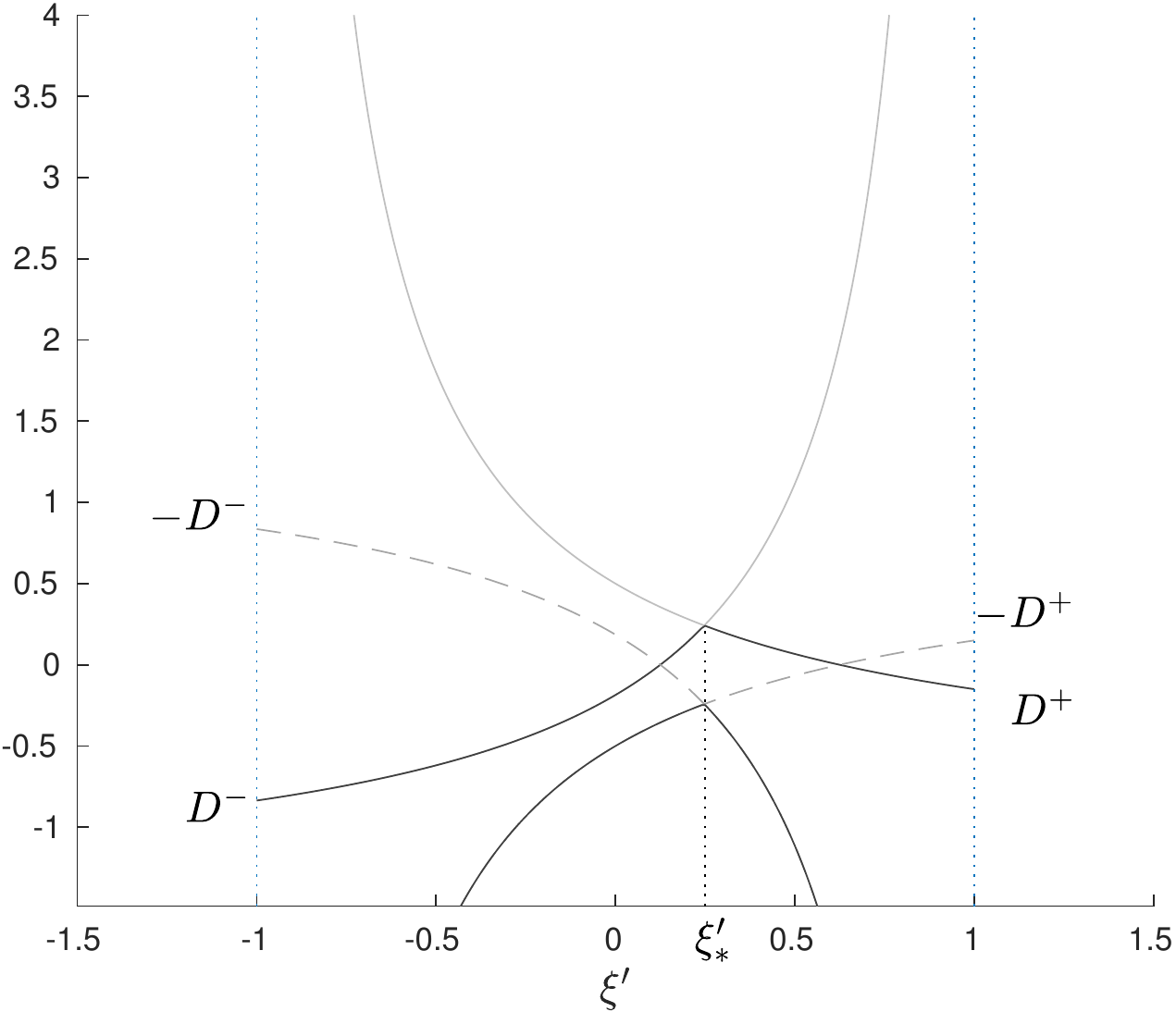}}
  \caption{Illustration of relation \eqref{maxintext}. Here $D^\pm$
    denote $\dnod^\pm$ regarded as functions of $\xi, \xi'$.}
  \label{fig:maxintext}
\end{figure}

\noindent Thus we conclude that the maximal value of $\deltaext$ over
${\cal D}_3$ is
\begin{equation}
\uppexte(q,e) =
\left\{
\begin{array}{ll}
 p-p' &\mbox{ if } p'\leq p \cr
 m_{\rm ext}(q,e)  &\mbox{ if } p'>p\cr  
  \end{array}
\right.,
\label{uppexte}
\end{equation}
for some $m_{\rm ext}<(p-p')/2 < 0$.
Therefore, for each $(q,e)\in{\cal D}_4$ we obtain
\begin{equation}
\max\{\max_{(\omega,\omega')\in{\cal D}_3}\deltaint,
\max_{(\omega,\omega')\in{\cal D}_3}\deltaext\} = |p'-q(1+e)|.
\label{maxmax}
\end{equation}

\begin{figure}[t!]
  \centerline{\epsfig{figure=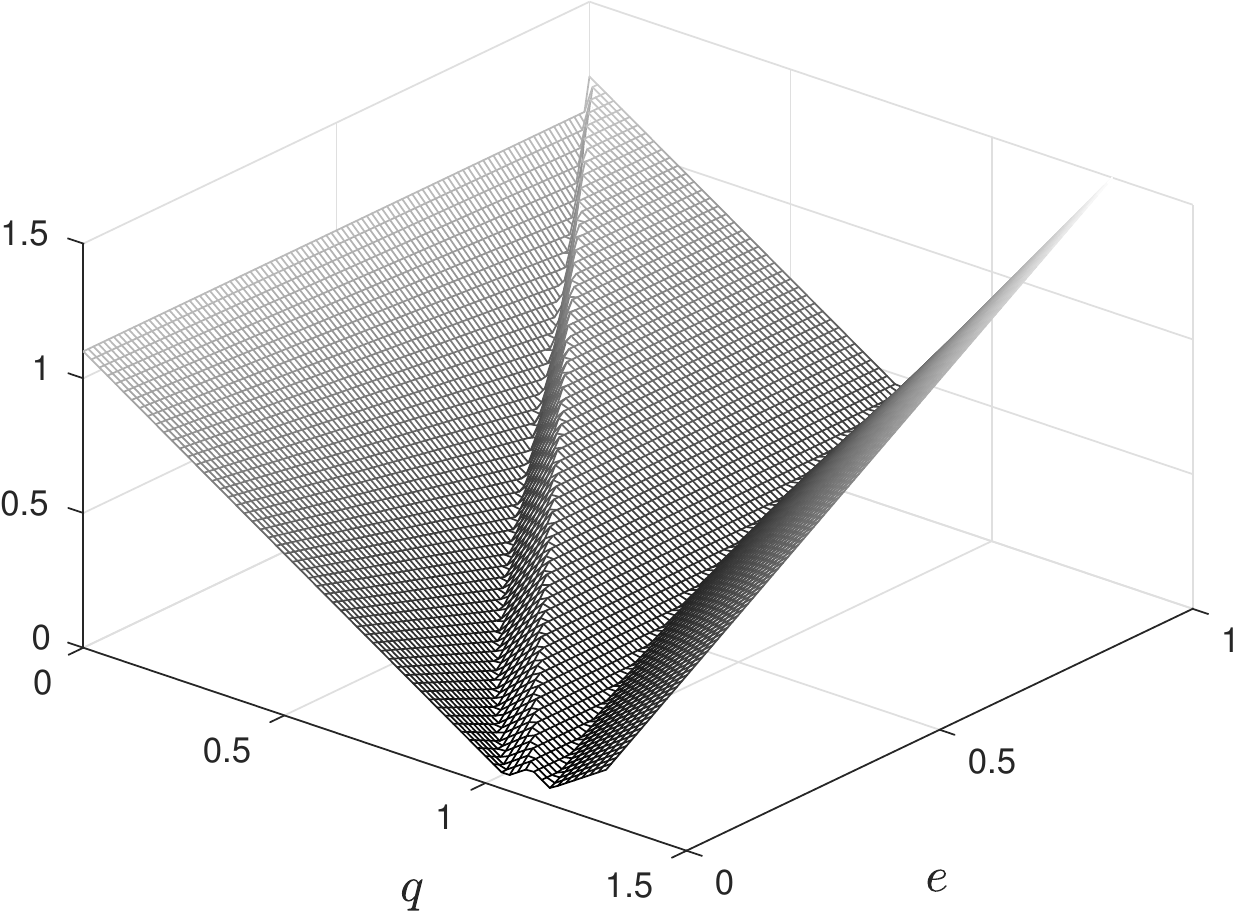,width=7.5cm}
    \hskip 0.2cm
    \epsfig{figure=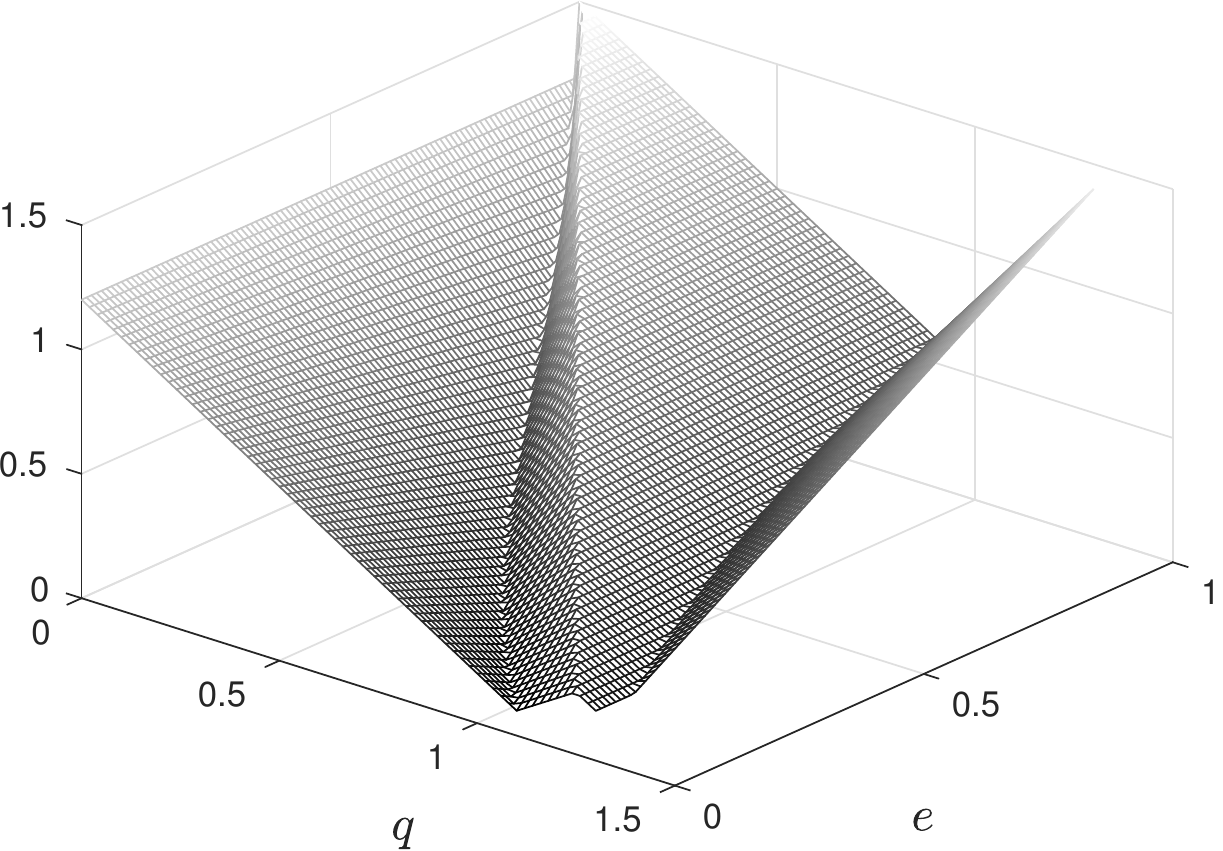,width=7.5cm}}
    \centerline{\epsfig{figure=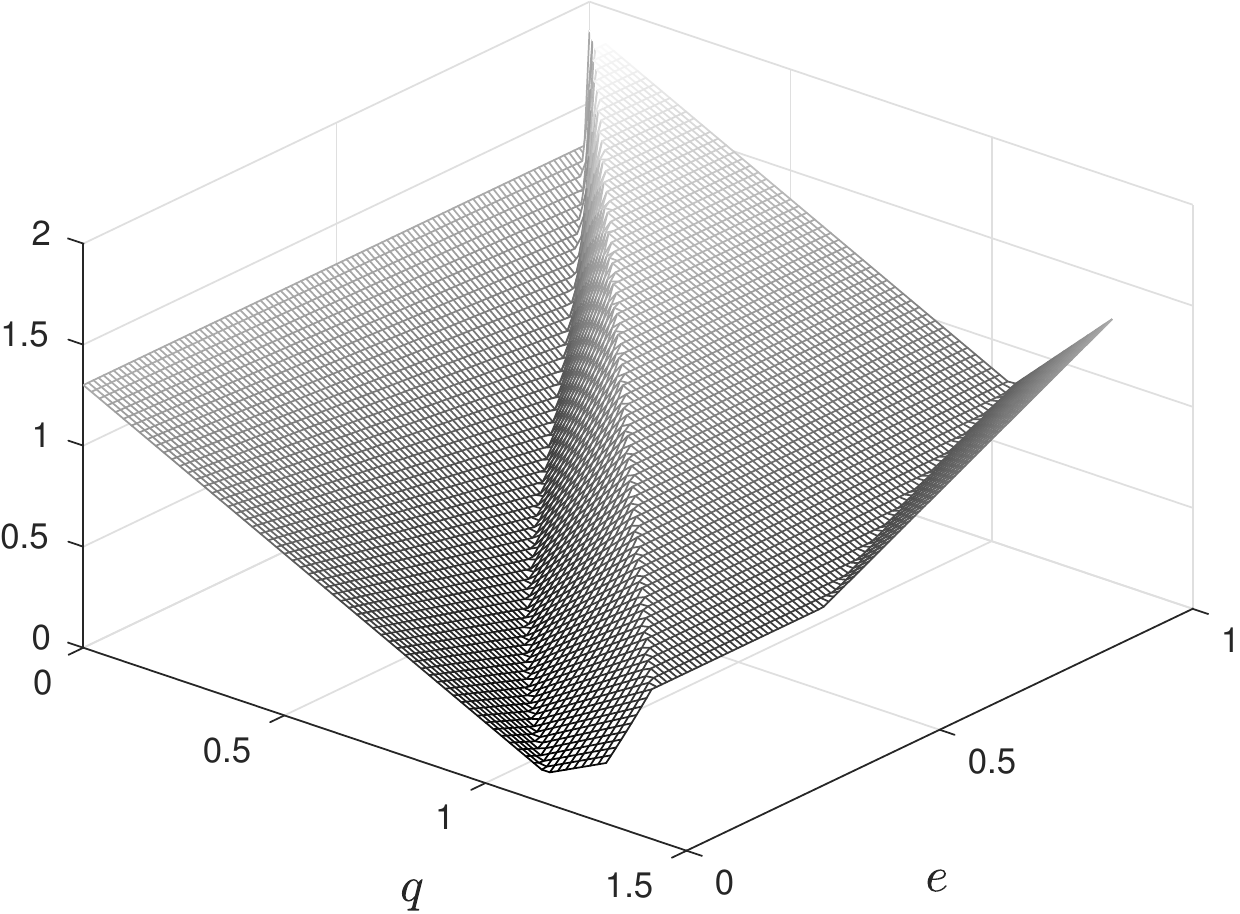,width=7.5cm}
    \hskip 0.2cm
    \epsfig{figure=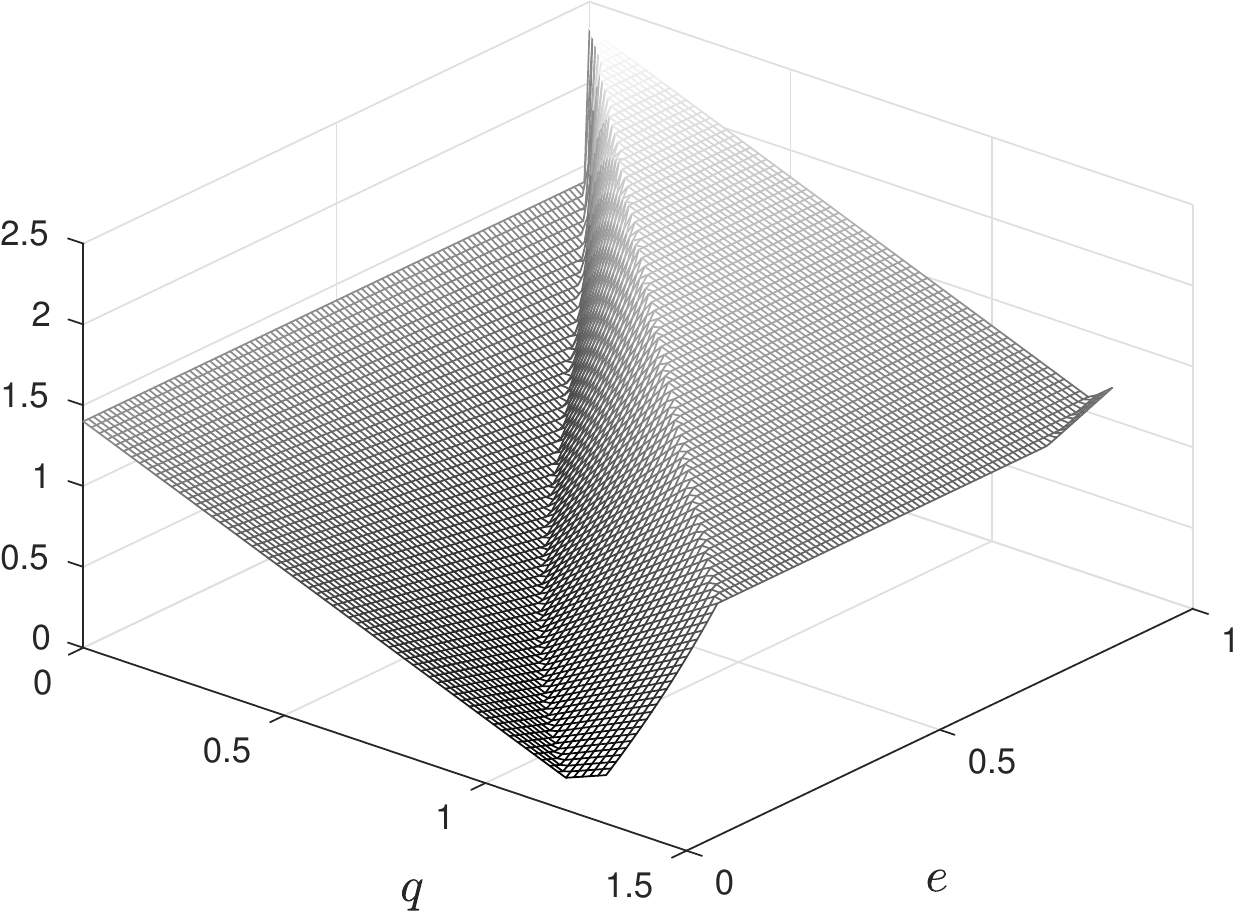,width=7.5cm}}
    \caption{Graphic of $(q,e)\mapsto \max_{(\omega,\omega')\in{\cal D}_3}
      \deltanod(q,e)$ for $e'=0.1$ (top left), $e'= 0.2$ (top right),
      $e'=0.3$ (bottom left), $e'=0.4$ (bottom right). Here we set $q'=1$.}
\label{maxdeltanodqe}
\end{figure}
Finally, we consider the function $\deltalink$ and examine
$\deltalink^{(i)}$ and $\deltalink^{(ii)}$ separately. By
Lemma~\ref{der_om_omp_e}, both $-\dnod^+$ and $\dnod^-$ are
non-decreasing functions of $\omega$ and non-increasing functions of
$\omega'$, therefore also $\deltalink^{(i)}$ is.  For each fixed value
of $(q,e)$, the maximal value of $\deltalink^{(i)}$ over ${\cal D}_3$
is attained for $\omega={\pi}/{2}$, $\omega'=0$ and is
\[
\min\{q(1+e)-q', Q' - q(1+e)\}.
\]
In a similar way we prove that, for each fixed value of $(q,e)$, the maximal value of
$\deltalink^{(ii)}$ over ${\cal D}_3$ is attained for $\omega=0$,
$\omega'=\pi$ and is
\[
\min\Bigl\{Q' - q, \frac{q(1+e)}{1-e} - q'\Bigr\}.
\]
Therefore the maximal value of $\deltalink$ over ${\cal D}_3$ is attained by $\deltalink^{(ii)}$ and corresponds to
\[
\upplinke(q,e) 
 =\ \min\Bigl\{\frac{q(1+e)}{1-e} - q', Q' - q\Bigr\}.
\]
We conclude the proof of relation \eqref{ube} using \eqref{deltanod_max},
\eqref{maxmax} and the optimal bound
\[
\deltalink(q,e,\omega,\omega) \leq \upplinke(q,e).
\]

\end{proof}

In Figure~\ref{maxdeltanodqe} we show the graphic of
$\max_{(\omega,\omega')\in{\cal D}_3}\deltanod(q,e)$ for different
values of $e'$, with $q'=1$.

\begin{proposition}
The zero level curves of $\lowinte, \lowexte, p'-q(1+e)$ divide the
plane $(q,e)$ into regions where different linking configurations are allowed.
\label{prop:ILEC_qe}
\end{proposition}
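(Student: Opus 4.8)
The plan is to follow the same scheme as in the proof of Proposition~\ref{prop:ILEC_qom}, the argument being much shorter here since the bound $\upplinke$ obtained in Proposition~\ref{prop:deltanod_qe_gen} carries no cut-offs. First I would recall the characterizations of the admissible configurations at a fixed $(q,e)\in{\cal D}_4$. By {\em i)} of Lemma~\ref{lem:ebounds}, internal (resp.\ external) nodes occur for \emph{every} $(\omega,\omega')\in{\cal D}_3$ if and only if $\lowinte(q,e)>0$ (resp.\ $\lowexte(q,e)>0$), so these two regions are delimited by the curves $\lowinte=0$ and $\lowexte=0$. On the other hand, by Lemma~\ref{lem:delta} and the optimality of the upper bounds established inside Proposition~\ref{prop:deltanod_qe_gen}, internal (resp.\ external, resp.\ linked) configurations are attained for \emph{some} $(\omega,\omega')\in{\cal D}_3$ exactly when $\uppinte(q,e)>0$ (resp.\ $\uppexte(q,e)>0$, resp.\ $\upplinke(q,e)>0$).

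Next I would translate the two upper bounds for the nodes into the single named curve $p'-q(1+e)=0$. From \eqref{uppinte} one reads that $\uppinte(q,e)>0$ iff $p'>q(1+e)$, and from \eqref{uppexte} that $\uppexte(q,e)>0$ iff $p'<q(1+e)$, while by \eqref{maxmax} both quantities vanish on $p'=q(1+e)$. Hence this curve is precisely the common boundary separating the region where internal nodes are achievable from the one where external nodes are achievable. A quick consistency check I would include is that, whenever $\lowinte>0$ one has $q(1+e)\le Q<q'\le p'$, and whenever $\lowexte>0$ one has $q(1+e)>q>Q'\ge p'$, so the curve $p'-q(1+e)=0$ never enters the two regions of forced internal/external nodes.

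The heart of the proof is the analogue of Lemma~\ref{lem:ulinkom}, namely the claim $\{\upplinke=0\}=\{\lowinte=0\}\cup\{\lowexte=0\}$. Setting $Q=q(1+e)/(1-e)$, we have $\upplinke=\min\{Q-q',\,Q'-q\}$, $\lowinte=q'-Q$ and $\lowexte=q-Q'$. Using the always-valid inequalities $q\le Q$ and $q'\le Q'$, on $\{Q=q'\}$ one gets $Q'-q\ge Q'-Q=Q'-q'\ge 0$, hence $\upplinke=0$, and symmetrically on $\{q=Q'\}$ one gets $Q-q'\ge q-q'=Q'-q'\ge 0$, hence again $\upplinke=0$; conversely $\upplinke=0$ forces one of $Q-q'$, $Q'-q$ to vanish. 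This yields $\upplinke>0$ iff $Q>q'$ and $q<Q'$, that is iff $\lowinte<0$ and $\lowexte<0$, so the linked region equals $\{\lowinte<0\}\cap\{\lowexte<0\}$ and contributes no boundary beyond $\lowinte=0$ and $\lowexte=0$.

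Combining the three steps, the zero level sets of $\lowinte$, $\lowexte$ and $p'-q(1+e)$ partition ${\cal D}_4$ into connected regions on each of which the collection of achievable linking configurations is constant, which is the assertion. I expect the only delicate point to be the verification in the third step that the two boundary inequalities ($Q'-q\ge 0$ on $\{Q=q'\}$ and $Q-q'\ge 0$ on $\{q=Q'\}$) are automatic, so that $\upplinke=0$ adds nothing to the union of the other two curves; everything else is a direct reading of \eqref{uppinte}, \eqref{uppexte} and \eqref{maxmax}.
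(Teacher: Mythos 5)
Your proposal is correct and takes essentially the same route as the paper: forced internal/external configurations delimited by $\lowinte=0$ and $\lowexte=0$ via Lemma~\ref{lem:ebounds}, achievable configurations characterized by the signs of the optimal upper bounds, with \eqref{uppinte} and \eqref{uppexte} both reducing to the curve $p'-q(1+e)=0$. The only difference is that you spell out the verification of $\{\upplinke=0\}=\{\lowinte=0\}\cup\{\lowexte=0\}$, which the paper merely asserts (its analogue in the $(q,\omega)$ case being Lemma~\ref{lem:ulinkom}); your verification, using $q\leq Q$ and $q'\leq Q'$, is correct.
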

\begin{figure}[h!]
  \begin{center}
    \includegraphics[width=10cm]{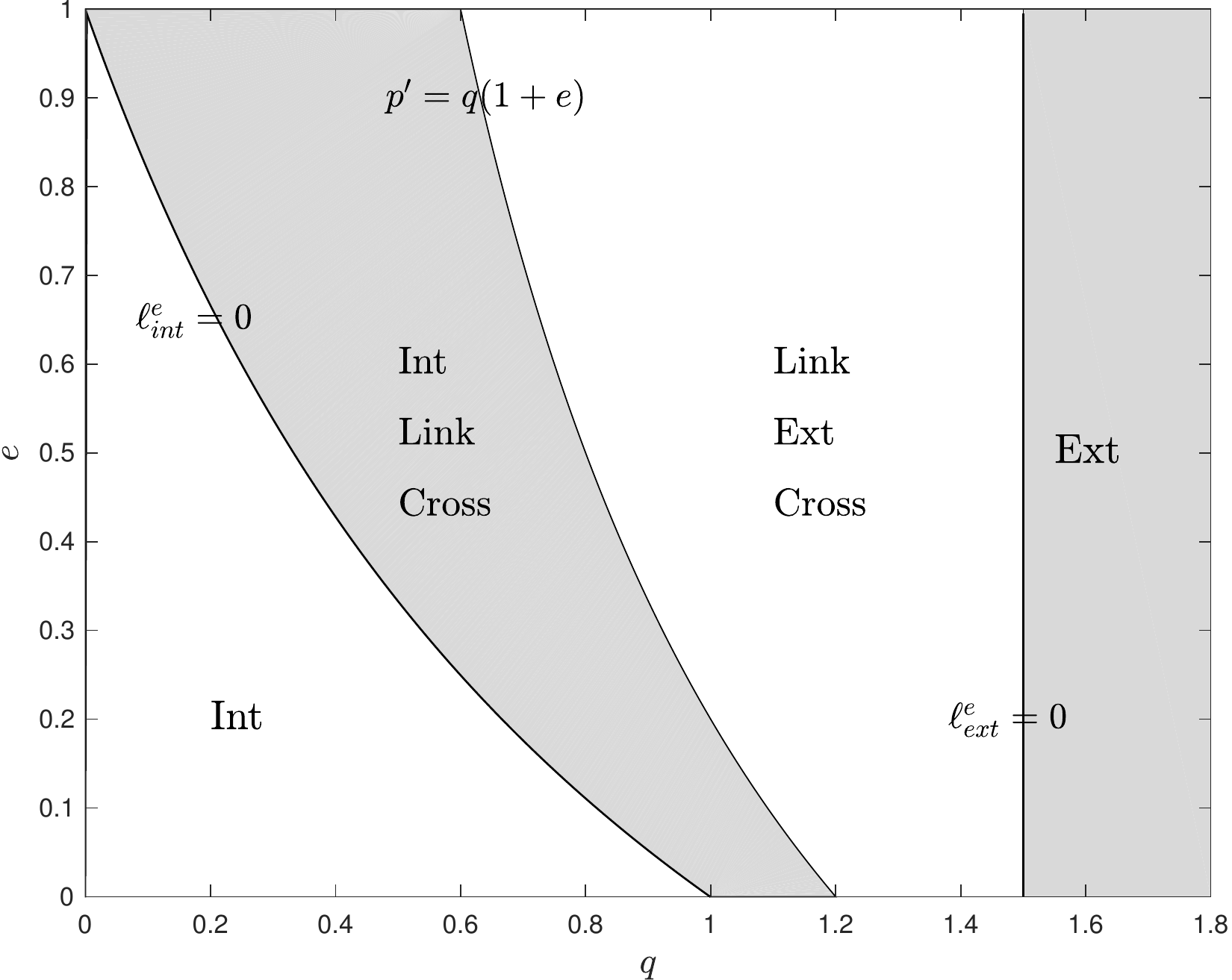}
  \end{center}
  \caption{Regions with different linking configurations in the plane $(q,e)$ for $q'=1$ and $e'=0.2$.}
  \label{fig:qe}
\end{figure}
\begin{proof}
  By Lemma~\ref{lem:ebounds}, given $(q,e)\in{\cal D}_4$, we have
  internal nodes for each $(\omega,\omega')\in{\cal D}_3$ if and only
  if $\lowinte(q,e)>0$, therefore the region where only internal nodes
  are possible is delimited on the right by the curve
  $\lowinte(q,e)=0$.  In a similar way, the region with only external
  nodes is delimited on the left by $\lowexte(q,e)=0$.
  
  Moreover, given $(q,e)$, we have internal nodes (resp. external
  nodes) for some choice of $(\omega,\omega')$ if and only if
  $\uppinte(q,e) > 0$ (resp. $\uppexte(q,e) > 0$).  From relations
  \eqref{uppinte}, \eqref{uppexte} we obtain that both the curves
  $\uppinte(q,e)=0$ and $\uppexte(q,e)=0$ correspond to $p'-q(1+e)=0$.
  Therefore, we can not have both the cases of internal and external
  nodes with the same value of $(q,e)$.
  
  In a similar way, given $(q,e)$, we have linked orbits for some choice of
  $(\omega,\omega')$ if and only if $\upplinke(q,e)> 0$.  We note that
  the curve
  \[
  \upplinke(q,e)= 0,
  \]
  delimiting the region where linked orbits are possible, coincides
  with the curve
  \[
  \{\lowinte(q,e)= 0\}\cup \{\lowexte(q,e)= 0\}.
  \]
  
\end{proof}

\begin{remark}
  There can not exist $(q,e)\in{\cal D}_4$ such that we have linked orbits
  for each $(\omega,\omega')\in{\cal D}_3$.
\end{remark}
\begin{proof} If there exists $(q,e)\in{\cal D}_4$ such that
  $\deltalink(q,e,\omega,\omega')>0$ for each
  $(\omega,\omega')\in{\cal D}_3$, then in particular
  $\upplinke(q,e)>0$, and this corresponds to $\lowinte,\lowexte<0$ at
  $(q,e)$, so that, by {\em ii)} of Lemma~\ref{lem:ebounds}, there
  exists $(\omega,\omega')$ corresponding to a crossing configuration,
  that yields a contradiction.

\end{proof}

\noindent In Figure~\ref{fig:qe} we show the possible linking configurations for
$q'=1$ and $e'=0.2$.

\smallbreak
Next we present optimal bounds for $\deltanod$ as functions of
$(q,\omega')$. To this aim, we let $\omega$ vary in $[0,\pi]$ and
$\omega'$ in $[0,\pi/2]$, which is a different choice with respect to
\eqref{deltanod_domain}, however it also allows us to get all the
possible values of $\deltanod$.
\begin{proposition}
Let ${\cal D}_5 = \{(e,\omega): 0\leq e\leq 1, 0\leq \omega\leq
\pi\}$, ${\cal D}_6 = \{(q,\omega'): 0< q\leq \qmax, 0\leq \omega'\leq
\pi/2\}$.  For each choice of $(q,\omega')\in{\cal D}_6$ we have
  \begin{align}
    &\displaystyle \min_{(e,\omega)\in{\cal D}_5} \deltanod =
  \max\bigl\{0, \lowextomp\bigr\}, \label{lbomp}\\
  &\displaystyle \max_{(e,\omega)\in{\cal D}_5} \deltanod =
  \max\Bigl\{\upplinkomp,\uppextomp\Bigr\},\label{ubomp}
  \end{align}
where
\begin{align*}
&\lowextomp(q,\omega') =  q-\frac{p'}{1-e'\cos\omega'},\\
&\upplinkomp(q,\omega') = \frac{p'}{1-e'\cos\omega'} - q,
\end{align*}
and
\[
\uppextomp(q,\omega') = \frac{2q}{1+\cos\omega_*} - \frac{p'}{1+e'\cos\omega'},
\]
with
\[
\cos\omega_* =
\frac{p'e'\cos\omega'}{\sqrt{q^2(1-e'^2\cos^2\omega')^2
    + (p'e'\cos\omega')^2} + q(1-e'^2\cos^2\omega')}.
%
\]
\label{prop:deltanod_qomp_gen}
\end{proposition}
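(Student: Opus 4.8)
The plan is to follow the strategy of the proofs of Propositions~\ref{prop:deltanod_qom_gen} and~\ref{prop:deltanod_qe_gen}: by \eqref{deltanod_max} it suffices to optimize $\deltaint$, $\deltaext$ and $\deltalink$ separately over $(e,\omega)\in{\cal D}_5$ and then take the maximum. The key simplification here is that $\omega'$ is fixed, so $\rpiup$ and $\rmenop$ are constants and only $\rpiu=\frac{q(1+e)}{1+e\cos\omega}$ and $\rmeno=\frac{q(1+e)}{1-e\cos\omega}$ vary. As $\omega$ runs over $[0,\pi]$ one has $\rpiu\geq q$ (equality at $\omega=0$) and $\rmeno\geq q$ (equality at $\omega=\pi$), both reaching $Q=q(1+e)/(1-e)$ at the opposite endpoint; moreover $\rpiup\leq\rmenop$, since $\omega'\in[0,\pi/2]$ forces $\xi'=e'\cos\omega'\geq 0$. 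These facts and the monotonicities of Lemma~\ref{der_om_omp_e} drive the whole argument.

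For the lower bound I would first establish an analogue of Lemma~\ref{lem:ebounds}. Bounding $\deltaext=\min\{\rpiu-\rpiup,\rmeno-\rmenop\}$ from below by $\min\{q-\rpiup,q-\rmenop\}=q-\rmenop=\lowextomp$, already attained at $e=0$, shows that $\lowextomp$ is the minimum of $\deltaext$. By contrast $\deltaint$ has no positive infimum, since letting $e\to1$ at $\omega=\pi$ sends $\rpiu\to+\infty$, hence $\deltaint\to-\infty$; this is why no internal term appears in \eqref{lbomp}. The key remark is that $\rmeno$ covers all of $[q,+\infty)$ as $(e,\omega)$ vary, so the descending crossing $\dnod^-=0$, i.e. $\rmeno=\rmenop$, is realizable exactly when $\rmenop\geq q$, that is when $\lowextomp\leq 0$. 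Combining with Lemma~\ref{lem:delta}: if $\lowextomp>0$ then $\dnod^+,\dnod^-<0$ everywhere, only external nodes occur and $\min\deltanod=\min\deltaext=\lowextomp$; if $\lowextomp\leq 0$ a crossing occurs and $\min\deltanod=0$. This yields \eqref{lbomp}.

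For the upper bound I would treat the three functions in turn. Since $\dnod^+,\dnod^-$ are non-increasing in $e$, so is $\deltaint$, whose maximum is at $e=0$, where $\rpiu=\rmeno=q$ and $\deltaint=\rpiup-q=\uppintomp$. For $\deltalink^{(i)}=\min\{-\dnod^+,\dnod^-\}$ both arguments are non-decreasing in $\omega$, so the maximum over $\omega$ is at $\omega=\pi$, where $\dnod^-=\rmenop-q$ is independent of $e$ while $-\dnod^+=Q-\rpiup$ increases with $e$; letting $e\to1$ yields $\rmenop-q=\upplinkomp$, and the symmetric computation at $\omega=0$ gives $\max\deltalink^{(ii)}=\rpiup-q\leq\upplinkomp$, so $\max\deltalink=\upplinkomp$. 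Finally $\deltaext$ is non-decreasing in $e$, hence maximized at $e=1$; there $-\dnod^+=\frac{2q}{1+\cos\omega}-\rpiup$ is non-decreasing and $-\dnod^-=\frac{2q}{1-\cos\omega}-\rmenop$ is non-increasing in $\omega$, so $\deltaext$ is maximized at the unique $\omega_*\in(0,\pi)$ where $\dnod^+=\dnod^-$. That equation reduces to $c\cos^2\omega+4q\cos\omega-c=0$ with $c=\rmenop-\rpiup=\frac{2p'e'\cos\omega'}{1-e'^2\cos^2\omega'}$, whose admissible root is, after rationalization, exactly the stated $\cos\omega_*$, and the corresponding value is $\uppextomp$. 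Collecting the three maxima and using $\uppintomp\leq\upplinkomp$ gives $\max\deltanod=\max\{\upplinkomp,\uppextomp\}$, i.e. \eqref{ubomp}.

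The step I expect to be most delicate is the computation of $\max\deltaext$: beyond solving and rationalizing the quadratic to recognize the closed form of $\cos\omega_*$, one must verify that no cut-off is needed. Since $c\geq 0$ for $\omega'\in[0,\pi/2]$, the relevant root automatically satisfies $\cos\omega_*\in[0,1)$, so $\omega_*=\arccos(\cos\omega_*)$ always lies in the admissible range; this is precisely what makes $\uppextomp$ simpler than the analogous bound in Proposition~\ref{prop:deltanod_qom_gen}, where cut-offs on $\xi_*'$ and $e_*$ were unavoidable. The degenerate case $\omega'=\pi/2$, where $c=0$, $\cos\omega_*=0$ and $\rpiup=\rmenop=p'$, is worth recording as a consistency check.
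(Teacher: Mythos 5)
Your proof is correct. The upper-bound half coincides with the paper's own argument: the same decomposition via \eqref{deltanod_max}, the same use of Lemma~\ref{der_om_omp_e} to fix $e=0$ for $\deltaint$, $e=1$ for $\deltaext$, and $\omega=\pi$ (resp.\ $\omega=0$) for $\deltalink^{(i)}$ (resp.\ $\deltalink^{(ii)}$), and the same equation $\dnod^+=\dnod^-$ at $e=1$, whose rationalized root is the stated $\cos\omega_*$; your remark that $c\geq 0$ keeps $\cos\omega_*$ in $[0,1)$, so that no cut-off is needed, is exactly the feature that distinguishes this case from Proposition~\ref{prop:deltanod_qom_gen}. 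Where you genuinely depart from the paper is the lower bound, and your route is simpler. The paper (Lemma~\ref{lem:omegapbounds}, part~ii)) proves the existence of a crossing when $\lowextomp\leq 0$ indirectly: it first computes $\uppextomp$, splits into the cases $\uppextomp\geq 0$ and $\uppextomp<0$, and in the latter case derives a contradiction from having linked orbits for every $(e,\omega)$, using the existence of $\omega_*$ with $\dnod^+=\dnod^-$ at $e=1$. You instead argue constructively: $\rmeno$ sweeps all of $[q,+\infty)$ (e.g.\ along $\omega=0$ with $e\in[0,1]$), so the crossing $\dnod^-=0$ is realizable precisely when $\rmenop\geq q$, i.e.\ when $\lowextomp\leq 0$; and when $\lowextomp>0$ the inequalities $\rpiu\geq q$ and $\rpiup\leq\rmenop$ give $\dnod^+,\dnod^-<0$ for all $(e,\omega)$, hence only external nodes and $\min\deltanod=\lowextomp$, attained at $e=0$. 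This replaces the most involved step of the paper's proof by an elementary construction; what the paper's detour buys is only that $\uppextomp$ is already computed inside the preliminary lemma and can be reused for \eqref{ubomp}, a computation that your upper-bound discussion supplies anyway.
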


We prove some preliminary facts.
\begin{lemma}
  The following properties hold:
  \begin{itemize}
    \item[i)] for each $(q,\omega')\in{\cal D}_6$ and $(e,\omega)\in{\cal
      D}_5$ we have
  \begin{eqnarray}
    &&\inf_{(e,\omega)\in{\cal D}_5}\deltaint(q,e,\omega,\omega') = -\infty,\label{minusinf}\\
    &&\deltaext(q,e,\omega,\omega') \geq
    \deltaext(q,0,\omega,\omega') = \lowextomp(q,\omega').\label{lowerbomp}
  \end{eqnarray}
\item[ii)] If $(q,\omega')$ is such that $\lowextomp(q,\omega')\leq
  0$, then there exists $(e,\omega)\in{\cal D}_5$ such that
  $\dnod^+\dnod^-=0$.
  \end{itemize}
  \label{lem:omegapbounds}
\end{lemma}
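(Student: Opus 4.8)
The plan is to prove the three assertions of Lemma~\ref{lem:omegapbounds} by the same optimization-over-$(e,\omega)$ strategy used in the earlier lemmas, exploiting the monotonicity facts of Lemma~\ref{der_om_omp_e}. The new feature here is that $\omega$ ranges over the full interval $[0,\pi]$ rather than $[0,\pi/2]$, and that $q'$ and $\omega'$ are now fixed while we optimize over $(e,\omega)$; this is what forces $\deltaint$ to become unbounded below.

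First I would prove \eqref{minusinf}. By Lemma~\ref{der_om_omp_e} the function $\dnod^+$ is non-increasing in $\omega$ and $\dnod^-$ is non-decreasing in $\omega$, so $\deltaint=\min\{\dnod^+,\dnod^-\}$ can be driven to $-\infty$ by sending one of the two nodal distances to $-\infty$. Concretely, fixing any admissible $\omega'$ and letting $e\to 1^-$ with $\omega$ chosen so that $1+e\cos\omega\to 0$ (for example $\omega\to\pi$, i.e. $\cos\omega\to -1$), we have $\rpiu=q(1+e)/(1+e\cos\omega)\to +\infty$, whence $\dnod^+=\rpiup-\rpiu\to-\infty$ and therefore $\deltaint\to-\infty$. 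This establishes $\inf_{(e,\omega)\in{\cal D}_5}\deltaint=-\infty$.

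Next I would establish the lower bound \eqref{lowerbomp} for $\deltaext=\min\{-\dnod^+,-\dnod^-\}$. By Lemma~\ref{der_om_omp_e} both $\dnod^+$ and $\dnod^-$ are non-increasing in $e$, hence $-\dnod^+,-\dnod^-$ are non-decreasing in $e$, so the minimum over $e\in[0,1]$ is attained at $e=0$. At $e=0$ one has $\rpiu=\rmeno=q$, so $\deltaext(q,0,\omega,\omega')=\min\{q-\rpiup,\,q-\rmenop\}$; since $\rpiup=p'/(1+e'\cos\omega')$ and $\rmenop=p'/(1-e'\cos\omega')$, and with $\omega'\in[0,\pi/2]$ we have $\cos\omega'\ge 0$ so $\rmenop\ge\rpiup$, the minimum picks out $q-\rmenop=q-p'/(1-e'\cos\omega')=\lowextomp(q,\omega')$, independent of $\omega$. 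This proves \eqref{lowerbomp}, and combined with property \textit{b)} of Lemma~\ref{lem:delta} it shows we have external nodes for every $(e,\omega)$ exactly when $\lowextomp(q,\omega')>0$.

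Finally, for part \textit{ii)} I would argue by an intermediate-value/continuity argument exactly parallel to part \textit{ii)} of Lemmas~\ref{lem:omegabounds} and~\ref{lem:ebounds}. The idea is to exhibit one configuration in ${\cal D}_5$ giving $\deltaint>0$ (internal nodes) and another giving $\deltaext>0$ (external nodes); then, since $\deltaint$ and $\deltaext$ are continuous on the connected set ${\cal D}_5$ and cannot both be positive at the same point, some intermediate point must give a crossing, i.e. $\dnod^+\dnod^-=0$. The external configuration is furnished by \eqref{lowerbomp}: when $\lowextomp\le 0$ this only tells us $\deltaext\le 0$ at $e=0$, so the genuinely delicate point, which I expect to be the main obstacle, is to produce a companion configuration forcing internal nodes. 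I would obtain it from \eqref{minusinf}: since $\deltaint$ attains arbitrarily negative values it is certainly not sign-definite, and by evaluating $\deltaint$ at a suitable interior point (for instance near $e=0$, $\omega=\pi/2$, where $\rpiu=\rmeno=q$ and $\deltaint=q-p'/(1-e'\cos\omega')=\lowextomp\le 0$ reduces to the external case) one checks that varying $\omega$ toward small values makes $\dnod^+$ and $\dnod^-$ increase, so that under the hypothesis $\lowextomp\le 0$ the continuous family connecting these configurations must cross zero. Packaging this monotone/continuity sweep carefully so that the sign change of $\dnod^+\dnod^-$ is guaranteed is the one step requiring attention; everything else is a direct application of Lemma~\ref{der_om_omp_e}.
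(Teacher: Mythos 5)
Your part \textit{i)} is correct and is essentially the paper's argument: for \eqref{minusinf} the paper uses the sequence $e_k=1-1/k$, $\omega_k=1/k$, which drives $\rmeno\to+\infty$ and hence $\dnod^-\to-\infty$, while you drive $\rpiu\to+\infty$ by taking $\omega\to\pi$, which is equally valid; for \eqref{lowerbomp}, minimizing in $e$ at $e=0$ and using $\cos\omega'\ge 0$ to select $q-\rmenop=\lowextomp$ is exactly the paper's computation.

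Part \textit{ii)} is where there is a genuine gap, and it is precisely the step you flagged. Your plan requires exhibiting a configuration with $\deltaint>0$ and one with $\deltaext>0$, but under the hypothesis $\lowextomp\le 0$ neither need exist: the orbits can be linked ($\dnod^+\dnod^-<0$) at every non-crossing point of ${\cal D}_5$, in which case your sweep never detects a sign change of $\deltaint$ or $\deltaext$. Your attempted construction of the internal configuration does not repair this: (1) ``$\deltaint$ is unbounded below, hence not sign-definite'' is a non sequitur, since unboundedness below says nothing about positive values; (2) at $e=0$ one has $\deltaint=\min\{\rpiup,\rmenop\}-q=\frac{p'}{1+e'\cos\omega'}-q$, not $q-\frac{p'}{1-e'\cos\omega'}$ (you evaluated $\deltaext$, with the opposite sign convention and the other branch); (3) at $e=0$ both nodal distances are independent of $\omega$, so ``varying $\omega$ toward small values'' does nothing there, and in general, by Lemma~\ref{der_om_omp_e}, decreasing $\omega$ \emph{increases} $\dnod^+$ but \emph{decreases} $\dnod^-$, not both. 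The paper's missing idea is to work on the edge $e=1$: there $\dnod^-\to-\infty$ as $\omega\to 0^+$ with $\dnod^+$ finite, and $\dnod^+\to-\infty$ as $\omega\to\pi^-$ with $\dnod^-$ finite, so by continuity there exists $\omega_*$ (given explicitly by \eqref{cosomegastar}) where $\dnod^+=\dnod^-$, relation \eqref{dpeqdm}. This rules out the linked-everywhere scenario (equal values cannot have negative product) and, via monotonicity, identifies $\uppextomp=\left.-\dnod^\pm\right|_{e=1,\omega=\omega_*}$ as the maximum of $\deltaext$; then either $\uppextomp\ge 0$ and the intermediate value theorem applied to $\deltaext$ between $(1,\omega_*)$ and $e=0$ yields a crossing, or $\uppextomp<0$, in which case $\dnod^+=\dnod^->0$ at $(1,\omega_*)$, so $\deltaint>0$ there, and the intermediate value theorem applied to $\deltaint$ together with \eqref{minusinf} yields a crossing. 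Incidentally, once you look at single nodal distances rather than at $\deltaint,\deltaext$, a shorter repair of your argument is available: the crossing condition only requires \emph{one} nodal distance to vanish, and along the segment $\omega=0$ the function $e\mapsto\dnod^-$ decreases continuously from $\rmenop-q=-\lowextomp\ge 0$ to $-\infty$, so it vanishes somewhere, which already gives $\dnod^+\dnod^-=0$.
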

\begin{proof}
 Setting $e_k =1-\frac{1}{k},
  \omega_k=\frac{1}{k}$, $k\in\N$, we have
  \[
\lim_{k\to\infty}\deltaint(q,e_k,\omega_k,\omega') = -\infty
\]
for each $(q,\omega')\in{\cal D}_6$.
We prove the bound \eqref{lowerbomp} by observing that
for each $(q,\omega')\in{\cal D}_6$ and $(e,\omega)\in{\cal D}_5$ we have
\[
\begin{split}
\deltaext&\geq \min\biggl\{\min_{(e,\omega)\in{\cal D}_5}\rpiu - \rpiup,
\min_{(e,\omega)\in{\cal D}_5}\rmeno - \rmenop\biggr\}\cr
&=
\min\{\rpiu|_{e=0} - \rpiup,
\rmeno|_{e=0} - \rmenop\} = q - \frac{p'}{1-e'\cos\omega'},\cr
\end{split}
\]
where the last equality holds because $\omega'\in[0,\pi/2]$.

To prove $ii)$ we observe that by Lemma~\ref{der_om_omp_e}, for each
$(q,\omega')\in{\cal D}_6$, the maximal value of $\deltaext$ is
attained at $e=1$, whatever the value of $\omega$. By the same lemma,
$-\dnod^+$ is a non-decreasing function of $\omega$, while $-\dnod^-$
is non-increasing, whatever the value of $e$. Since
\[
\bigl.\lim_{\omega\to 0^+}\dnod^+\bigr|_{e=1} = \frac{p'}{1+\xi'}-q, \qquad
\bigl.\lim_{\omega\to \pi^-}\dnod^+\bigr|_{e=1} = -\infty,
\]
and
\[
\bigl.\lim_{\omega\to 0^+}\dnod^-\bigr|_{e=1} = -\infty, \qquad
\bigl.\lim_{\omega\to
  \pi^-}\dnod^-\bigr|_{e=1} = \frac{p'}{1-\xi'} - q,
\]
there is always a value $\omega_*$ of $\omega\in[0,\pi]$ such that
\begin{equation}
\bigl.\dnod^+\bigr|_{e=1,\omega=\omega_*}=\bigl.\dnod^-\bigr|_{e=1,\omega=\omega_*},
\label{dpeqdm}
\end{equation}
and this is given by relation
\begin{equation}
\cos\omega_* = \frac{p'e'\cos\omega'}{\sqrt{q^2(1-e'^2\cos^2\omega')^2
    + (p'e'\cos\omega')^2} + q(1-e'^2\cos^2\omega')}.
\label{cosomegastar}
\end{equation}
We conclude that the maximal value of $\deltaext$ over ${\cal D}_5$
is given by
\begin{equation}
  \uppextomp(q,\omega')
    %
    = \frac{2q}{1+\cos\omega_*} - \frac{p'}{1+e'\cos\omega'} =
    \frac{2q}{1-\cos\omega_*} - \frac{p'}{1-e'\cos\omega'}.
    \label{uppextomp}
  \end{equation}
 If $\uppextomp(q,\omega')\geq 0$, then there exists
 $(e,\omega)\in{\cal D}_5$ corresponding to a crossing configuration
 because we are assuming $\lowextomp(q,\omega')\leq 0$.
 On the other hand, if
 $\uppextomp(q,\omega')< 0$ we have $\deltaext(q,e,\omega,\omega')<0$
 for each $(e,\omega)\in{\cal D}_5$. However, this assumption yields a
 contradiction, in fact one of the following cases
 holds:
 \begin{itemize}
 \item[a)] $\deltaint(q,e,\omega,\omega') > 0$ for some
   $(e,\omega)\in{\cal D}_5$;
 \item[b)] $\deltaint(q,e,\omega,\omega') < 0$ for each
   $(e,\omega)\in{\cal D}_5$, that is,
   \[
   \uppintomp(q,\omega')=\max_{(e,\omega)\in{\cal
       D}_5}\deltaint(q,\omega') < 0.
   \]
 \end{itemize}
 If a) holds, then by relation \eqref{minusinf} and the continuity of
 $\deltaint$ there exists $(e,\omega)\in{\cal D}_5$ yielding a
 crossing configuration. Instead, if b) holds, from
 $\uppintomp<0$ and $\uppextomp <0$ we obtain that $\dnod^+\dnod^-<0$ at $(q,\omega')$ for
 each $(e,\omega)\in{\cal D}_5$, that is, for the considered pair
 $(q,\omega')$ we always have linked orbits. However, this contradicts
 relation \eqref{dpeqdm}.
 
\end{proof}

We continue the proof of Proposition~\ref{prop:deltanod_qomp_gen}.

\begin{proof}

  {\em Lower bound:} \eqref{lbomp} follows from
  Lemma~\ref{lem:omegapbounds}.
  
  {\em Upper bound:} By Lemma~\ref{der_om_omp_e}
  we obtain
  \[
  \deltaint(q,e,\omega,\omega')\leq \deltaint(q,0,\omega,\omega')
  =\min\Bigl\{\frac{p'}{1+e'\cos\omega'}-q, \frac{p'}{1-e'\cos\omega'}-q \Bigr\}
  \]
  for each $(q,\omega')\in{\cal D}_6$, and each $(e,\omega)\in{\cal D}_5$. We
  conclude that the maximal value of $\deltaint$ over ${\cal D}_5$ is
  \begin{equation}
    \uppintomp(q,\omega') =
    \frac{p'}{1+e'\cos\omega'} - q.
    \label{uppintomp}
  \end{equation}
  
  The maximal value of $\deltaext$ over ${\cal D}_5$ has been computed
  in Lemma~\ref{lem:omegapbounds} and is given in \eqref{uppextomp}.
  
\begin{figure}[t!]
  \centerline{\epsfig{figure=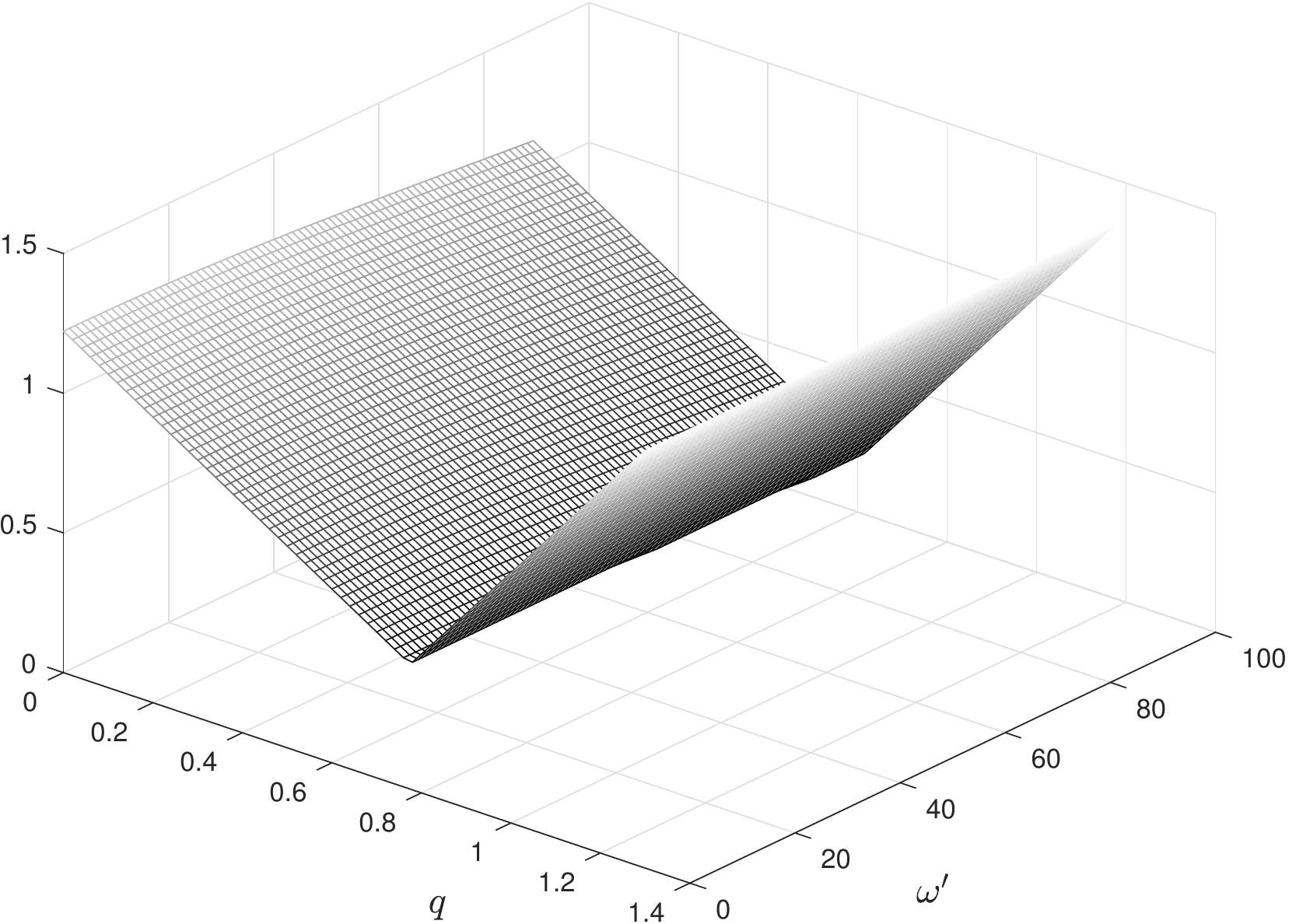,width=7.5cm}
    \hskip 0.2cm
    \epsfig{figure=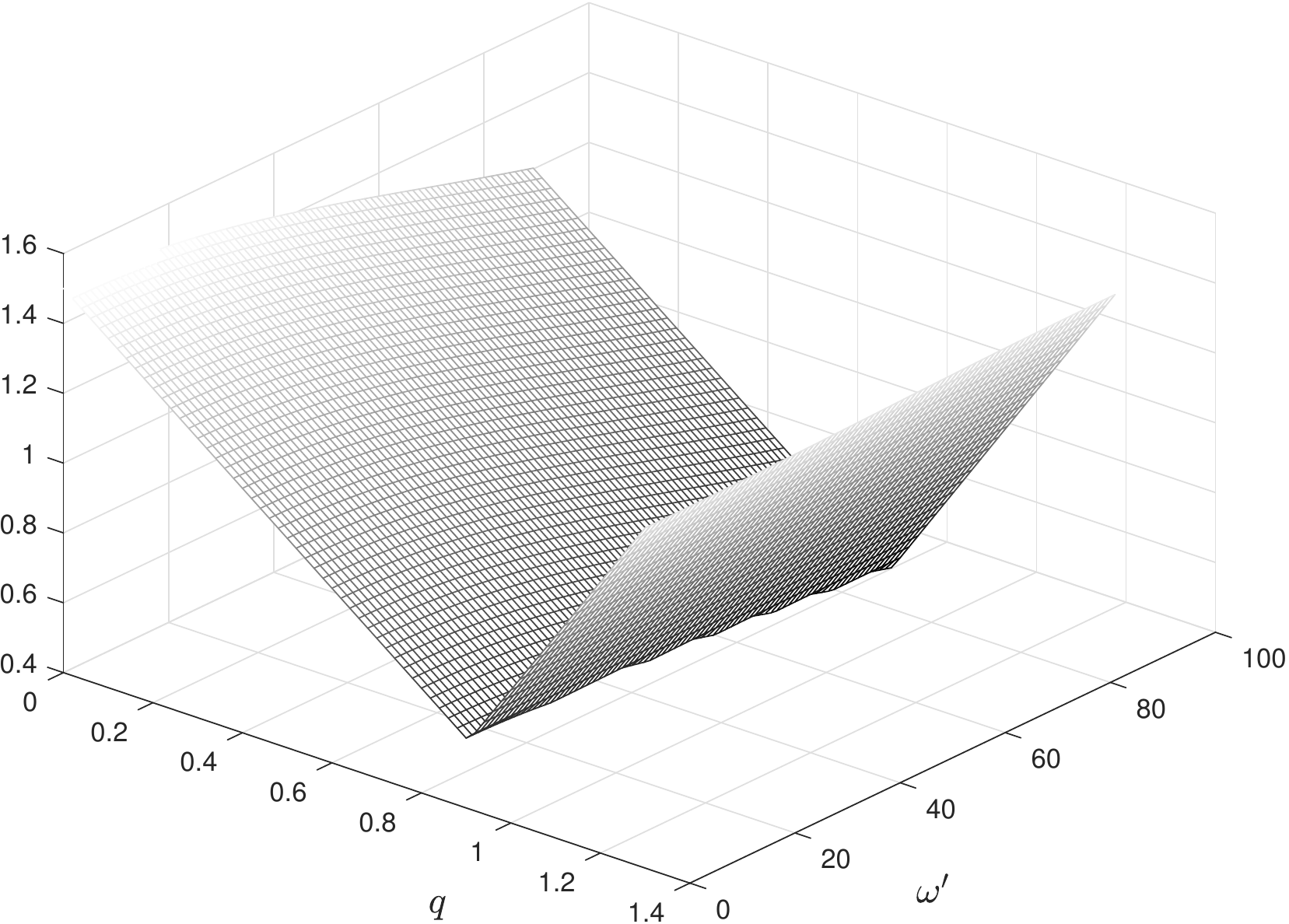,width=7.5cm}}
    \centerline{\epsfig{figure=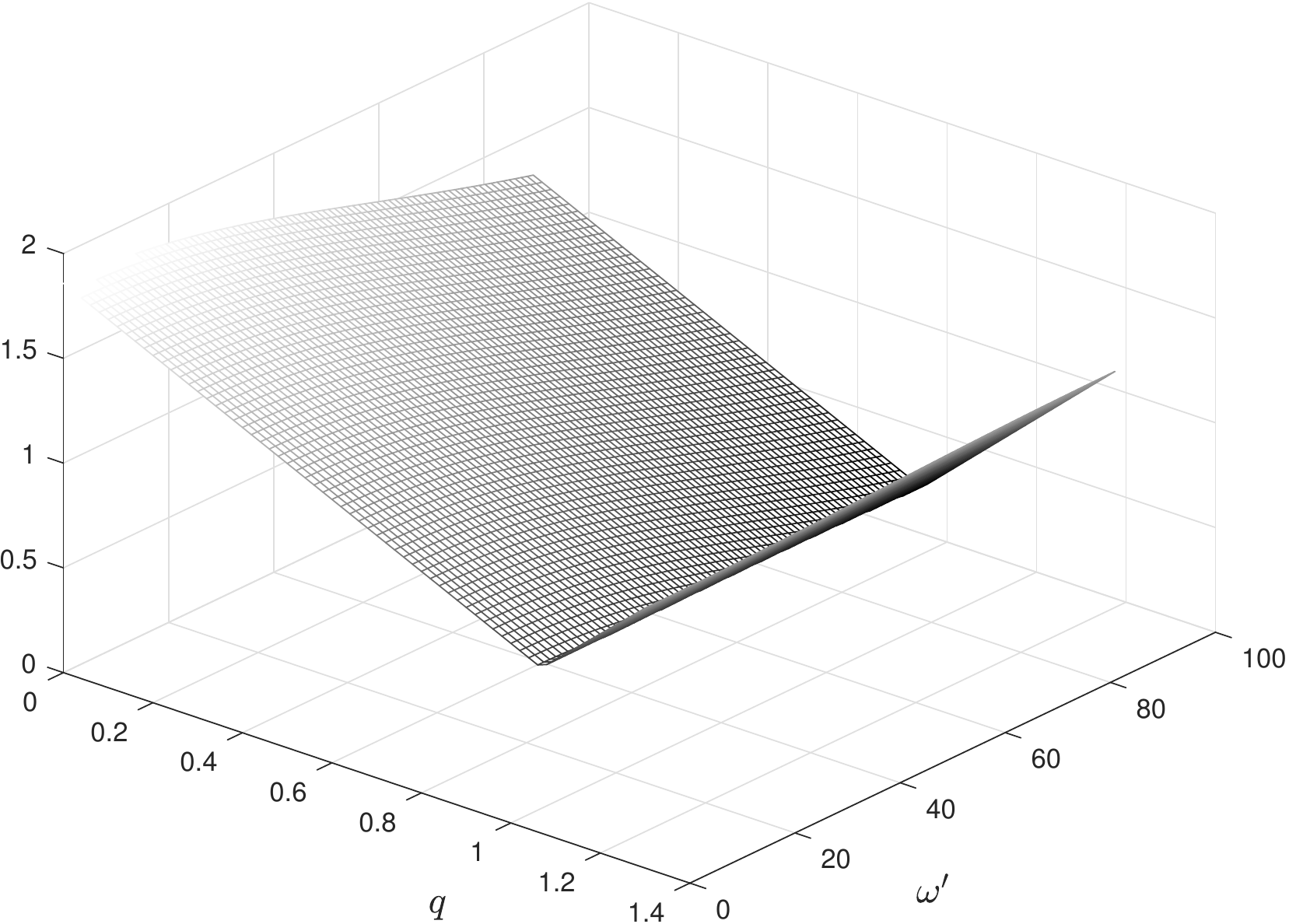,width=7.5cm}
    \hskip 0.2cm
    \epsfig{figure=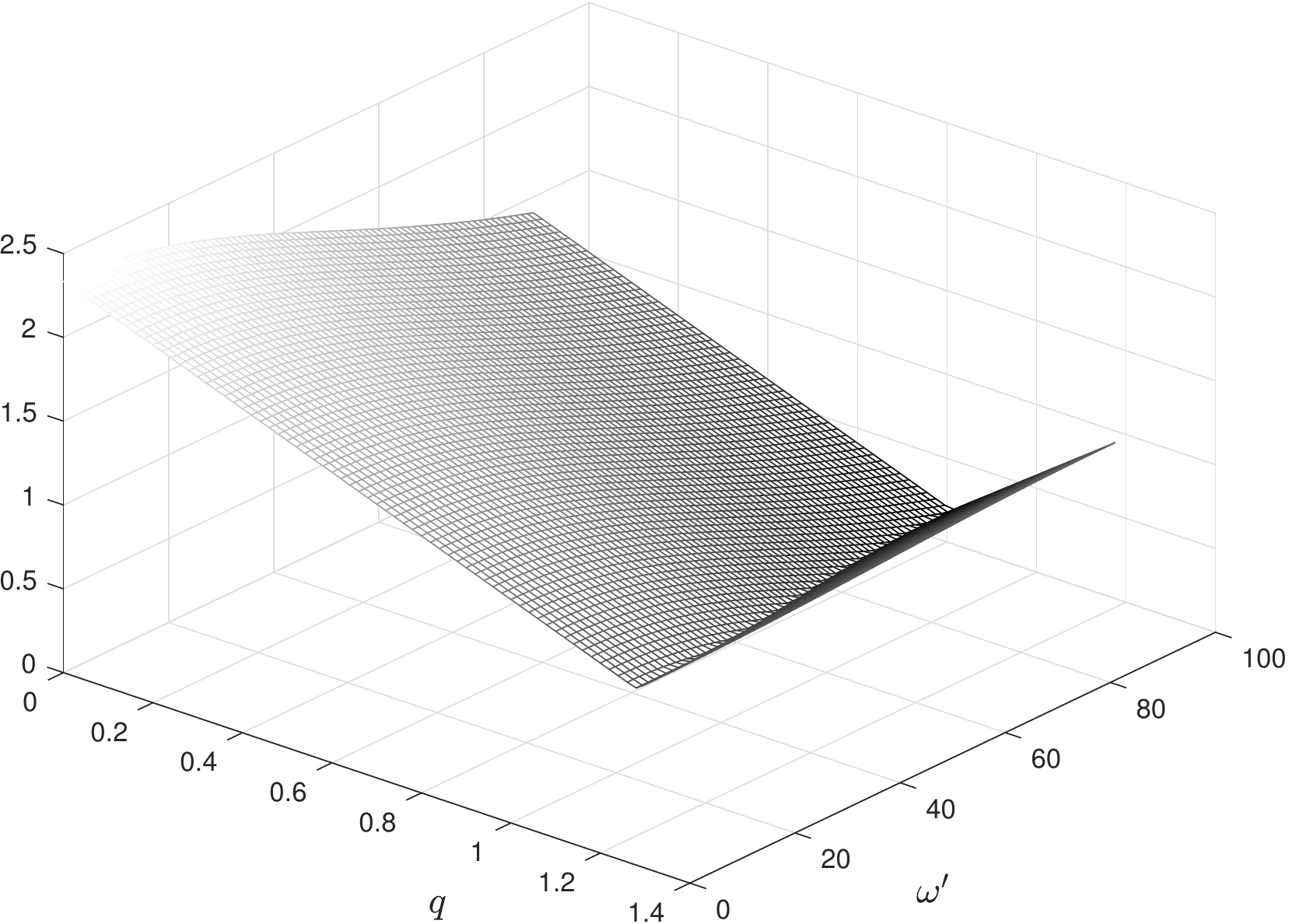,width=7.5cm}}
    \caption{Graphic of $(q,\omega')\mapsto
      \max_{(e,\omega)\in{\cal D}_5} \deltanod(q,\omega')$ for
      $e'=0.1$ (top left), $e'= 0.2$ (top right), $e'=0.3$ (bottom
      left), $e'=0.4$ (bottom right). Here we set $q'=1$.}
\label{maxdeltanodqomp}
\end{figure}

By Lemma~\ref{der_om_omp_e}, for each $(q,\omega')\in{\cal D}_6$ the
  maximal value of $\deltalink^{(i)}$ is attained at $\omega = \pi$
  and the maximal value of $\deltalink^{(ii)}$ is attained at $\omega
  = 0$.  We note that
  \[
  \begin{split}
    \deltalink^{(i)}|_{\omega=\pi} &=\min\Bigl\{ \frac{q(1+e)}{1-e} -
    \frac{p'}{1+e'\cos\omega'}, \frac{p'}{1-e'\cos\omega'} - q\Bigr\}\cr
    &\geq \min\Bigl\{\frac{p'}{1+e'\cos\omega'} -q, \frac{q(1+e)}{1-e} -
    \frac{p'}{1-e'\cos\omega'} \Bigr\}  = \deltalink^{(ii)}|_{\omega=0} 
\end{split}
  \]
  for each $e\in[0,1]$. Therefore the maximal value of $\deltalink$
  over ${\cal D}_5$ is obtained by $\deltalink^{(i)}$.
  By Lemma~\ref{der_om_omp_e}, $-\dnod^+|_{\omega=\pi}$ is
  non-decreasing with $e$, while $\dnod^-|_{\omega=\pi}$ is constant.
  Since
  \[
\lim_{e\to 1^-}-\dnod^+|_{\omega=0} = +\infty
  \]
  the maximal value of $\deltalink$ over ${\cal D}_5$
  is given by
  \begin{equation}
    \upplinkomp(q,\omega') = \dnod^-|_{\omega=\pi} =
     \frac{p'}{1-e'\cos\omega'} - q.
  \label{upplinkomp}
  \end{equation}
  Finally, we note that $\uppintomp\leq\upplinkomp$, therefore
  \[
  \max_{(e,\omega)\in{\cal D}_5}\deltanod = \max\{\uppintomp,\upplinkomp,\uppextomp\}
  = \max\{\upplinkomp,\uppextomp\}.
  \]

\end{proof}

In Figure~\ref{maxdeltanodqomp} we show the graphic of
$\max_{(e,\omega)\in{\cal D}_5}\deltanod(q,\omega')$ for different
values of $e'$, with $q'=1$.  Using Remark~\ref{rem:sym} we can extend
by symmetry the graphic of $\max_{(e,\omega)\in{\cal
    D}_5}\deltanod(q,\omega')$ to the set $(0,\qmax]\times[0,2\pi)$.

\begin{proposition}
The zero level curves of $\lowextomp, \uppintomp, \uppextomp$ divide
the plane $(q,\omega')$ into regions where different linking
configurations are allowed. Moreover, the curve $\uppextomp=0$ corresponds to the straight line $q=p'/2$.
\label{prop:ILEC_qomp}
\end{proposition}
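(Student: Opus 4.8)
The plan is to follow the scheme of Propositions~\ref{prop:ILEC_qom} and~\ref{prop:ILEC_qe}, reading off the boundaries between the allowed configurations directly from the optimal bounds of Proposition~\ref{prop:deltanod_qomp_gen} and Lemma~\ref{lem:omegapbounds}, and then settling the \emph{moreover} part by a short computation. First I would invoke Lemma~\ref{lem:omegapbounds}: by the optimal lower bound \eqref{lowerbomp} (which is attained at $e=0$) together with property $b)$ of Lemma~\ref{lem:delta}, for a given $(q,\omega')\in{\cal D}_6$ we have external nodes for every $(e,\omega)\in{\cal D}_5$ if and only if $\lowextomp(q,\omega')>0$, so the region in which only external nodes occur is delimited by $\lowextomp=0$. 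By contrast, \eqref{minusinf} gives $\inf_{{\cal D}_5}\deltaint=-\infty$, so there is no region in which only internal nodes are allowed; this is the qualitative difference from the $(q,e)$ case. Next, using Lemma~\ref{lem:delta} together with the maximal values $\uppintomp$, $\uppextomp$, $\upplinkomp$ computed in the proof of Proposition~\ref{prop:deltanod_qomp_gen}, I would record that for a given $(q,\omega')$ internal nodes (respectively external nodes, respectively linked orbits) occur for some $(e,\omega)\in{\cal D}_5$ precisely when $\uppintomp>0$ (respectively $\uppextomp>0$, $\upplinkomp>0$).

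It then remains to explain why only the three curves $\lowextomp=0$, $\uppintomp=0$, $\uppextomp=0$ appear in the statement. The point is the identity $\upplinkomp(q,\omega')=-\lowextomp(q,\omega')$, immediate from \eqref{upplinkomp} and the expression of $\lowextomp$: the curve $\upplinkomp=0$ coincides with $\lowextomp=0$, so the threshold beyond which external nodes become forced is exactly the threshold beyond which linked orbits cease to be possible. To order the curves I would compare the three critical values $q=p'/2$, $q=p'/(1+e'\cos\omega')$ (from $\uppintomp=0$) and $q=p'/(1-e'\cos\omega')$ (from $\upplinkomp=0$, equivalently $\lowextomp=0$); since $e'\in(0,1)$ and $\omega'\in[0,\pi/2]$ one has $p'/2<q'\le p'/(1+e'\cos\omega')\le p'\le p'/(1-e'\cos\omega')\le Q'$, which fixes the left-to-right arrangement of the three curves and hence the succession of regions as $q$ increases.

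The main and only genuinely computational step is the \emph{moreover} claim that $\uppextomp=0$ is the vertical line $q=p'/2$. Here I would exploit the two equivalent forms of $\uppextomp$ in \eqref{uppextomp}. Setting both to zero yields
\[
\frac{1+\cos\omega_*}{2q}=\frac{1+e'\cos\omega'}{p'},\qquad
\frac{1-\cos\omega_*}{2q}=\frac{1-e'\cos\omega'}{p'};
\]
adding these two relations eliminates $\cos\omega_*$ and gives $1/q=2/p'$, i.e.\ $q=p'/2$. For the converse I would substitute $q=p'/2$ into \eqref{cosomegastar}: writing $c=e'\cos\omega'$ and using the identity $(1-c^2)^2+4c^2=(1+c^2)^2$, the square root simplifies and $\cos\omega_*$ reduces to $c=e'\cos\omega'$, whence $\uppextomp=\frac{2q}{1+\cos\omega_*}-\frac{p'}{1+e'\cos\omega'}=0$ for every $\omega'\in[0,\pi/2]$. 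Thus $\{\uppextomp=0\}$ is exactly $\{q=p'/2\}$, completing the proof; I expect the simplification of the radical to be the only place requiring care, the remainder being bookkeeping with the already-established optimal bounds.
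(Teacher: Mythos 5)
Your proposal is correct and follows essentially the same route as the paper: the region boundaries are read off from Lemma~\ref{lem:omegapbounds} and the optimal bounds $\uppintomp$, $\uppextomp$, $\upplinkomp$, and the \emph{moreover} part is settled exactly as in the paper, by eliminating $\cos\omega_*$ from the two equivalent vanishing forms of $\uppextomp$ to get $q=p'/2$, and conversely substituting $q=p'/2$ into \eqref{cosomegastar} to find $\cos\omega_*=e'\cos\omega'$. Your added observations --- the identity $\upplinkomp=-\lowextomp$ (which the paper only states in the subsequent remark) and the left-to-right ordering $p'/2<q'\le p'/(1+e'\cos\omega')\le p'\le p'/(1-e'\cos\omega')\le Q'$ --- are correct refinements but do not change the argument.
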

\begin{figure}[h!]
  \begin{center}
    \includegraphics[width=10cm]{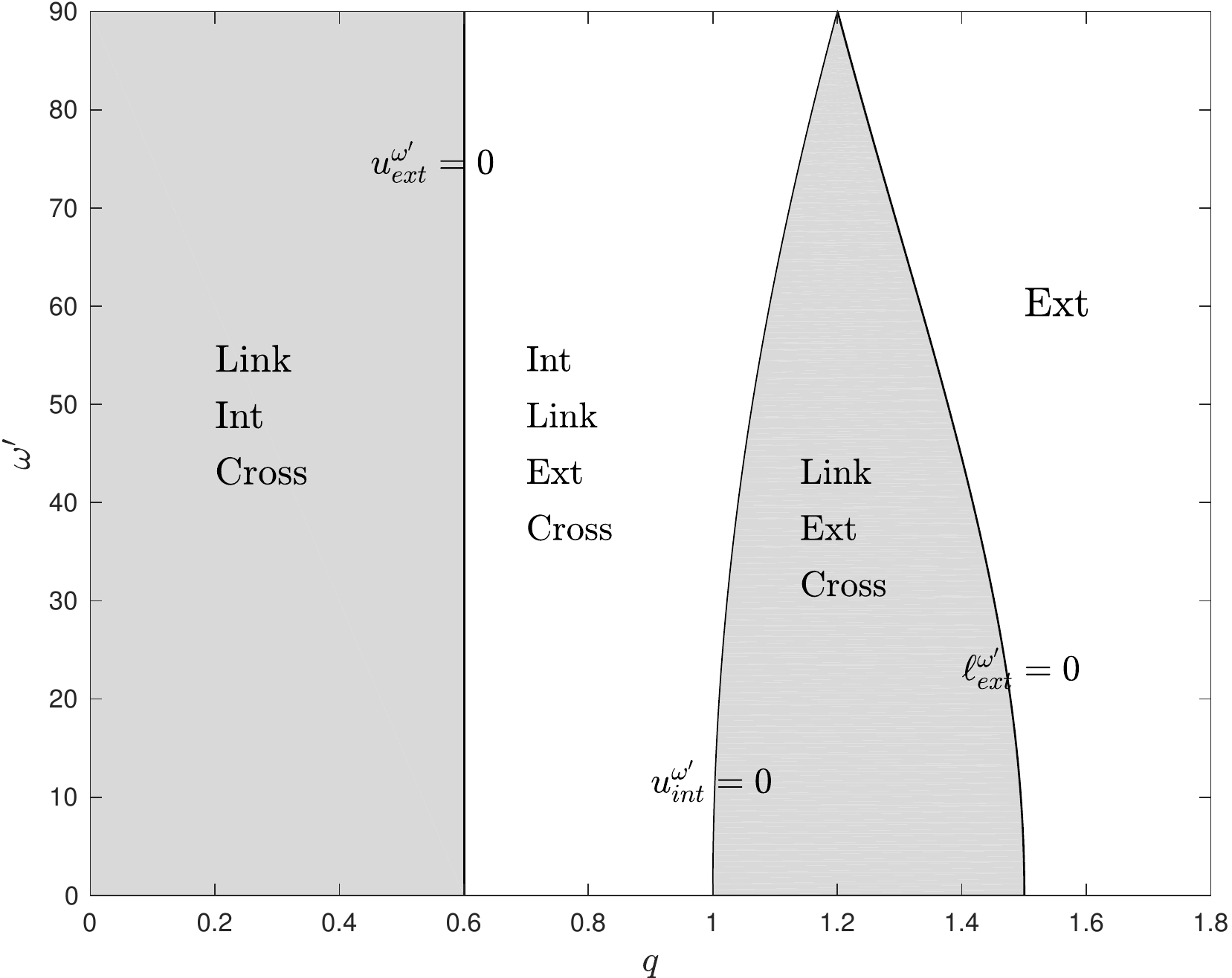}
  \end{center}
  \caption{Regions with different linking configurations in the plane $(q,\omega')$ for $q'=1$ and $e'=0.2$.}
  \label{fig:qomp}
\end{figure}
\begin{proof}
  By relation \eqref{minusinf}, given $(q,\omega')\in{\cal D}_6$, we
  can not have internal nodes for each $(e,\omega)\in{\cal D}_5$.
  Moreover, we have only external nodes if and only if
  $\lowextomp(q,\omega')>0$, i.e. for $q>p'/(1-e'\cos\omega')$. On the
  other hand, we have internal nodes for some choice of $(e,\omega)$
  if and only if $\uppintomp(q,\omega')>0$, i.e. for
  $q<p'/(1+e'\cos\omega')$.  Moreover, we have external nodes
  (resp. linked orbits) for some choice of $(e,\omega)$ if and only if
  $\uppextomp(q,\omega')>0$ (resp. $\upplinkomp(q,\omega')>0$).

\smallbreak We describe the shape of the curve $\uppextomp(q,\omega')=0$.

\noindent Eliminating $\cos\omega_*$ from equations
  \[
  \frac{p'}{1+e'\cos\omega'} - \frac{2q}{1+\cos\omega_*} =
  \frac{p'}{1-e'\cos\omega'} - \frac{2q}{1-\cos\omega_*} = 0
  \]
  we obtain
  \begin{equation}
  q = \frac{p'}{2}.
  \label{vertline}
  \end{equation}
  Vice versa, substituting $q=p'/2$ into \eqref{cosomegastar} we obtain
  \[
   \cos\omega_* = e'\cos\omega',
  \]
  so that $\uppextomp(q,\omega') = 0$ for each $\omega'\in[0,\pi/2]$.
  Therefore the curve $\uppextomp=0$ is the straight line defined by
  \eqref{vertline}.

\end{proof}

\begin{remark}
  There can not exist $(q,\omega')\in{\cal D}_6$ such that we have
  linked orbits for each $(e,\omega)\in{\cal D}_5$.
\end{remark}
\begin{proof} If
  $\deltalink(q,e,\omega,\omega')>0$ for each $(e,\omega)\in{\cal
    D}_5$ then in particular $\upplinkomp(q,\omega')>0$ and this
  corresponds to $\lowextomp(q,\omega')<0$ because
  $\upplinkomp=-\lowextomp$ . Therefore, by {\em ii)} of
  Lemma~\ref{lem:omegapbounds}, there exists $(e,\omega)$
  corresponding to a crossing configuration, that yields a
  contradiction.

\end{proof}

\noindent In Figure~\ref{fig:qomp} we show the possible linking
configurations for $q'=1$ and $e'=0.2$.

\subsection{Bounds for $\deltanod$ when $e'=0$}

In this section we consider the particular case $e'=0$, where ${\cal
  A}'$ is circular.
We recall some results proved in \cite{GV2013} concerning the orbit
distance $\dmin$, that is the distance between the sets ${\cal A}$ and
${\cal A}'$, and compare them with the corresponding results for the
nodal distance $\deltanod$, that can be obtained by setting $e'=0$ in
the statements of Propositions~\ref{prop:deltanod_qom_gen},
\ref{prop:deltanod_qe_gen}, \ref{prop:deltanod_qomp_gen}.

Assume $q'>0$ is given and let $e'=0$. The following proposition, proved in
\cite{GV2013}, gives optimal bounds for $\dmin$ as functions of
$(q,\omega)$.
\begin{proposition}
Set ${\cal D}_1' = \{(e,I): 0\leq e\leq 1, 0\leq I\leq \pi/2\}$ and
${\cal D}_2 = \{(q,\omega): 0< q\leq \qmax, 0\leq \omega\leq \pi/2\}$.
For each choice of $(q,\omega)\in{\cal D}_2$ we have
\[
\begin{array}{l}
\displaystyle \min_{(e,I)\in{\cal D}_1'} \dmin = \max\{0,q-q'\},\\
\displaystyle \max_{(e,I)\in{\cal D}_1'} \dmin = \max\{q'-q,\delta_{\omega}(q,\omega)\},\\
\end{array}
\]
where $\delta_{\omega}(q,\omega)$ is the distance 
between ${\cal A}'$ and ${\cal
  A}$ with $e=1, I=\pi/2$:
\begin{equation}
\delta_{\omega}(q,\omega) =  
\sqrt{(\xi-q'\sin\omega)^2 + 
\Bigl(\frac{\xi^2 - 4q^2}{4q} + q'\cos\omega\Bigr)^2}\,,
\label{deltaomega}
\end{equation}
with $\xi=\xi(q,\omega)$ the unique real solution of
\[
x^3 + 4q(q+\cos\omega)x - 8q'q^2\sin\omega = 0.
\]
\label{prop:qom}
\end{proposition}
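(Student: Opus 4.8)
The plan is to pass to the mutual frame in which $\mathcal{A}'$ is the circle of radius $q'$ in the plane $z=0$ centred at the common focus $O$, so that the distance from $P=(x,y,z)$ to $\mathcal{A}'$ is $\sqrt{(\rho-q')^2+z^2}$ with $\rho=\sqrt{x^2+y^2}$, and $\dmin=\min_{P\in\mathcal{A}}\sqrt{(\rho-q')^2+z^2}$. For the lower bound I would argue that, since the pericentre distance of $\mathcal{A}$ is $q$, every point satisfies $r:=|OP|\geq q$; the elementary identity $(\rho-q')^2+z^2-(r-q')^2=2q'(r-\rho)\geq 0$ then gives $\dmin\geq r-q'\geq q-q'$ whenever $q>q'$, and combined with $\dmin\geq 0$ yields $\dmin\geq\max\{0,q-q'\}$. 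Attainment is obtained by taking $I=0$ (admissible, since $0\in[0,\pi/2]$): if $q>q'$ the choice $e=0$ gives two concentric coplanar circles and $\dmin=q-q'$; if $q\leq q'$ I pick $e$ with apocentre $q(1+e)/(1-e)\geq q'$, so the coplanar conic $\mathcal{A}$ has a point at focal distance exactly $q'$, which lies on $\mathcal{A}'$, whence $\dmin=0$. This proves the lower bound.

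For the upper bound the key remark is that changing $I$ rotates the orbital plane about the nodal line through $O$, and such a rotation preserves $|OP|$. Labelling the points of $\mathcal{A}$ by true anomaly $f$, the in‑plane coordinates $(X(f),Y(f))$ are independent of $I$, while inclination sends them to $(X,Y\cos I,Y\sin I)$; hence $r$ is independent of $I$ and $\rho^2=X^2+Y^2\cos^2 I$ is non‑increasing in $I$. Using $\rho^2+z^2=r^2$ I get
\[
(\rho-q')^2+z^2=r^2+q'^2-2q'\rho,
\]
which for each fixed $f$ is non‑decreasing in $I$. Taking the minimum over $f$, I conclude that $\dmin$ is a non‑decreasing function of $I$, so for every $e$ the maximum over $I\in[0,\pi/2]$ is attained at $I=\pi/2$. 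At $I=\pi/2$ the orbit lies in the plane $y=0$, every orbital point has azimuth $0$ or $\pi$, and its nearest point on $\mathcal{A}'$ is a node $(\pm q',0,0)$; by the law of cosines $\dmin(e,\pi/2)$ is exactly the Euclidean distance from the planar conic $\mathcal{A}$ to the node $N=(q',0,0)$.

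It then remains to maximise $g(e):=\dmin(e,\pi/2)=\mathrm{dist}(\mathcal{A},N)$ over $e\in[0,1]$. Here I would use that the conics in this family share the focus $O$ and the pericentre point $(q\cos\omega,0,q\sin\omega)$ and are radially nested: along each fixed direction $\theta$ the radius $r=q(1+e)/(1+e\cos(\theta-\omega))$ has $e$‑derivative $q(1-\cos f)/(1+e\cos f)^2\geq 0$, so the conic expands monotonically outward as $e$ grows. Consequently $g(e)$ is unimodal — it decreases while $N$ is outside $\mathcal{A}$ and increases once $N$ is inside (vanishing at the crossing value of $e$, if any) — and a unimodal function on $[0,1]$ attains its maximum at an endpoint, giving $\max_e g(e)=\max\{g(0),g(1)\}$. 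Finally $g(0)=|q-q'|$ is the circle of radius $q$ (equal to $q'-q$ exactly in the regime $q<q'$ where this endpoint wins), while $g(1)=\delta_\omega$: parametrising the parabola by its transverse pericentric coordinate $\xi$, with focal‑axis coordinate $(4q^2-\xi^2)/(4q)$, and minimising the squared distance to $N$ reproduces after a rotation by $\omega$ the expression \eqref{deltaomega}, whose stationarity condition is precisely the stated cubic; since that cubic is strictly increasing in $x$ (its linear coefficient $4q(q+\cos\omega)$ is positive for $\omega\in[0,\pi/2]$) it has a unique real root, so $\delta_\omega$ is well defined. Combining with the monotonicity in $I$ yields $\max_{(e,I)}\dmin=\max\{q'-q,\delta_\omega\}$.

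The two routine ingredients are the lower bound and the reduction to $I=\pi/2$, both immediate from the inclination‑invariance of $|OP|$. I expect the \textbf{main obstacle} to be the rigorous justification that $g(e)$ is unimodal, i.e. that $\dmin(e,\pi/2)$ has no interior local maximum on $[0,1]$. The nesting argument is geometrically compelling but requires care about (i) the book‑keeping between the two nodes $\pm N$, equivalently the sign of the abscissa of the foot of the minimising perpendicular, and (ii) the possibility that this foot jumps as $e$ varies. A clean way to close the gap is to show directly that any $e\in(0,1)$ realising a local maximum of $g$ would be a crossing‑free stationary configuration, and then to compare it unfavourably with the parabolic endpoint $e=1$ by means of the radial‑nesting inequality above.
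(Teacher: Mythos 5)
This proposition is not actually proved in the paper you were given: it is recalled verbatim from \cite{GV2013}, so there is no internal proof to compare against and your argument has to stand on its own. It does not, because of a genuine gap at the step where you replace the circle by a single node.

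At $I=\pi/2$ the nearest point of ${\cal A}'$ to an orbital point with coordinates $(X,0,Y)$ in the nodal frame is $(q'\,\mathrm{sign}(X),0,0)$, so what your law-of-cosines step really gives is $\dmin(e,\pi/2)=\min\bigl\{\mathrm{dist}({\cal A},N),\,\mathrm{dist}({\cal A},-N)\bigr\}$, the minimum over \emph{both} nodes, not $\mathrm{dist}({\cal A},N)$ alone. The two differ in general: take $q'=1$, $q=0.01$, $\omega=0$, $e=1$. The closest point of the parabola to $N=(q',0,0)$ is its vertex, at distance $q'-q=0.99$, and indeed \eqref{deltaomega} gives $\delta_\omega=0.99$; but the arms of the parabola pass at distance $2\sqrt{qq'}=0.2$ from $-N=(-q',0,0)$, so $\dmin(1,\pi/2)=0.2\neq\delta_\omega$. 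Hence your function $g$ is not $\dmin(e,\pi/2)$, the endpoint identity $g(1)=\delta_\omega$ is false, and what you actually maximize, $e\mapsto\mathrm{dist}({\cal A},N)$, is only an upper bound for the quantity required. (The same example shows that \eqref{deltaomega} is, precisely, the distance from the parabola to the \emph{ascending} node, which need not be the distance to ${\cal A}'$; the proposition survives this only because of the fact below.)

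The stated formula is nevertheless correct, but its proof needs exactly the ingredient you are missing: whenever $\delta_\omega>|q-q'|$ (the only regime in which the parabolic endpoint can win), one must show $\mathrm{dist}({\cal A},N)\le\mathrm{dist}({\cal A},-N)$ at $e=1$, so that $\delta_\omega$ is really attained by $\dmin$ at $(e,I)=(1,\pi/2)$; otherwise the claimed maximum might not be attained at all. This can be proved as follows: if the ascending nodal point satisfies $\rpiu|_{e=1}=2q/(1+\cos\omega)\le q'$, then $\delta_\omega\le q'-\rpiu|_{e=1}\le q'-q\le|q-q'|$; so in the regime above both nodes lie strictly inside the convex region bounded by the parabola (using $\rmeno\ge\rpiu$ for $\omega\in[0,\pi/2]$), and for interior points the distance to the parabola is the infimum of the distances to its tangent lines, each of which is affine and decreasing in the apsidal abscissa of the point; since the mirror image of $N$ in the apsidal axis is equidistant from the parabola, has the same ordinate as $-N$ and larger abscissa, the comparison follows. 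Without such a lemma the endpoint reduction collapses: $\dmin(e,\pi/2)$ is the minimum of \emph{two} valley-shaped functions of $e$, and a minimum of two quasi-convex functions can have an interior local maximum, so ``unimodal, hence maximized at an endpoint'' is not available for it. Your own plan identifies unimodality as the obstacle but files the two-node comparison under bookkeeping, when in fact it is the crux. The remaining steps of your proposal (the lower bound, the monotonicity of $\dmin$ in $I$, the quasi-convexity of $e\mapsto\mathrm{dist}({\cal A},P)$ for a \emph{fixed} point $P$ via nesting, and the derivation of the cubic from stationarity of \eqref{deltaomega}) are correct.
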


We compare the above result with the following.
\begin{proposition}
Set ${\cal D}''_1 = \{e: 0\leq e\leq 1\}$ and
${\cal D}_2 = \{(q,\omega): 0< q\leq \qmax, 0\leq \omega\leq {\pi}/{2}\}$.
For each choice of $(q,\omega)\in{\cal D}_2$ we have
\begin{eqnarray}
&&\displaystyle \min_{e\in{\cal D}_1''} \deltanod = \max\left\{0, q'-\frac{2q}{1-\cos\omega}, q-q'\right\}\label{lbom_ep0},\\
&&\displaystyle \max_{e\in{\cal D}_1''} \deltanod =\max\{q'-q, \frac{2q}{1+\cos\omega} - q'\}.\label{ubom_ep0}
\end{eqnarray}
\label{prop:deltanod_qom}
\end{proposition}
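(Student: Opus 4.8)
The plan is to specialize the construction used for Proposition~\ref{prop:deltanod_qom_gen} to the degenerate case $e'=0$, where the optimization collapses to a one-variable problem in $e$. Setting $e'=0$ we have $\rpiup=\rmenop=q'$, so that $p'=Q'=q'$ and
\[
\dnod^+ = q'-\rpiu,\qquad \dnod^- = q'-\rmeno
\]
no longer depend on $\omega'$; hence $\min$ and $\max$ over $\mathcal{D}_1''$ reduce to optimizing over $e\in[0,1]$. First I record the two facts that drive everything. By Lemma~\ref{der_om_omp_e} both $\dnod^+$ and $\dnod^-$ are non-increasing in $e$, with common value $q'-q$ at $e=0$ and terminal values $\dnod^+|_{e=1}=q'-\frac{2q}{1+\cos\omega}$ and $\dnod^-|_{e=1}=q'-\frac{2q}{1-\cos\omega}$. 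Moreover, since $\omega\in[0,\pi/2]$ forces $\cos\omega\geq 0$, we have $\rmeno\geq\rpiu$, i.e. $\dnod^-\leq\dnod^+$ for every $e$; in particular $\deltaint=\dnod^-$ and $\deltaext=-\dnod^+$.

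For the lower bound I argue by the same trichotomy as in Lemma~\ref{lem:omegabounds}, but now directly in $e$. If $\lowintom=q'-\frac{2q}{1-\cos\omega}>0$, then $\dnod^-|_{e=1}>0$, so by monotonicity $\dnod^-(e)>0$ and $\dnod^+\geq\dnod^->0$ for all $e$: internal nodes throughout, $\deltanod=\dnod^-$, and the minimum is attained at $e=1$, equal to $\lowintom$. Symmetrically, if $\lowextom=q-q'>0$, then $\dnod^+|_{e=0}=q'-q<0$, so $\dnod^+(e)<0$ and $\dnod^-\leq\dnod^+<0$ for all $e$: external nodes throughout, $\deltanod=-\dnod^+$, minimized at $e=0$ with value $\lowextom$. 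Finally, if $\lowintom\leq 0$ and $\lowextom\leq 0$, then $\dnod^-|_{e=0}=q'-q\geq 0$ while $\dnod^-|_{e=1}\leq 0$, so by the intermediate value theorem $\dnod^-$ vanishes for some $e$, yielding a crossing and $\min_e\deltanod=0$. Since $\lowintom>0$ forces $\lowextom<0$ and vice versa, these three cases are exhaustive and consistent with $\max\{0,\lowintom,\lowextom\}$, which gives \eqref{lbom_ep0}.

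For the upper bound I maximize $\deltaint$, $\deltaext$ and $\deltalink$ separately and then invoke \eqref{deltanod_max}. As $\deltaint=\dnod^-$ is non-increasing in $e$, its maximum is $\uppintom=\dnod^-|_{e=0}=q'-q$; as $\deltaext=-\dnod^+$ is non-decreasing in $e$, its maximum is $\uppextom=-\dnod^+|_{e=1}=\frac{2q}{1+\cos\omega}-q'$. The only point requiring care — the analogue of the delicate $\upplinkom$ computation in Proposition~\ref{prop:deltanod_qom_gen} — is to check that the linking term contributes nothing new, and in this degenerate case it does so immediately. For $\omega\in[0,\pi/2]$ the inequality $\dnod^-\leq\dnod^+$ makes $\deltalink^{(i)}=\min\{-\dnod^+,\dnod^-\}\leq 0$ for all $e$, while $\deltalink^{(ii)}=\min\{\dnod^+,-\dnod^-\}\leq\dnod^+\leq\dnod^+|_{e=0}=q'-q=\uppintom$. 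Hence $\deltalink\leq\max\{\uppintom,\uppextom\}$, and \eqref{deltanod_max} gives $\max_e\deltanod=\max\{\uppintom,\uppextom\}=\max\{q'-q,\frac{2q}{1+\cos\omega}-q'\}$, namely \eqref{ubom_ep0}. The main (and essentially only) obstacle is thus dispatched by the elementary bound $\deltalink^{(ii)}\leq\dnod^+|_{e=0}$, so that no eccentricity cut-off $\hat{e}_*$ is needed here, which is exactly why the general formula of Proposition~\ref{prop:deltanod_qom_gen} simplifies upon setting $e'=0$.
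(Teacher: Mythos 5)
Your proof is correct, but it takes a different route from the paper's. The paper proves this proposition as a direct corollary of Proposition~\ref{prop:deltanod_qom_gen}: it sets $e'=0$ in the general formulas, notes that $Q'=p'=q'$ and $\hat{\xi}'_*=0$, uses Lemma~\ref{der_om_omp_e} to get $\upplinkom\leq\uppintom$, observes that $\uppextom$ collapses to $\frac{2q}{1+\cos\omega}-q'$, and concludes that \eqref{lbom}, \eqref{ubom} reduce to \eqref{lbom_ep0}, \eqref{ubom_ep0}. You instead re-derive both bounds from scratch, using only Lemma~\ref{der_om_omp_e} and the decomposition \eqref{deltanod_max}, exploiting the two collapses at $e'=0$: independence of $\omega'$, and the ordering $\dnod^-\leq\dnod^+$ for $\omega\in[0,\pi/2]$, which gives $\deltaint=\dnod^-$, $\deltaext=-\dnod^+$ and makes $\deltalink^{(i)}\leq 0$ automatically. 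Your version is self-contained and more transparent: it avoids the cut-offs $\hat{e}_*$, $\hat{\xi}'_*$ entirely, exhibits the optimizers explicitly ($e=1$ for the internal minimum, $e=0$ for the external one), and produces the crossing configuration by the intermediate value theorem rather than by citing Lemma~\ref{lem:omegabounds}; the cost is that it duplicates, in the special case, work the general proposition already did. One step you leave implicit: from $\deltalink^{(i)}\leq 0$ and $\deltalink^{(ii)}\leq\uppintom$ you only obtain $\deltalink\leq\max\{0,\uppintom\}$, so to conclude $\deltalink\leq\max\{\uppintom,\uppextom\}$ you also need $\max\{\uppintom,\uppextom\}\geq 0$; this is immediate, since $\uppintom+\uppextom=\frac{2q}{1+\cos\omega}-q\geq 0$ forces at least one of the two to be non-negative, but it should be said, as the same fact is what guarantees that the right-hand side of \eqref{ubom_ep0} is attained and not merely an upper bound.
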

\begin{proof}

We consider the statement of Proposition~\ref{prop:deltanod_qom_gen}
for $e'=0$, so that $Q'=p'=q'$.
By Lemma~\ref{der_om_omp_e} we obtain
\[
\upplinkom(q,\omega) \leq Q' -
\frac{q(1+\hat{e}_*)}{1+\hat{e}_*\cos\omega} = p' -
\frac{q(1+\hat{e}_*)}{1+\hat{e}_*\cos\omega} \leq p'-q = \uppintom(q,\omega).
\]
Moreover, for $e'=0$ we have $\hat{\xi}'_* = 0$, therefore
\[
\frac{2q}{1-\cos\omega} - \frac{p'}{1-\hat{\xi}_*'} =
\frac{2q}{1-\cos\omega} - q',
\]
so that
\[
\uppextom(q,\omega) = \frac{2q}{1+\cos\omega} - q',
\]
and \eqref{lbom},  \eqref{ubom} reduce to \eqref{lbom_ep0}, \eqref{ubom_ep0}.

\end{proof}

\begin{figure}[t!]
\centerline{\includegraphics[width=7cm]{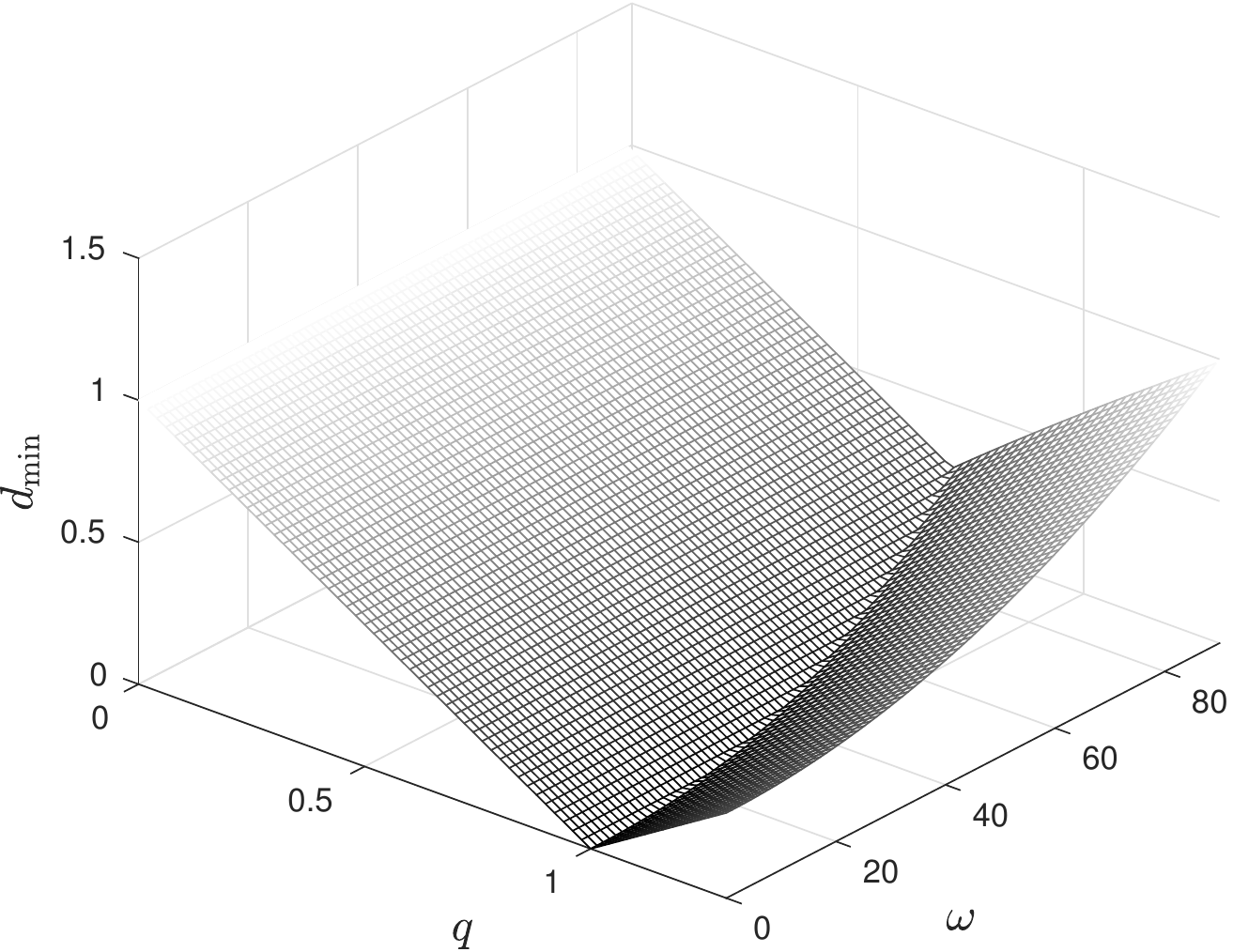}\hskip
  0.5cm
  \includegraphics[width=7cm]{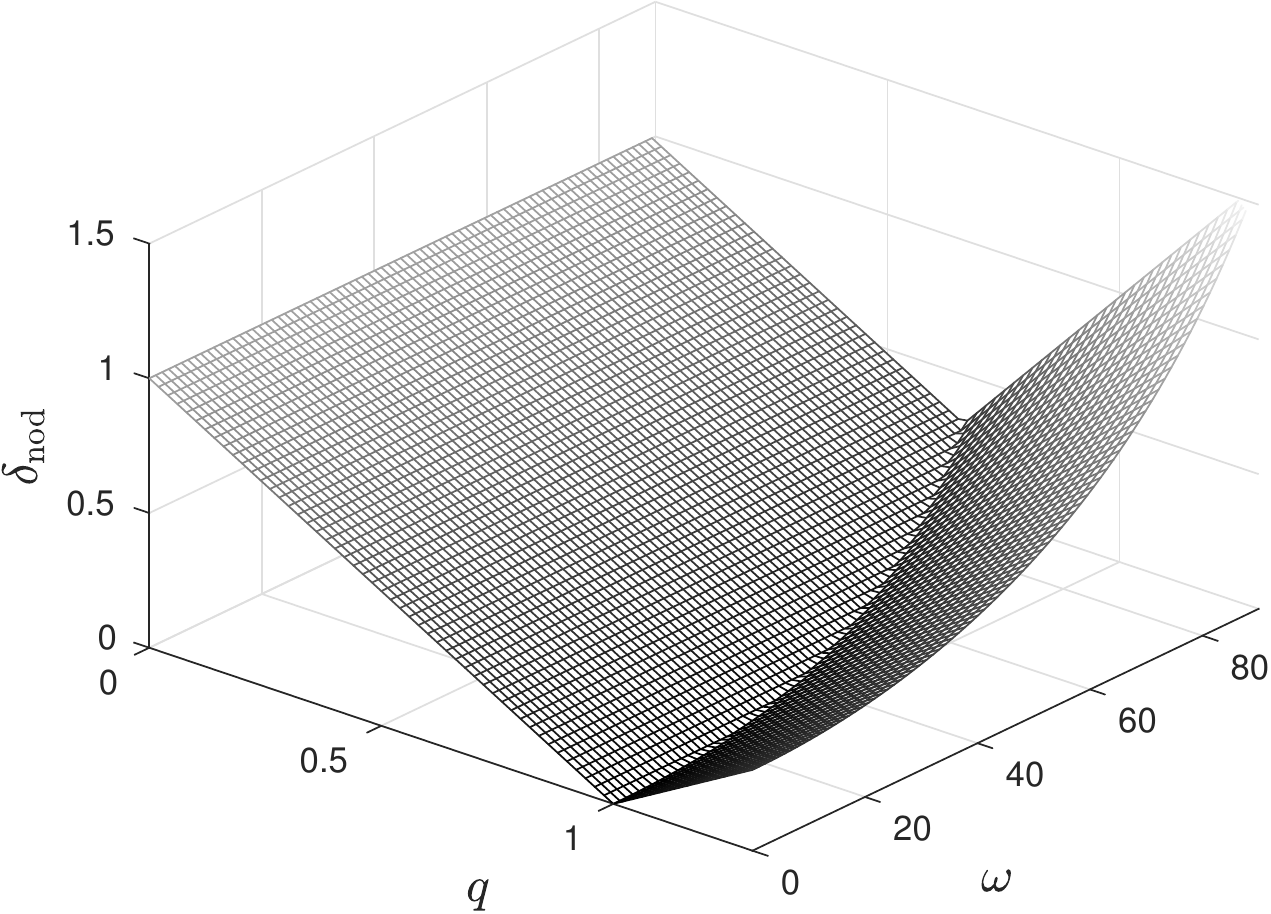}}
\caption{Left: $\max_{(e,I)\in{\cal D}_1'}\dmin(q,\omega)$. Right:
  $\max_{e\in{\cal D}_1''}\deltanod(q,\omega)$.}
\label{fig:bounds_qom_circ}
\end{figure}

\noindent In Figure~\ref{fig:bounds_qom_circ}, for $q'=1$, we show the graphics of
$\max_{(e,I)\in{\cal D}_1'}\dmin(q,\omega)$ on the left, and of
$\max_{e\in{\cal D}_1''}\deltanod(q,\omega)$ on the right.

\begin{figure}[t!]
\centerline{\includegraphics[width=9cm]{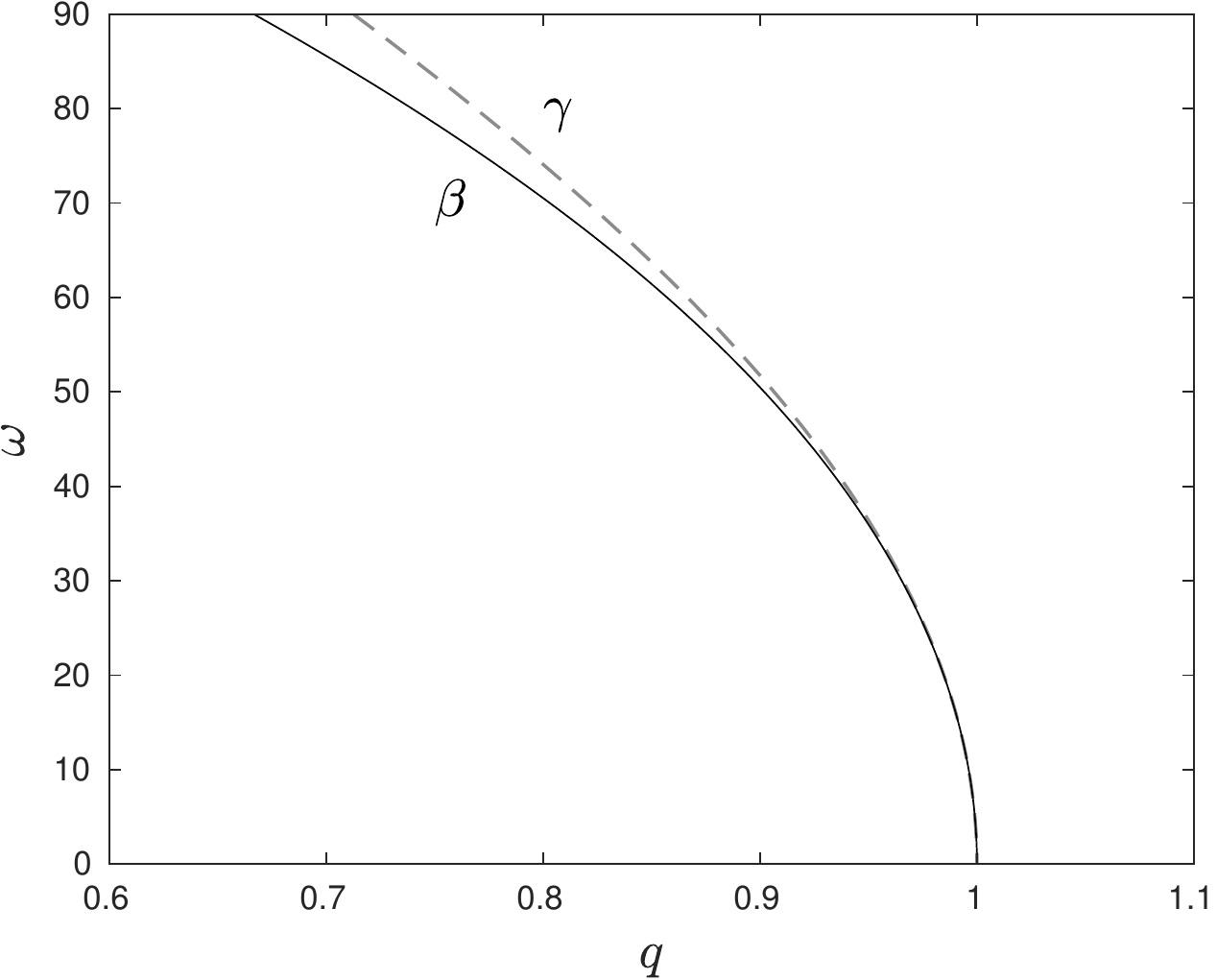}}
\caption{Comparison between the curves $\gamma$ and $\beta$.}
\label{fig:gamma_beta}
\end{figure}

In \cite{GV2013} the authors introduced the equation of a curve,
denoted by $\gamma$, which separates the region in the plane
$(q,\omega)$ where the trajectories maximizing $\dmin$ over ${\cal D}_1'$ have $e=0$, from
the region where such trajectories have $e=1$, that is, $\gamma$ is the set
of points $(q,\omega)$ where $q'-q$ and $\delta_\omega(q,\omega)$, defined in \eqref{deltaomega},
assume the same values. This equation is
\begin{equation}
  \begin{split}
&2q^4 + 2q'(-5+7y)q^3 -2q'^2(3y+22)(y-1)q^2 +\cr 
&  + q'^3(y^3 + 13y^2 + 9y - 27)q -2q'^4y^3 = 0,\cr
\end{split}
  \label{gamma_curve}
\end{equation}
with $y=\cos\omega$. The analogous equation for $\deltanod$ is
\begin{equation}
qy + 3q - 2q'y - 2q' = 0,
\label{beta_curve}
\end{equation}
that is easily obtained by equating $q'-q$ with
$\frac{2q}{1+\cos\omega}-q'$.  We denote by $\beta$ the curve defined
by \eqref{beta_curve}.  In Figure~\ref{fig:gamma_beta} we plot both
curves for comparison.

We also recall the following result (see \cite{GV2013}), stating
optimal bounds for the orbit distance $\dmin$ as functions of
$(q,e)$.\footnote{Here we state the result presented in \cite{GV2013} with a formula that is not
  singular for $e=1$.}
\begin{figure}[h!]
\centerline{\includegraphics[width=7cm]{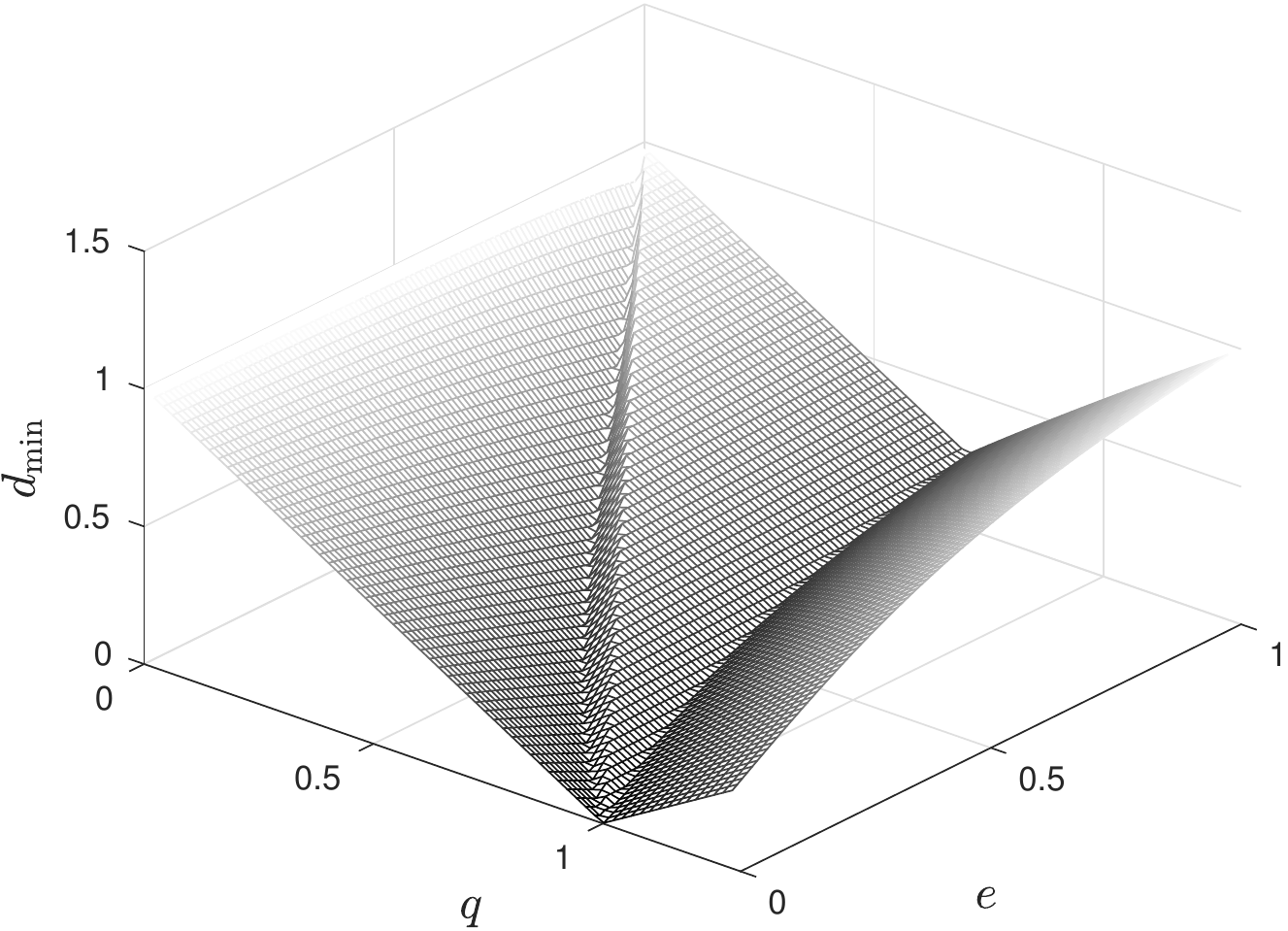}\hskip 0.5cm
\includegraphics[width=7cm]{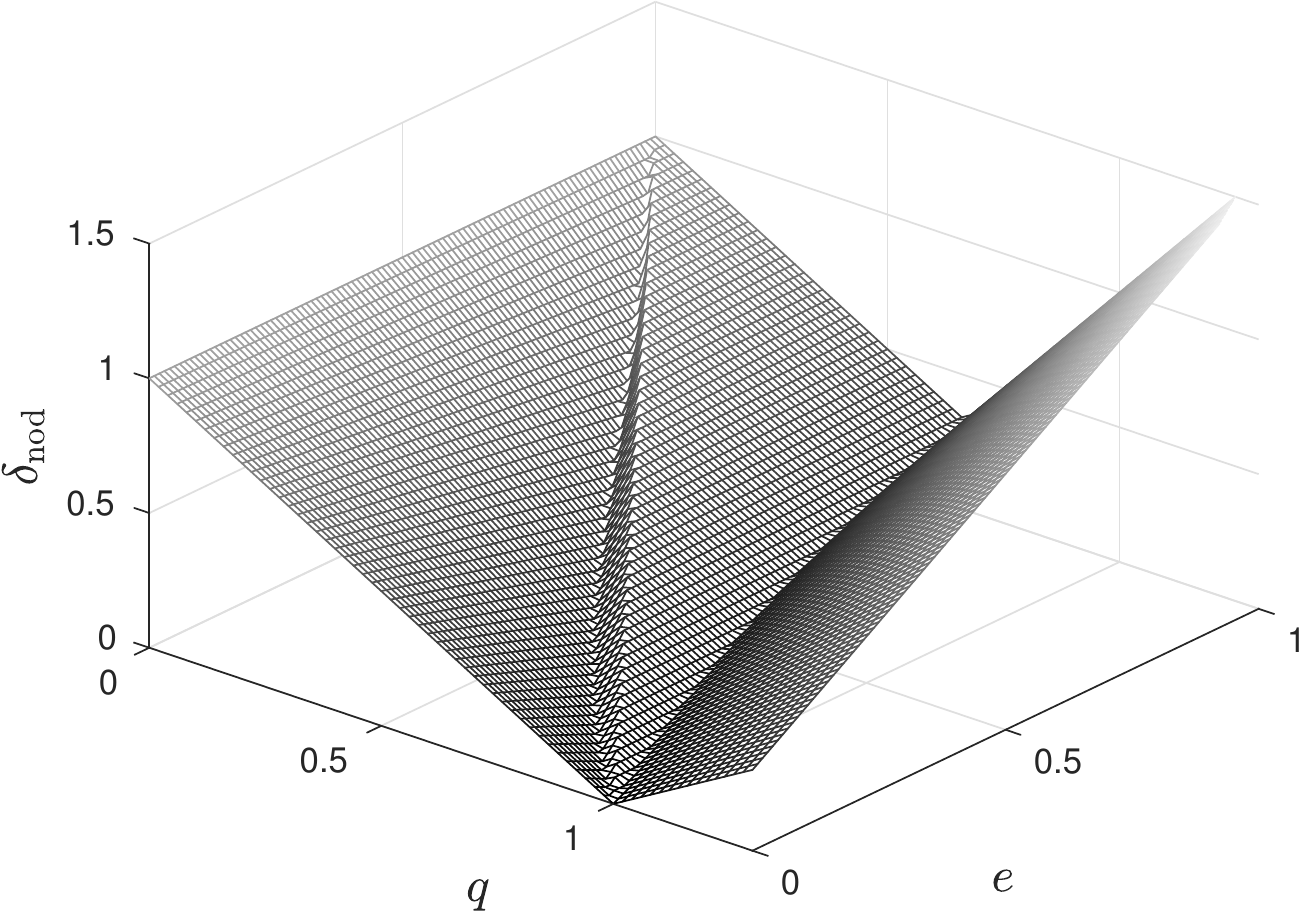}}
\caption{Left: $\max_{(I,\omega)\in{\cal D}_3'}\dmin(q,e)$. Right:
  $\max_{\omega\in{\cal D}_3''}\deltanod(q,e)$.}
\label{fig:bounds_qe_circ}
\end{figure}

\begin{proposition}
Set ${\cal D}_3' = \{(I,\omega): 0\leq I\leq \pi/2, 
0\leq \omega\leq \pi/2\}$ and
${\cal D}_4 = \{(q,e): 0< q\leq \qmax, 0\leq e\leq 1\}$.
For each choice of $(q,e)\in{\cal D}_4$ we have
\[
\begin{array}{l}
\displaystyle \min_{(I,\omega)\in{\cal D}_3'} \dmin =
\max\{0,q'-Q,q-q'\},\\
\displaystyle \max_{(I,\omega)\in{\cal D}_3'} \dmin = 
\max\{\min\{q'-q,Q-q'\},\delta_{e}(q,e)\},\\
\end{array}
\]
where $Q=q(1+e)/(1-e)$ is the (possibly infinite) apocenter distance
and $\delta_{e}(q,e)$ is the distance between ${\cal A}'$ and ${\cal
  A}$ with $I=\pi/2,\omega=\pi/2$:
\[
\delta_{e}(q,e) = \sqrt{(\xi-q')^2 + \biggl(\frac{\xi^2-q^2(1+e)^2}{qe(1+e) +\sqrt{(1+e)(q^2-\xi^2(1-e))}}\biggr)^2}
\]
where $\xi=\xi(q,e)$ is the unique real positive solution of
\[
\begin{split}
&e^4 x^4 + 2q'e^2(1-e^2)x^3 + (1+e)^2(q'^2(1-e)^2+q^2e^2)x^2 \\
-\: &2q'q^2e^2(1+e)^2x - q'^2q^2(1-e^2)(1+e)^2 = 0 .\\
\end{split}
\]
\label{prop:qe}
\end{proposition}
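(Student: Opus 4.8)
The plan is to treat the two equalities separately, using throughout that $e'=0$ makes ${\cal A}'$ the circle of radius $q'$ centred at the common focus $O$ and lying in the reference plane. Write $Oz$ for the axis through $O$ orthogonal to that plane; for a point $P\in{\cal A}$ denote by $r$ its focal distance and by $\rho\le r$ its distance from $Oz$. Then the distance of $P$ from ${\cal A}'$ is $\sqrt{(\rho-q')^2+(r^2-\rho^2)}=\sqrt{r^2+q'^2-2q'\rho}$, which is $\ge |r-q'|$ because $\rho\le r$, with equality exactly when $P$ lies in the reference plane. This elementary estimate drives both halves of the proof.

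For the minimum, the pointwise bound together with $r\in[q,Q]$ gives $\dmin\ge\max\{0,q-q',q'-Q\}$ for every $(I,\omega)\in{\cal D}_3'$. Each of these values is attained by a coplanar configuration $I=0$: aligning the pericenter (resp. the apocenter) with a radius of ${\cal A}'$ realises $q-q'$ when $q>q'$ (resp. $q'-Q$ when $Q<q'$), while if $q\le q'\le Q$ the coplanar orbit meets the circle and $\dmin=0$. This yields the first formula.

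For the maximum I would parametrise $P$ by its true anomaly $f$, so that with $\Omega=0$ one has $d^2(f)=r(f)^2+q'^2-2q'r(f)\sqrt{1-\sin^2(\omega+f)\sin^2 I}$, where $r(f)=q(1+e)/(1+e\cos f)$. Since the radical, hence $\rho$, is non-increasing in $I$ on $[0,\pi/2]$ for every $f$, the function $d^2(f)$ is non-decreasing in $I$; passing to the minimum over $f$ gives $\dmin(I,\omega)\le\dmin(\pi/2,\omega)$, so the maximum is attained at $I=\pi/2$, where $d^2(f)=r(f)^2+q'^2-2q'r(f)\,|\cos(\omega+f)|$. The decisive step is then to show that $\omega\mapsto\dmin(\pi/2,\omega)$ attains its maximum on $[0,\pi/2]$ at an endpoint. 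At $\omega=0$ the apsides lie in the reference plane at focal distances $q$ and $Q$ and carry the minimum of $d(f)$, so $\dmin=\min\{|q-q'|,|Q-q'|\}$, which equals $\min\{q'-q,Q-q'\}$ in the straddling regime $q\le q'\le Q$ and is otherwise dominated by the value at $\omega=\pi/2$. At $\omega=\pi/2$ the orbit lies in a plane through $Oz$ with pericenter on $Oz$; imposing $\frac{d}{df}d^2(f)=0$ and eliminating $f$ in favour of the distance $\xi=\rho$ of the nearest point from $Oz$ (via $z=\pm\sqrt{r^2-\xi^2}$ together with $r=q(1+e)-ez$, which makes $r$ an algebraic function of $\xi$ and removes the apparent singularity at $e=1$) produces the quartic in the statement, whose unique positive root gives $\delta_e(q,e)$. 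Comparing the two endpoint values yields $\max_{(I,\omega)\in{\cal D}_3'}\dmin=\max\{\min\{q'-q,Q-q'\},\delta_e(q,e)\}$.

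The main obstacle I anticipate is precisely the reduction in $\omega$: in contrast with the monotone dependence on $I$, the dependence on $\omega$ is not monotone, so ruling out an interior maximum requires tracking how the critical points of $f\mapsto d^2(f)$ migrate with $\omega$ and identifying which one realises the global minimum. The derivation of the quartic and the selection of its positive root is then a lengthy but routine elimination, carried out once the endpoint $\omega=\pi/2$ has been isolated.
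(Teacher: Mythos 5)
First, a point of reference: the paper does not prove this proposition at all --- it is recalled from \cite{GV2013} (with a reformulation of $\delta_e$ that is regular at $e=1$) --- so your attempt can only be judged on its own merits. The first half of your argument is sound: the identity $d(P,{\cal A}')=\sqrt{r^2+q'^2-2q'\rho}$, the pointwise bound $d\ge|r-q'|$ with equality exactly on the reference plane, and attainment at coplanar configurations give the minimum formula correctly; likewise the monotonicity of $d^2(f)$ in $I$ at fixed $f$, hence the reduction of the maximum to $I=\pi/2$, is correct and clean.

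The second half has two genuine gaps. The first you concede yourself: everything hinges on showing that $\omega\mapsto\dmin(\pi/2,\omega)$ attains its maximum at an endpoint of $[0,\pi/2]$, and you give no argument for this beyond saying one would have to track the migration of the critical points of $f\mapsto d^2(f)$. That is the decisive step of the proposition, not a routine verification, so as it stands the statement is not proved. The second gap is a factual error in the endpoint value you do compute. Your claim that $\dmin(\pi/2,0)=\min\{|q-q'|,|Q-q'|\}$ in the straddling regime $q\le q'\le Q$ is false on part of that regime: in the folded half-plane $(\rho,z)$ with $\rho$ the distance from the $Oz$ axis, ${\cal A}'$ becomes the point $(q',0)$, and both apsidal arcs of ${\cal A}$ have radius of curvature $p=q(1+e)$ at the points $(q,0)$ and $(Q,0)$. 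If $Q-q'>p$, the point $(q',0)$ lies beyond the center of curvature $(Q-p,0)$ of the apocenter arc, so the apocenter is a strict local \emph{maximum} of the distance to ${\cal A}'$ along the orbit, and nearby points are strictly closer. Concretely, take $q'=1$, $q=0.1$, $e=0.85$, so $p=0.185$, $Q\approx 1.233$ and $q\le q'\le Q$: then $\min\{q'-q,Q-q'\}=Q-q'\approx 0.2333$, while the point of ${\cal A}$ with true anomaly $f=\pi-0.1$ (at $I=\pi/2$, $\omega=0$) lies at distance $\approx 0.2275$ from ${\cal A}'$, hence $\dmin(\pi/2,0)<\min\{q'-q,Q-q'\}$.

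Note that this counterexample does not contradict the proposition (there $\delta_e\approx 0.73$ dominates); it refutes your intermediate lemma. The consequence is that even if your endpoint reduction were established, it would only yield $\max_{(I,\omega)}\dmin=\max\{\dmin(\pi/2,0),\delta_e\}$, and you would still owe a proof that this coincides with the stated bound, i.e.\ that $\delta_e\ge\min\{q'-q,Q-q'\}$ whenever the curvature condition makes $\dmin(\pi/2,0)$ fall strictly below the nodal value $\min\{q'-q,Q-q'\}$. This reconciliation, together with the excluded-interior-maximum argument, is precisely where the substance of the proposition lies, and neither is supplied.
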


We compare the above result with the following.
\begin{proposition}
Set ${\cal D}_3'' = \{\omega: 0\leq \omega\leq
\pi/2\}$ and ${\cal D}_4 = \{(q,e): 0< q\leq \qmax, 0\leq e\leq 1\}$.
For each choice of $(q,e)\in{\cal D}_4$ we have
\begin{equation}
\left\{
\begin{array}{l}
\displaystyle \min_{\omega\in{\cal D}_3''} \deltanod = \max\{0,q'-Q,q-q'\}\\
\displaystyle \max_{\omega\in{\cal D}_3''} \deltanod =\max\{
\min\{q'-q, Q-q'\}, |q'-q(1+e)|\}.\\
\end{array}
\right.
\end{equation}
\label{prop:deltanod_qe}
\end{proposition}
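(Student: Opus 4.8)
The plan is to derive this proposition directly from Proposition~\ref{prop:deltanod_qe_gen} by specializing to the circular case $e'=0$. The first step is to record the simplifications forced by $e'=0$: one has $\xi'=e'\cos\omega'=0$ identically, the semi-latus rectum and apocenter distance of ${\cal A}'$ collapse to $p'=Q'=q'$, and $\rpiup=\rmenop=q'$ independently of $\omega'$. As a consequence $\dnod^+=q'-\rpiu$ and $\dnod^-=q'-\rmeno$ depend only on $(q,e,\omega)$, so $\deltanod$ is independent of $\omega'$. This is precisely what allows the extremization over ${\cal D}_3=\{(\omega,\omega')\}$ in Proposition~\ref{prop:deltanod_qe_gen} to be replaced by the extremization over the reduced domain ${\cal D}_3''=\{\omega\}$ used in the present statement.

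The second step is to substitute $p'=Q'=q'$ into the expressions of Proposition~\ref{prop:deltanod_qe_gen}. For the lower bound, $\lowinte=q'-Q$ and $\lowexte=q-Q'=q-q'$, whence
\[
\min_{\omega\in{\cal D}_3''}\deltanod=\max\{0,\lowinte,\lowexte\}=\max\{0,\,q'-Q,\,q-q'\},
\]
which is the first claimed identity. For the upper bound, $\upplinke=\min\{Q-q',Q'-q\}=\min\{Q-q',q'-q\}$ and $|p'-q(1+e)|=|q'-q(1+e)|$, whence
\[
\max_{\omega\in{\cal D}_3''}\deltanod=\max\{\upplinke,\,|p'-q(1+e)|\}=\max\bigl\{\min\{q'-q,\,Q-q'\},\,|q'-q(1+e)|\bigr\},
\]
which is the second claimed identity.

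Since the heavy lifting has already been carried out in Proposition~\ref{prop:deltanod_qe_gen}, I do not expect a genuine obstacle here; the only point requiring a word of justification is the legitimacy of dropping $\omega'$ from the domain, and this is immediate from the $\omega'$-independence of $\deltanod$ noted above. The boundary value $e=1$ is consistent with the footnoted conventions $\lowinte(q,1)=-\infty$ and $\upplinke(q,1)=Q'-q=q'-q$, so the formulas remain valid (with $Q=+\infty$) on the whole of ${\cal D}_4$. In short, the proof amounts to setting $e'=0$ and simplifying.
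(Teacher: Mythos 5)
Your proposal is correct and follows essentially the same route as the paper, which proves this proposition by setting $e'=0$ in relations \eqref{lbe}, \eqref{ube} of Proposition~\ref{prop:deltanod_qe_gen} (giving $p'=Q'=q'$ and hence the stated formulas). Your additional remark that $\deltanod$ becomes independent of $\omega'$ when $e'=0$, which legitimizes replacing the domain ${\cal D}_3$ by ${\cal D}_3''$, is a point the paper leaves implicit, but it is the same argument.
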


\begin{proof}
  The result follows immediately by setting $e'=0$ in relations \eqref{lbe},
  \eqref{ube}.
  
\end{proof}

\noindent In Figure~\ref{fig:bounds_qe_circ}, for $q'=1$, we show the
graphics of $\max_{(I,\omega)\in{\cal D}_3'}\dmin(q,e)$ on the left,
and of $\max_{\omega\in{\cal D}_3''}\deltanod(q,e)$ on the right.

\section{Applications to the discovery of near-Earth asteroids}
\label{s:appl}

In Figure~\ref{known_NEAs_qom} we show the distribution of the known
population of near-Earth asteroids with absolute magnitude $H>22$
(faint NEAs) in the plane $(q,\omega)$. We have used the database of
NEODyS ({\tt https://newton.spacedys.com/neodys}) to the date of July 23,
  2019. On the left of the curve $\lowintom=0$, computed for $q'=1$ au
  and $e'=0$ and prolonged by symmetry, we can have only internal
  nodes (see also Figure~\ref{fig:qom}), therefore asteroids with
  those values of $(q,\omega)$ are difficult to be observed because
  they are always on the side of the Sun. This explains why this
  region appears depopulated. On the other hand, we can see that
  several asteroids are concentrated in a neighborhood of the curve
  $\beta$, defined by equation \eqref{beta_curve} and prolonged by
  symmetry, which represents the set of pairs $(q,\omega)$ where the
  value of $\deltanod$ can not be too large, whatever the value of
  $e$. In \cite{GV2013} the concentration of faint NEAs along the
  curve $\gamma$ defined by equation \eqref{gamma_curve} had already
  been noticed and explained by the same geometrical argument
  employing the orbit distance $\dmin$ instead of $\deltanod$. Here we
  observe that the curve $\beta$ is close to $\gamma$ (see
  Figure~\ref{fig:gamma_beta}), but it has a much simpler expression,
  therefore it can be easily used for a quick computation.

\begin{figure}[t!]
  \centerline{\includegraphics[width=10cm]{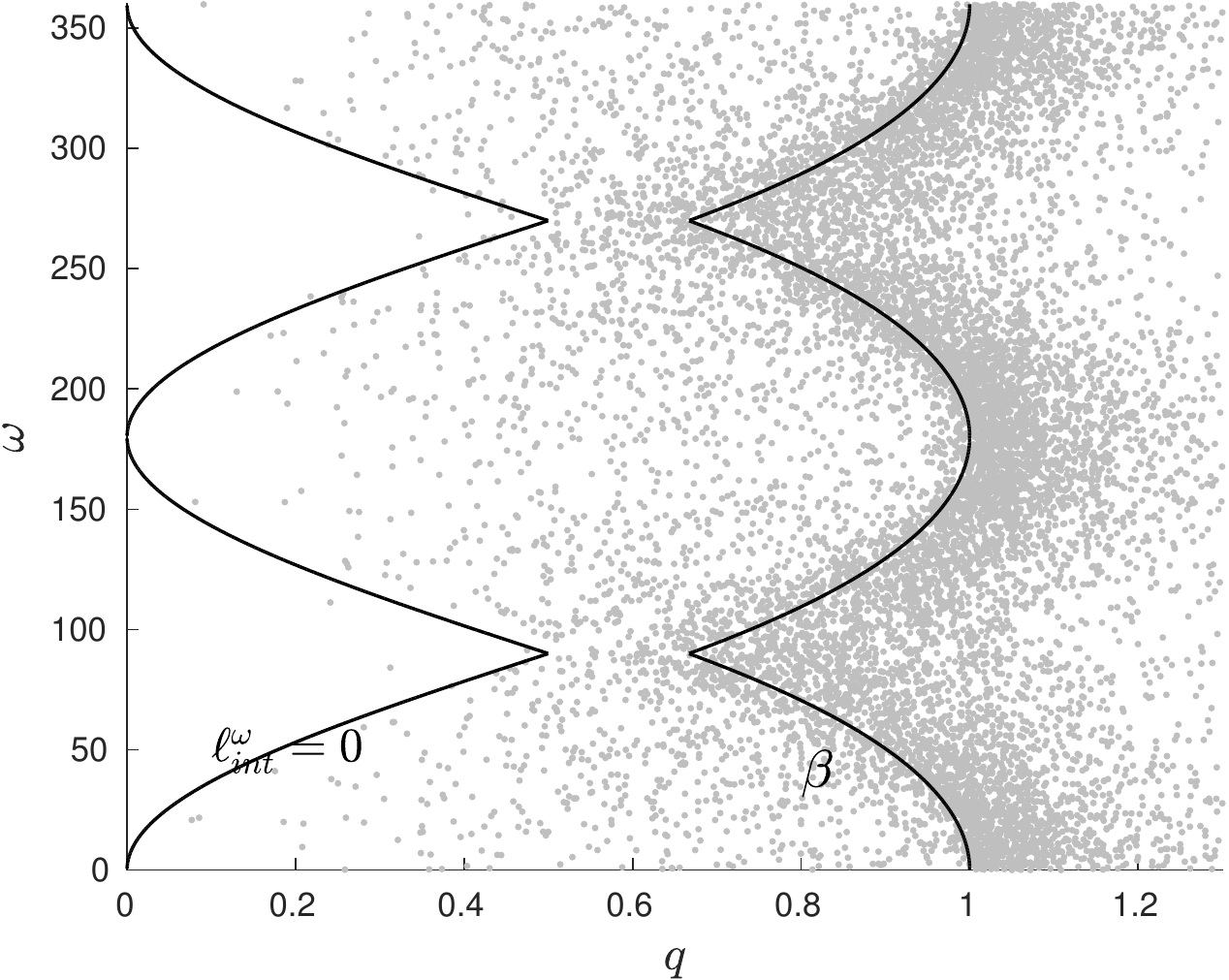}}
  \caption{Orbital distribution of the known NEAs in the plane
    $(q,\omega)$. The gray dots correspond to faint asteroids
    ($H>22$).}
  \label{known_NEAs_qom}
\end{figure}

%

\section{Comparison with the orbit distance $\dmin$}
\label{s:dmin}

In this section we discuss the analogies and the differences between
the upper bounds found for $\deltanod$ in
Propositions~\ref{prop:deltanod_qom_gen}, \ref{prop:deltanod_qe_gen},
\ref{prop:deltanod_qomp_gen} and similar upper bounds for $\dmin$,
computed by numerical methods.

In the mutual reference frame the coordinates of a point of ${\cal
  A}$ and another of ${\cal A}'$ are given by
\begin{equation}
\left\{
\begin{array}{l}
x = r\cos(f+\omega)\cr
y = r\sin(f+\omega)\cos I\cr
z = r\sin(f+\omega)\sin I\cr
\end{array}
\right.
\hskip 1cm
\left\{
\begin{array}{l}
x' = r'\cos(f'+\omega')\cr
y' = r'\sin(f'+\omega')\cr
z' = 0
\end{array}
\right.
\end{equation}
where
\[
r = \frac{q(1+e)}{1+e\cos f},
\hskip 1cm
r' = \frac{q'(1+e')}{1+e'\cos f'},
\]
with $f,f'\in[0,2\pi)$.
  Therefore, the squared distance between these two points is
\begin{eqnarray*}
d^2 &=& (x-x')^2 + (y-y')^2 + z^2 =\\
&=& r^2 + r'^2 - 2rr'
[\cos(f+\omega)\cos(f'+\omega') + \sin(f+\omega)\sin(f'+\omega')\cos I] =\\
&=& \frac{q^2(1+e)^2}{(1+e\cos f)^2} + \frac{q'^2(1+e')^2}{(1+e'\cos f')^2}
-\\
&-& 2\frac{q(1+e)}{1+e\cos f}\frac{q'(1+e')}{1+e'\cos f'}[\cos(f+\omega)\cos(f'+\omega') + \sin(f+\omega)\sin(f'+\omega')\cos I]
\end{eqnarray*}

From the expression above we see that we get all the possible values of the distance even if we restrict to the following ranges for $I,\omega,\omega'$:
\[
0\leq I\leq \pi/2, \qquad 0\leq \omega\leq\pi/2, \qquad
0\leq \omega' < 2\pi.
\]
or
\[
0\leq I\leq\pi/2, \qquad 0\leq \omega < 2\pi, \qquad 0\leq \omega'\leq \pi/2.
\]

\begin{figure}[t!]
  \centerline{\epsfig{figure=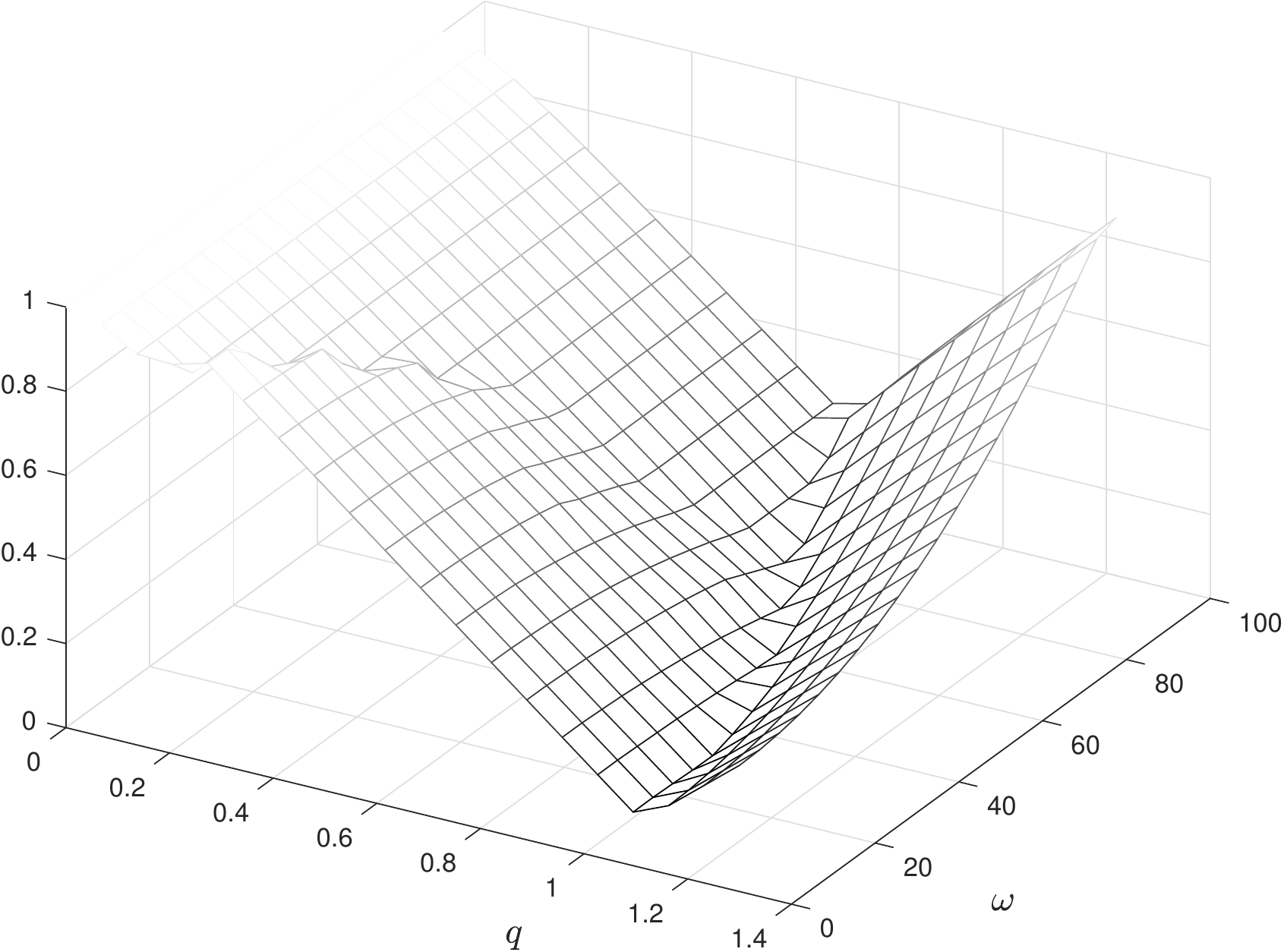,width=7cm}
    \hskip 0.2cm
    \epsfig{figure=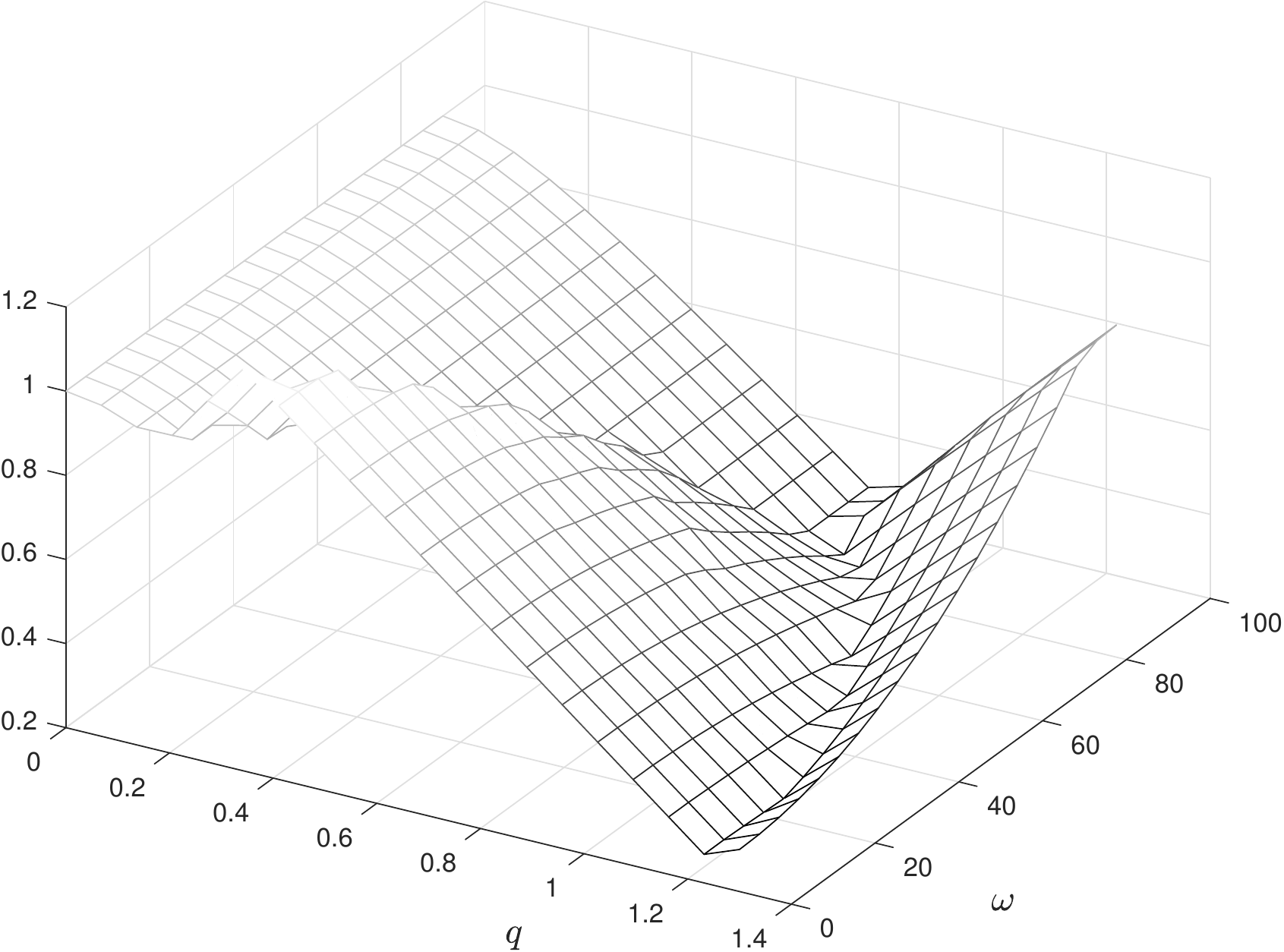,width=7cm}}
    \centerline{\epsfig{figure=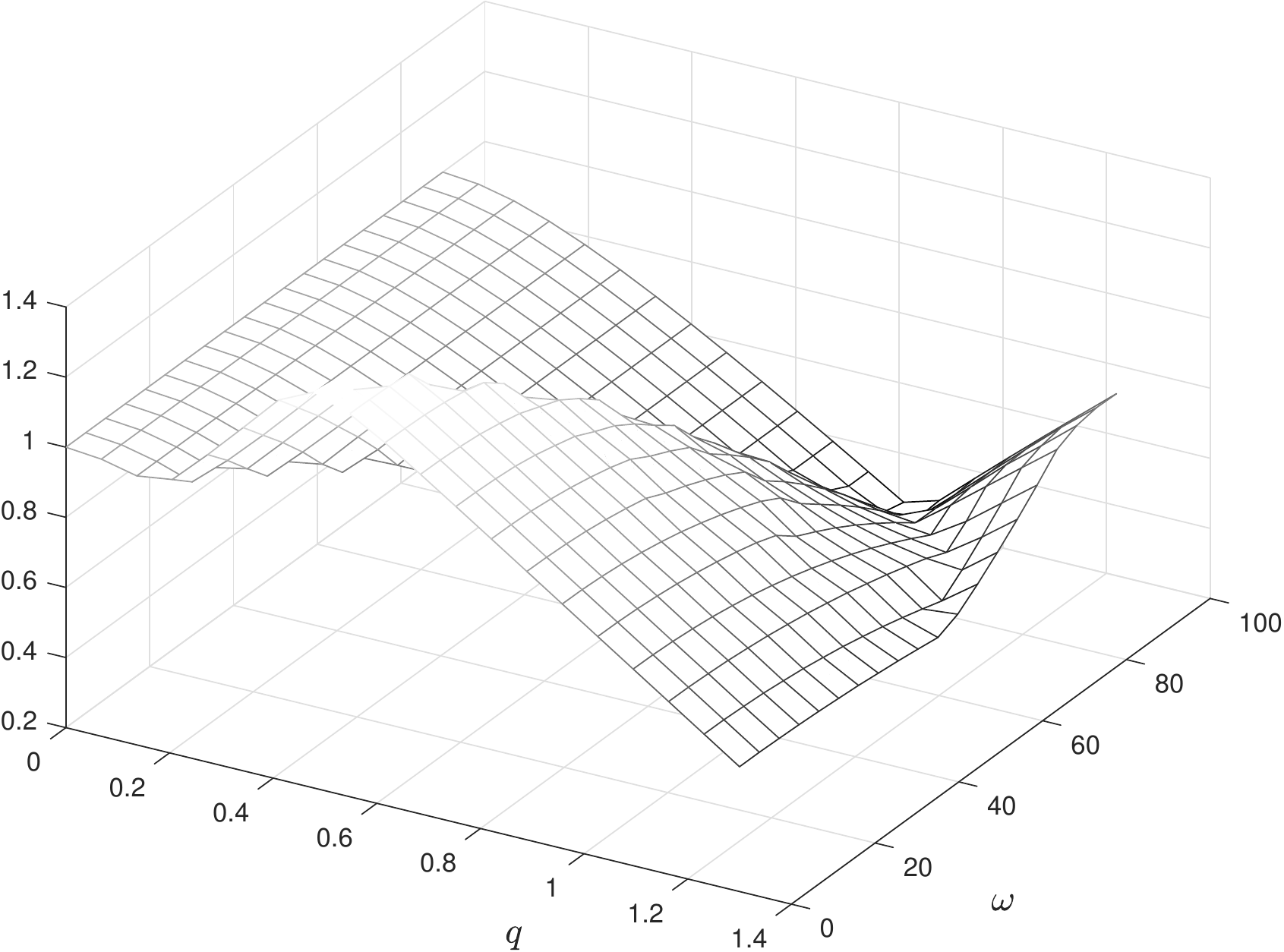,width=7cm}
    \hskip 0.2cm
    \epsfig{figure=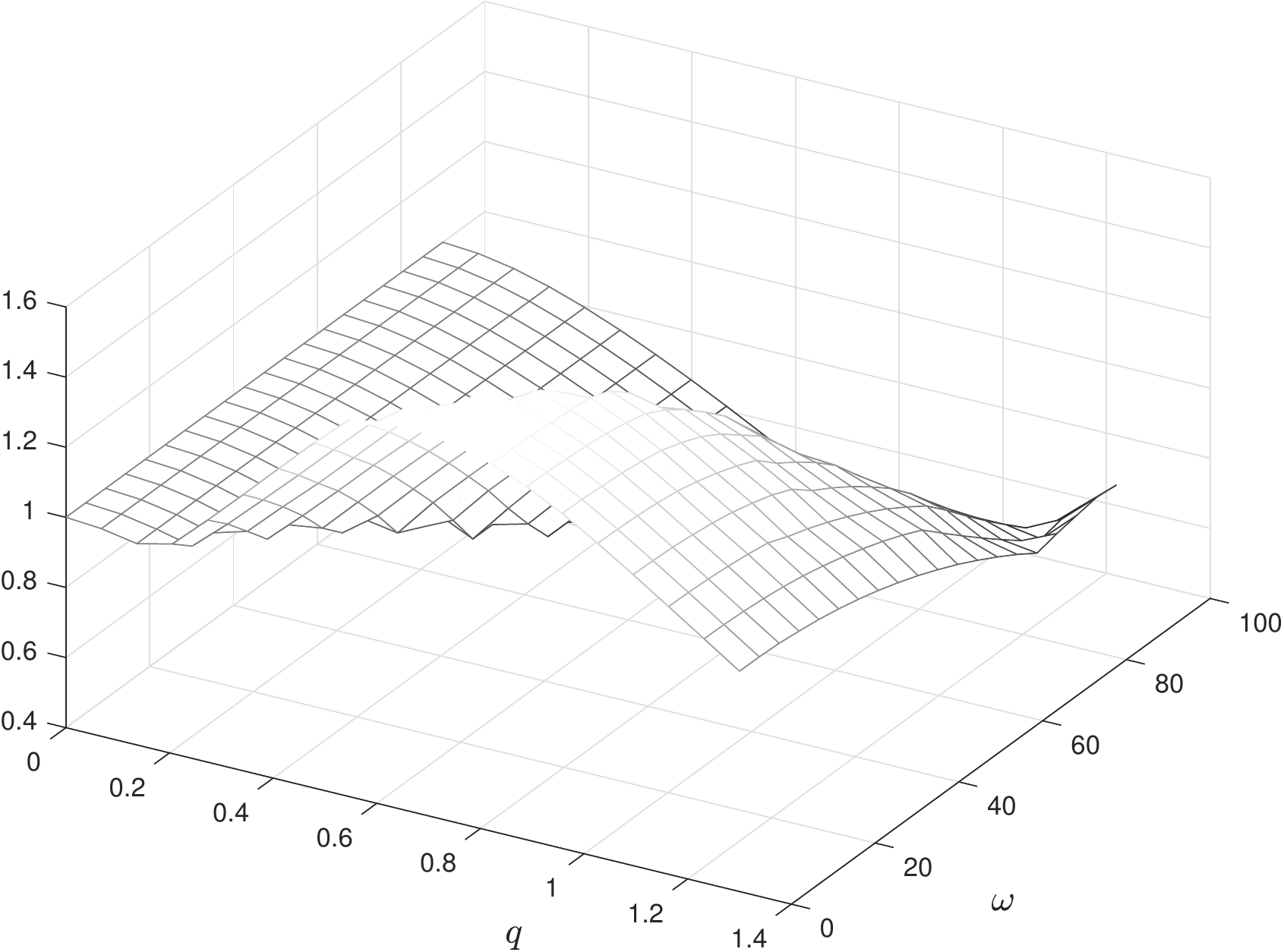,width=7cm}}
\caption{Graphic of $\max_{\widetilde{\cal D}_1}
  \dmin(q,\omega)$ for $e'=0.1$ (top left), $e'=0.2$ (top right),
  $e'=0.3$ (bottom left), $e'=0.4$ (bottom right). Here we set $q'=1$.}
\label{maxdminqom}
\end{figure}

In Figures~\ref{maxdminqom}, \ref{maxdminqe} we show, for different
values of $e'>0$, the graphics of $\max_{\widetilde{\cal D}_1}\dmin
(q,\omega)$ and $\max_{\widetilde{\cal D}_3}\dmin (q,e)$, where
\[
\begin{split}
\widetilde{\cal D}_1 &= \{(e,I,\omega'): 0\leq e\leq 1, \ 0\leq I\leq \pi/2, \ 0\leq \omega'\leq
2\pi \},\cr
\widetilde{\cal D}_3 &= \{(I, \omega,\omega'): 0\leq I\leq \pi/2, \ 0< \omega\leq \pi/2, \ 0\leq \omega'\leq
2\pi \}.\cr
\end{split}
\]
In both these cases we see that the graphics are similar to those in
Figures~\ref{maxdeltanodqom}, \ref{maxdeltanodqe}. In particular, the
{\em bulges} appearing in the graphics of $\max_{{\cal D}_1}\deltanod$
when $e'>0$ appear also in the graphics of
$\max_{\widetilde{\cal D}_1}\dmin$.

\begin{figure}[t!]
  \centerline{\epsfig{figure=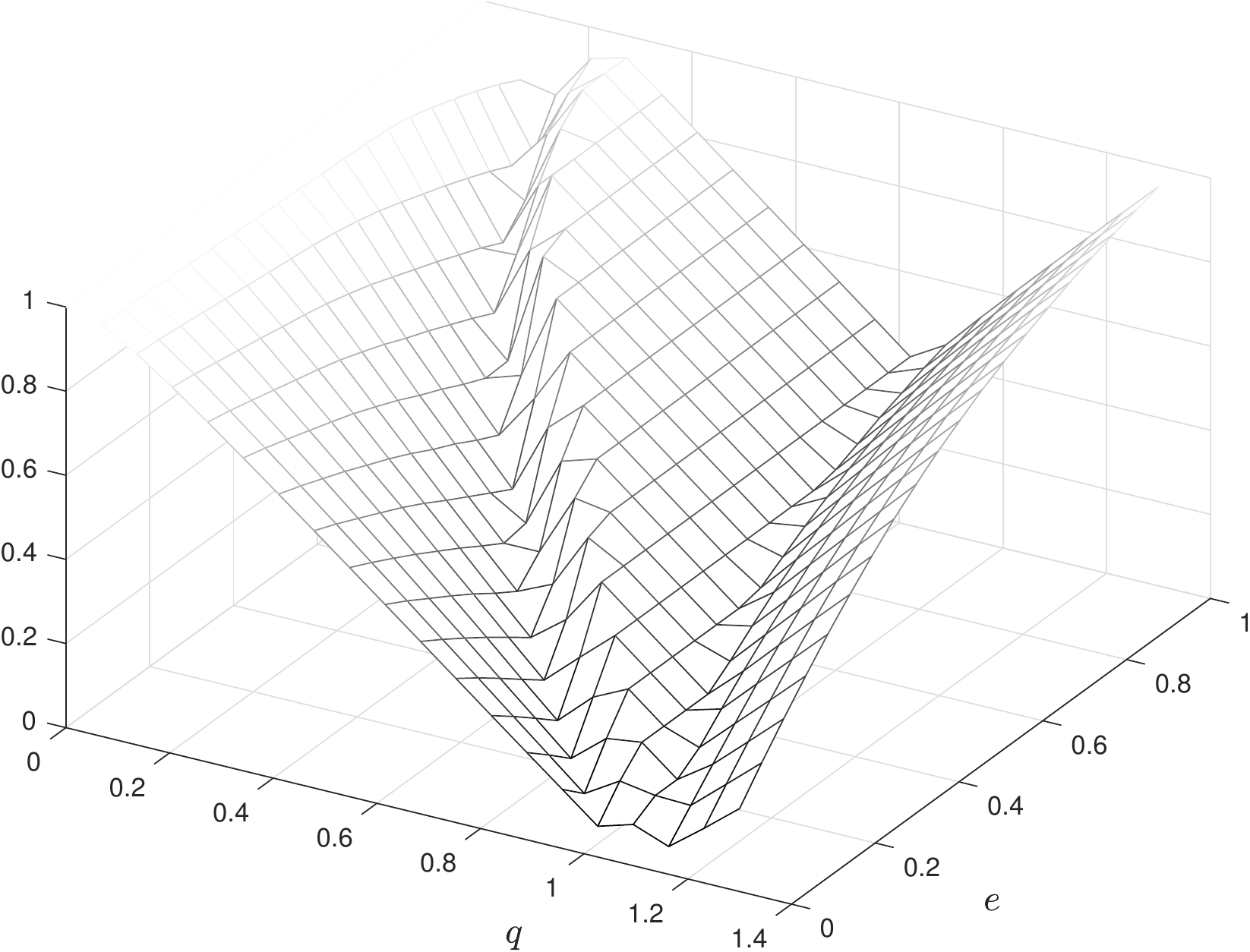,width=7cm}
    \hskip 0.2cm
    \epsfig{figure=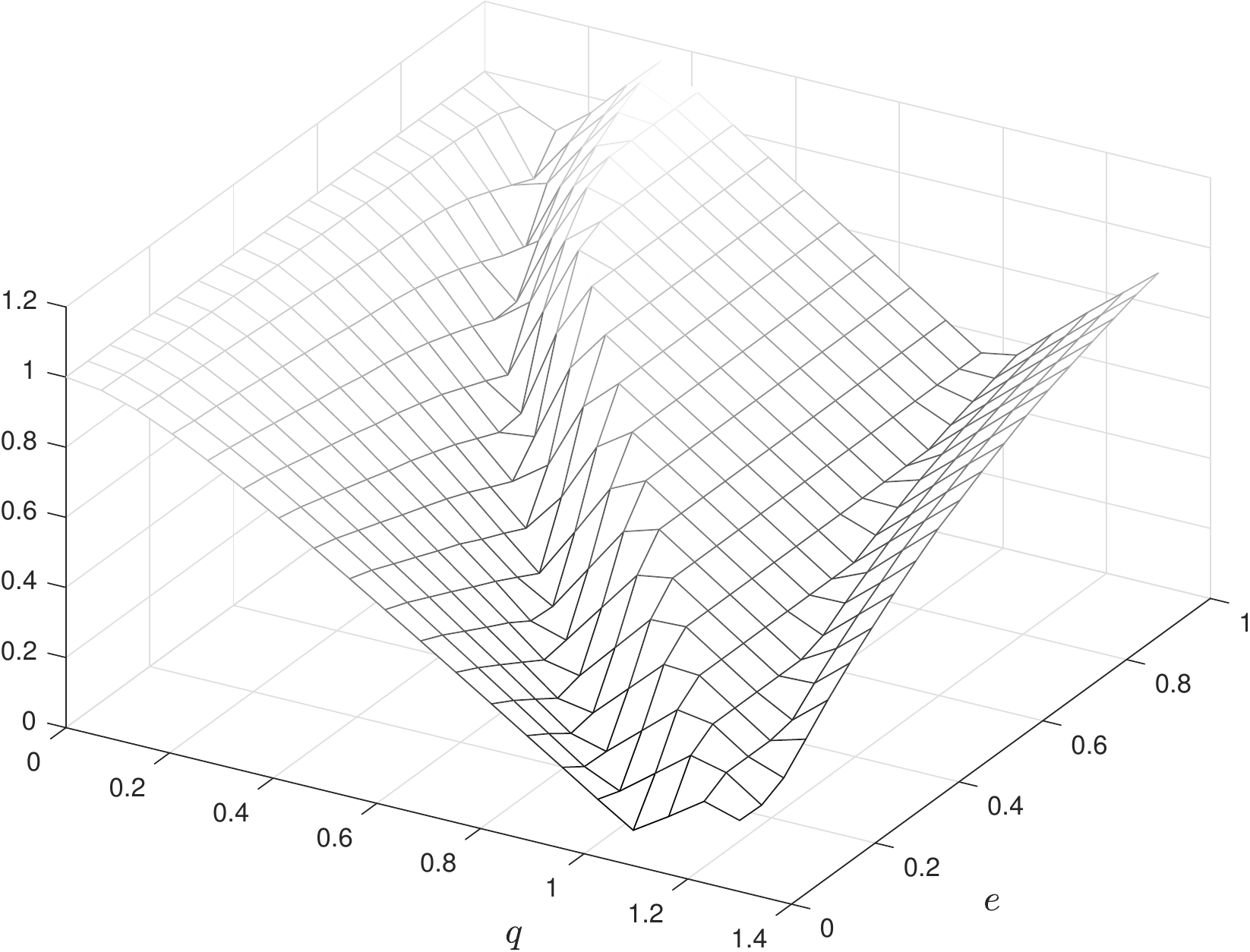,width=7cm}}
    \centerline{\epsfig{figure=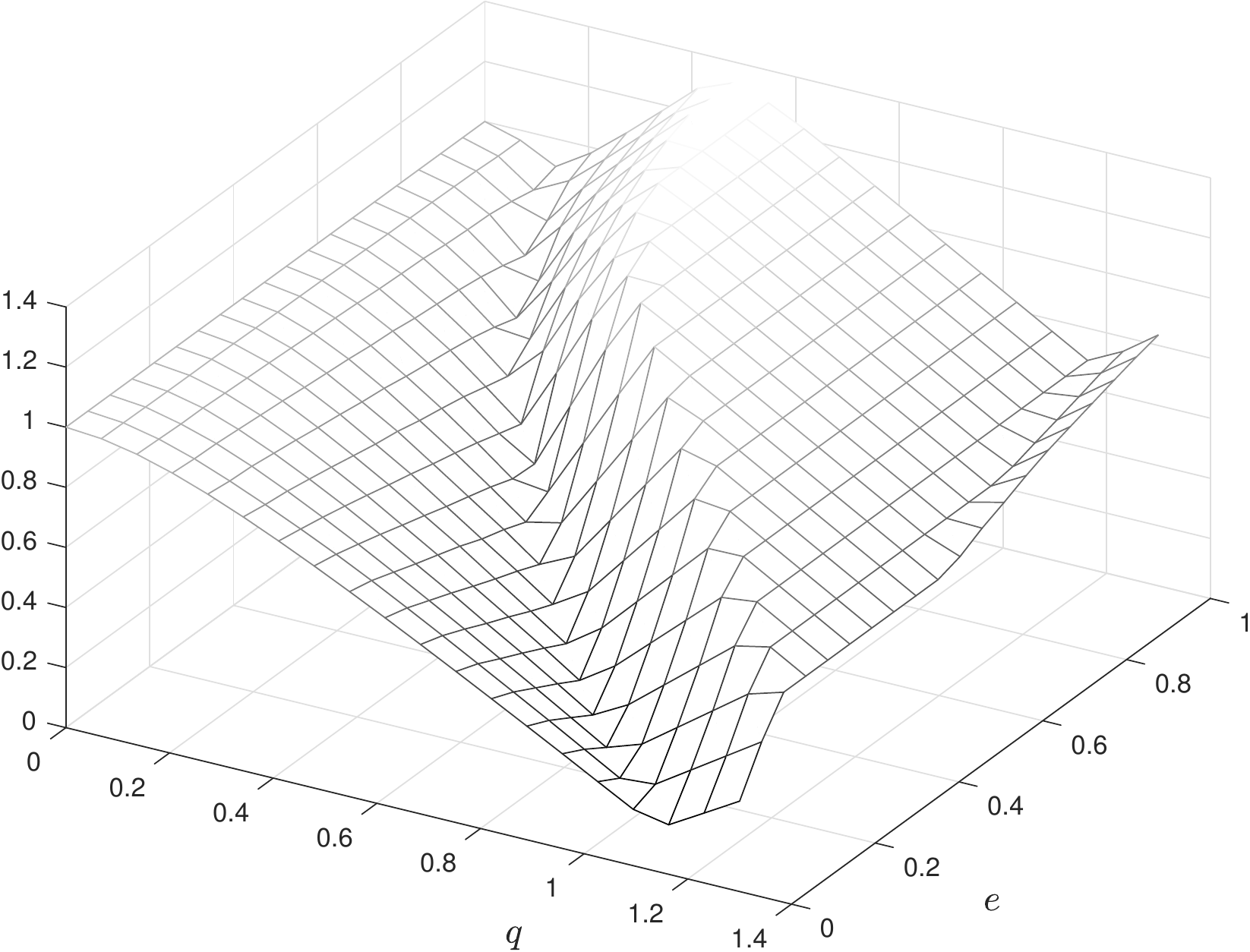,width=7cm}
    \hskip 0.2cm
        \epsfig{figure=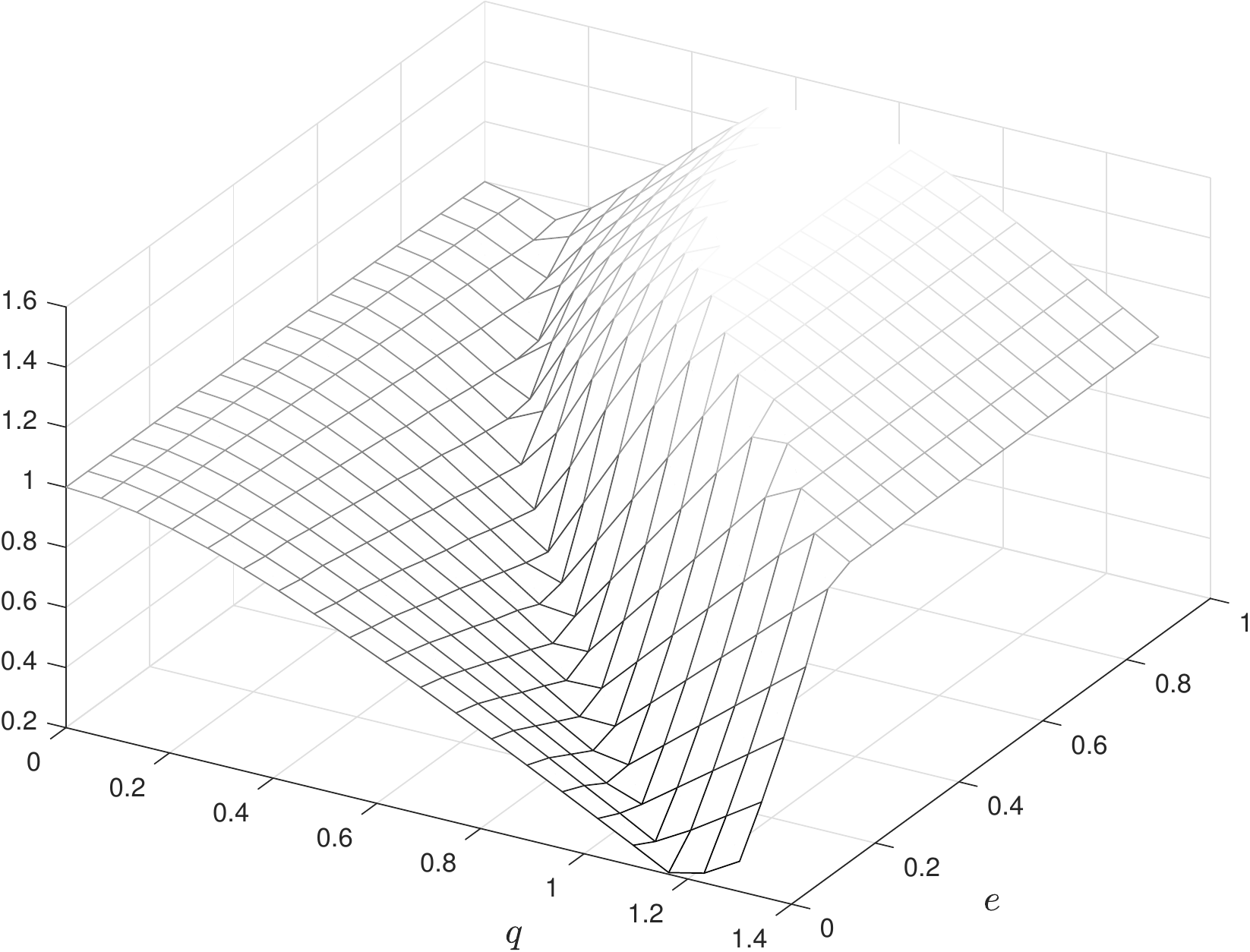,width=7cm}}
\caption{Graphic of $\max_{\widetilde{\cal D}_3} \dmin(q,e)$ for
  $e'=0.1$ (top left), $e'=0.2$ (top right), $e'=0.3$ (bottom left),
  $e'=0.4$ (bottom right). Here we set $q'=1$.}
\label{maxdminqe}
\end{figure}

\begin{figure}[t!]
  \centerline{\epsfig{figure=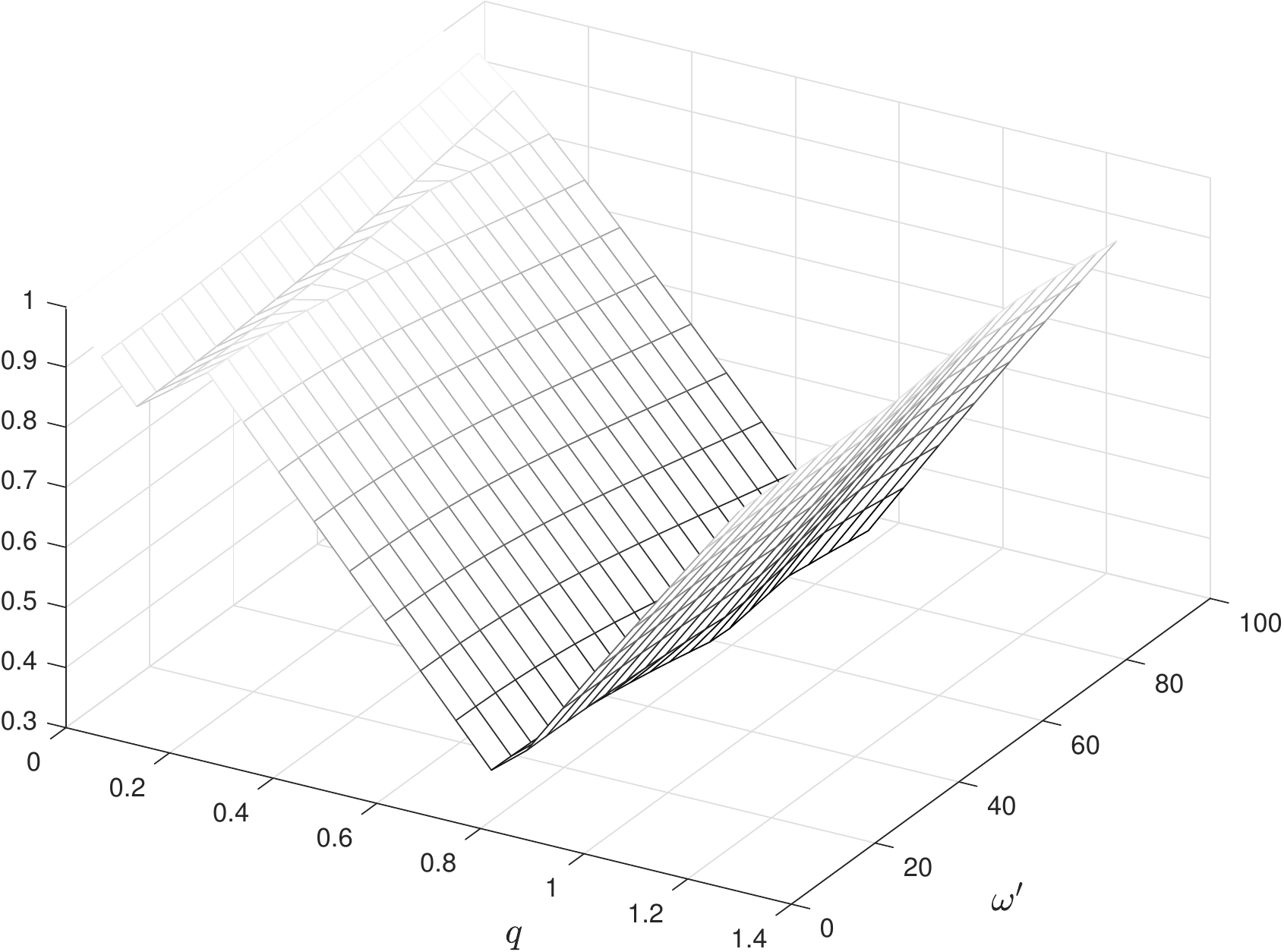,width=7cm}
    \hskip 0.2cm
    \epsfig{figure=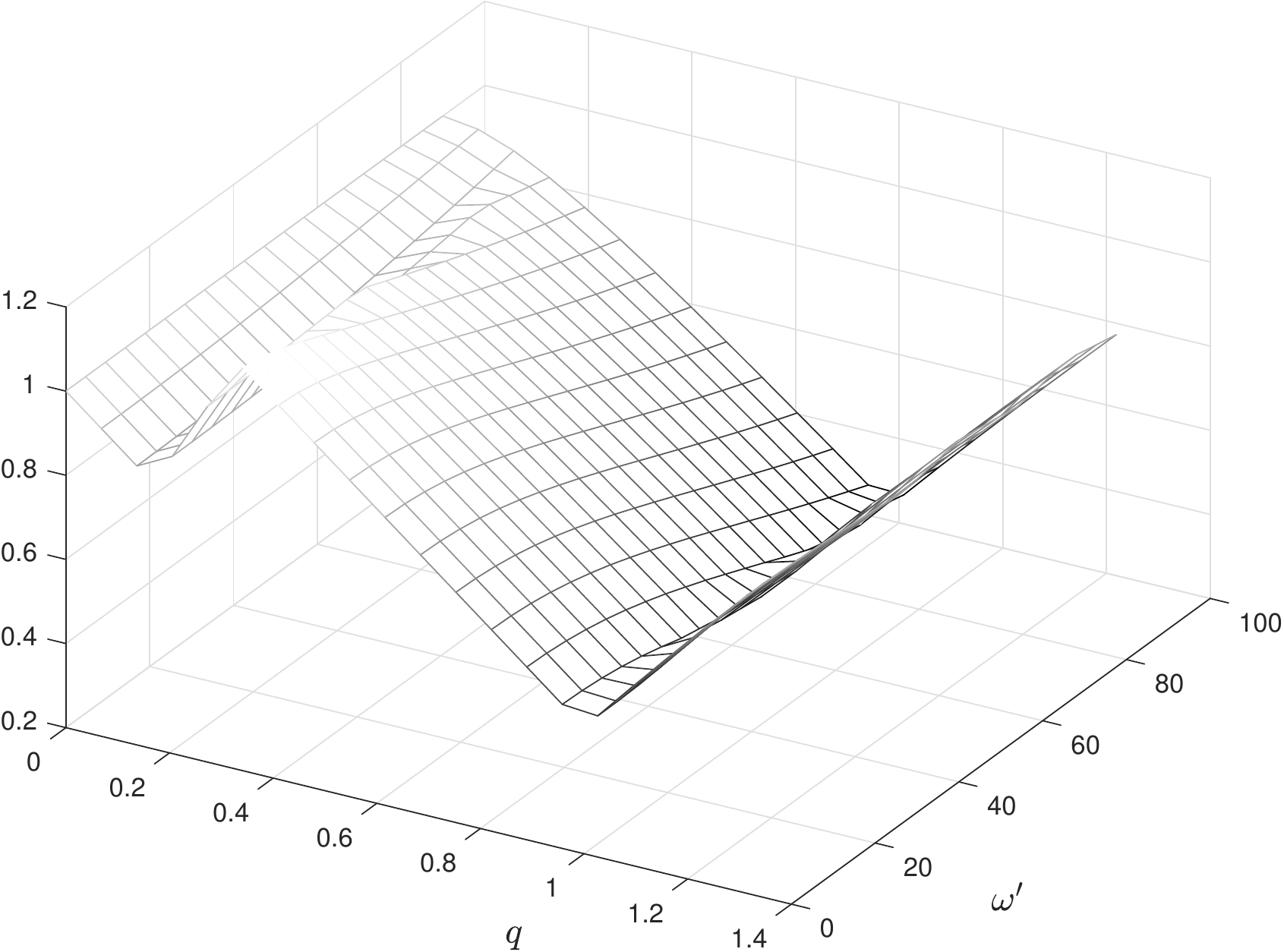,width=7cm}}
    \centerline{\epsfig{figure=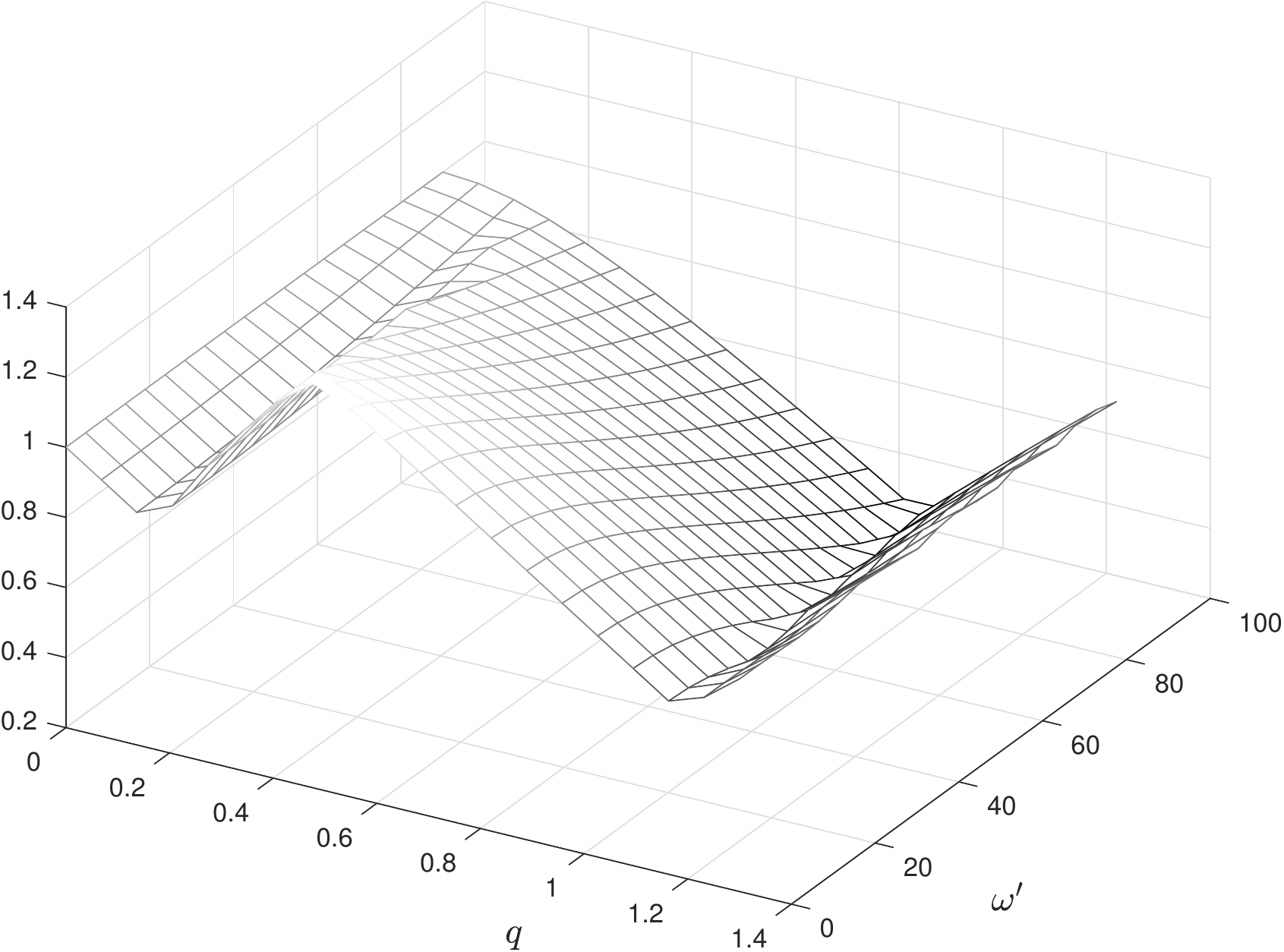,width=7cm}
    \hskip 0.2cm
        \epsfig{figure=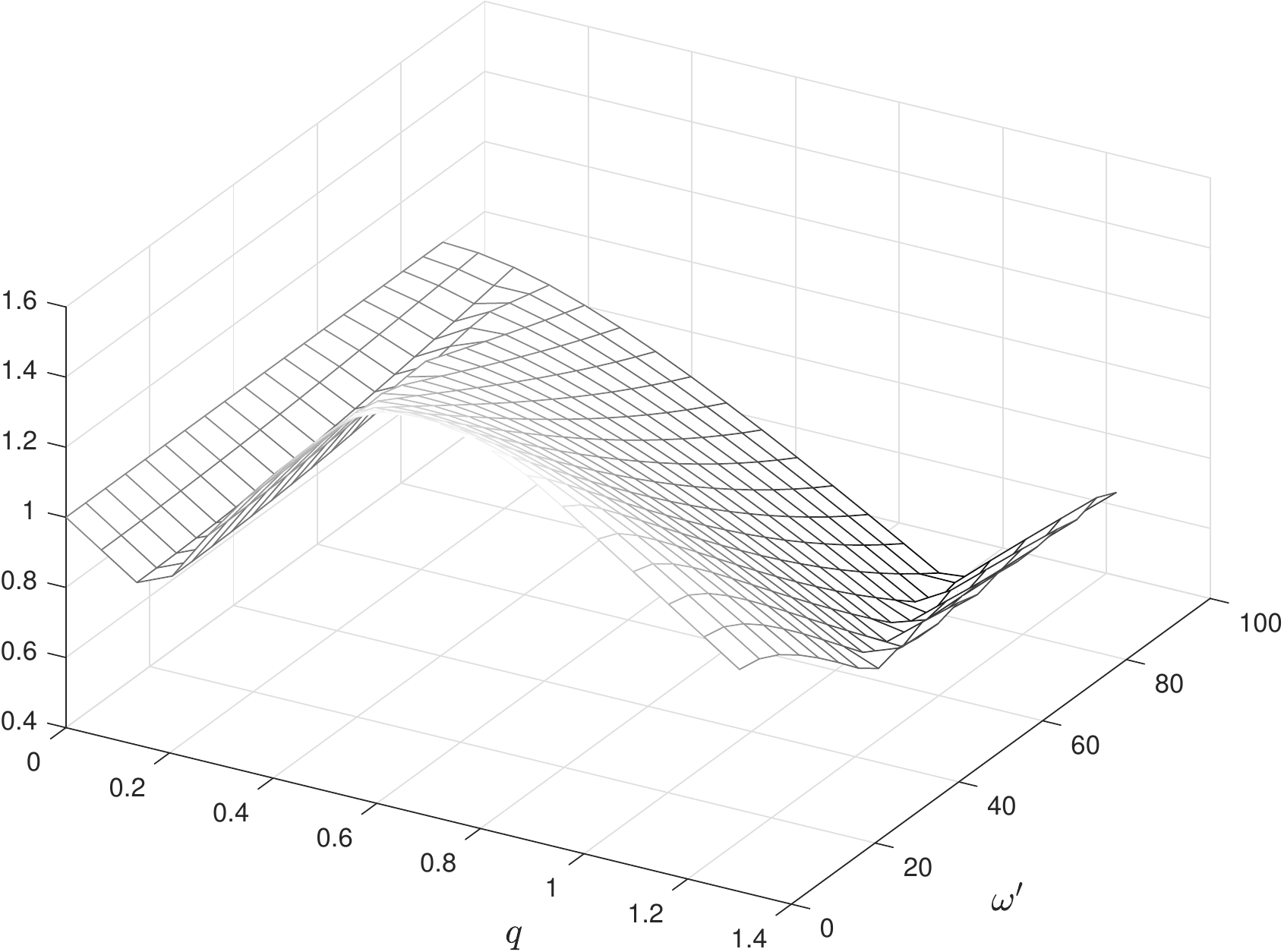,width=7cm}}
\caption{Graphic of $\max_{\widetilde{\cal D}_5} \dmin(q,\omega')$ for
  $e'=0.1$ (top left), $e'=0.2$ (top right), $e'=0.3$ (bottom left),
  $e'=0.4$ (bottom right). Here we set $q'=1$.}
\label{maxdminqomp}
\end{figure}

In Figure~\ref{maxdminqomp} we show, for the same values of $e'$, the
graphics of $\max_{\widetilde{\cal D}_5}\dmin (q,\omega')$, where
\[
\widetilde{\cal D}_5 = \{(e,I,\omega): 0\leq e\leq 1, \ 0\leq I\leq \pi/2,
\ 0\leq \omega\leq 2\pi \}.
\]
In this case the optimal bounds for $\deltanod$ displayed in
Figure~\ref{maxdeltanodqomp} has not the same features appearing here:
in fact the bulges appearing in the graphics of
$\max_{\widetilde{\cal D}_5}\dmin$ are not reproduced in the graphic of
$\max_{{\cal D}_5}\deltanod$.

\section{Conclusions}
\label{s:final}

We have introduced optimal bounds for the nodal distance $\deltanod$
between a given bounded Keplerian trajectory ${\cal A}'$ and another
Keplerian trajectory ${\cal A}$, with a focus in common with the
former, whose mutual orbital elements may vary.  Besides being
interesting in itself, this work aims at understanding how similar
bounds can be stated and proved for the orbit distance $\dmin$.
The conclusion is that the behavior of the upper bounds for $\deltanod$
given in Propositions~\ref{prop:deltanod_qom_gen},
\ref{prop:deltanod_qe_gen}, as functions of $(q,\omega)$ and $(q,e)$,
is similar to that for $\dmin$, obtained here by numerical computations.
On the other hand, the upper bound for $\deltanod$ given in
Proposition~\ref{prop:deltanod_qomp_gen}, as function of
$(q,\omega')$, is qualitatively different from that for $\dmin$.  As a
by-product of these results we have also found the equations of the
curves dividing the planes with coordinates $(q,\omega)$, $(q,e)$,
$(q,\omega')$ into regions where different linking configurations are
allowed.

\section{Acknowledgements}
Part of this work has been done during a visiting period of
G.F. Gronchi at the {\em Institut de m\'ecanique c\'eleste et de
  calcul des \'eph\'em\'erides (IMCCE), Observatoire de Paris}.
The same author also acknowledges the project MIUR-PRIN 20178CJA2B titled
"New frontiers of Celestial Mechanics: theory and applications".


\begin{thebibliography}{}

\bibitem{CTL2014} Casanova, D., Tardioli, C. Lema\^{\i}tre, A.:
  `Space debris collision avoidance using a three-filter sequence',
  MNRAS {\bf 442}, 3235-3242 (2014)
  

\bibitem{FCMGC} Farnocchia, D., Chesley, S. R., Milani, A., Gronchi,
  G. F., Chodas, P. W.: Orbits, Long-Term Prodictions, and Impact
  Monitoring', in Asteroids IV, Univ. of Arizona, Tucson (2016)

\bibitem{G2002} Gronchi, G.F.: `On the stationary points of the
  squared distance between two ellipses with a common focus', SIAM
  Journ. Sci. Comp. {\bf 24/1}, 61-80 (2002)

\bibitem{G2005} Gronchi, G.F.: `An algebraic method to compute the
  critical points of the distance function between two Keplerian
  orbits', Cel. Mech. Dyn. Ast. {\bf 93/1}, 297-332 (2005)
  
\bibitem{GV2013} Gronchi, G.F., Valsecchi, G.B.: `On the possible
  values of the orbit distance between a near-Earth asteroid and the
  Earth', MNRAS {\bf 429/3}, 2687-2699 (2013)

\bibitem{hoots1984} Hoots, F. R., Crawford, L. L., Roehrich, R. L.: `An
  Analytic Method to Determine Future Close Approaches Between
  Satellites', Cel. Mech. {\bf 33/2}, 143-158 (1984)

\bibitem{KV1999a} Kholshevnikov, K.V., Vassiliev, N.N.: `On linking
  coefficient of two Keplerian orbits', Cel. Mech. Dyn. Ast. {\bf
    75/1}, 67-74 (1999)

\bibitem{KV1999b} Kholshevnikov, K.V., Vassiliev, N.N.: `On the
  Distance Function Between Two Keplerian Elliptic Orbits',
  Cel. Mech. Dyn. Ast. {\bf 75/2}, 75-83 (1999)
  
\bibitem{MB2019} Mikryukov, D.V., Baluev, R.V.: `A lower bound of the
  distance between two elliptic orbits', CMDA {\bf 131/6}, First Online (2019)

\bibitem{milani2006} Milani, A.: `Asteroid impact monitoring',
  Serb. Astron. J. {\bf 172}, 1-11 (2006)

\bibitem{sitarski1968} Sitarski, G.: `Approaches of the Parabolic Comets
  to the Outer Planets', Acta Astron. {\bf 18/2}, 171-195 (1968)

\end{thebibliography}
\end{document}